\documentclass[aps,pra,showpacs,twoside,twocolumn,10pt]{revtex4-2}
\usepackage[colorlinks=true, citecolor=red, urlcolor=blue ]{hyperref}
\usepackage{times,epsfig,amssymb,amsfonts,amsmath,bm,subfigure,mathtools,amsthm,braket,soul,enumitem,color,physics,graphics,graphicx}
\usepackage[normalem]{ulem}
\usepackage{comment}
\usepackage{epstopdf}
\newtheorem{theorem}{Theorem}
\newtheorem{corollary}{Corollary}
\newtheorem{definition}{Definition}
\newtheorem{proposition}{Proposition}

\newtheorem{lemma}{Lemma}

\newtheorem{observation}{Observation}

\usepackage{xcolor}

\newcommand{\ayan}[1]{\textcolor{black}{#1}}
\usepackage{optidef}

\newcommand{\adi}[1]{\textcolor{orange}{#1}}

\begin{document}


\title{
Hierarchies of Gaussian multimode entanglement from thermodynamic quantifiers
}

\author{Mrinmoy Samanta\(^1\), Sudipta Mondal\(^1\), Ayan Patra\(^1\), Saptarshi Roy\(^2\), Aditi Sen(De)\(^1\)}

\affiliation{\(^1\)Harish-Chandra Research Institute, A CI of Homi Bhabha National Institute,  Chhatnag Road, Jhunsi, Allahabad - 211019, India\\
\(^2\) NextQuantum Innovation Research Center, Department of Physics and Astronomy, Seoul National University, Seoul 08826, South Korea}

\begin{abstract}

We develop a thermodynamic characterization of multimode entanglement in pure continuous-variable systems by quantifying the gap between globally and locally extractable work (ergotropy). For arbitrary pure multimode Gaussian states, we prove that the $2$-local ergotropic gap is a faithful entanglement monotone across any bipartition and constitutes a functionally independent upper bound to the R\'enyi-2 entanglement entropy. We further introduce the $k$-ergotropic score, the minimum $k$-local ergotropic gap, and show that it faithfully quantifies multimode entanglement across \(k\) partitions. For pure three-mode  Gaussian states, we derive its closed-form relation 
with the geometric measure for genuine multimode entanglement $(k=2)$, and total Gaussian multimode entanglement $(k=3)$.  For systems with more than three modes, the $k$-ergotropic score becomes a functionally independent measure of multimode entanglement to the standard geometric measures. Our results reveal a direct operational hierarchy linking Gaussian multimode entanglement to work extraction under locality constraints, and provide a computable and experimentally accessible thermodynamic framework for characterizing quantum correlations.


\end{abstract}

\maketitle

\section{Introduction}
\label{sec:intro}

The earliest experimental realizations of quantum technologies, particularly in the field of quantum communication, were achieved using photons \cite{Bennett2014, Ekert1991, Bennett1992, Bouwmeester1997, pan1998}, thereby catalyzing the rapid development of quantum optical devices over the past three decades \cite{Gisin2002, Kimble2008, pan2012rmp, Flamini2019, Zhong2020, Pirandola2020}. Quantum optical systems provide access to several controllable degrees of freedom \cite{Allen1992, Kwiat1995, Mair2001, Braunstein2005, OBrien2009, Weedbrook2012rmp}, including polarization, angular momentum, and continuous-variable (CV) degrees of freedom characterized by  canonically conjugate quadratures 
with continuous spectra. Among these, CV quantum systems have emerged as a prominent and foundational paradigm in quantum information science~\cite{BraunsteinPati2003,CerfLeuchsPolzik2007, Weedbrook2012rmp}, 
supporting a wide range of protocols from quantum communication \cite{cvtele, cvdc, Zhang2024, comrmp, coma, comb} to error correction \cite{cvec1, cvec2, cvec3, cvec4, cvec5, cvec6} and measurement-based computation \cite{cvmbqc1, cvmbqc2,cvmbqc4,cvec7, Lee2025}. 
In particular, the continuous spectra of bosonic quadratures enable the deterministic production and manipulation of entanglement, high-efficiency homodyne measurements \cite{Weedbrook2012rmp}, and compatibility with optical and hybrid 
quantum infrastructures \cite{Lee2025}. As a result, CV platforms play an important role in scalable networked quantum information processing, where multimodal correlations across numerous modes serve as the foundation for quantum communication links and distributed computing 
resources. Reliable characterization and quantification of such multimode correlations is therefore a fundamental requirement for improving CV quantum networks. Within the broad class of CV systems, Gaussian states occupy a position of special importance due to their mathematical elegance and experimental accessibility~\cite{Adesso2014}. As CV architectures scale toward networked implementations, multimode Gaussian correlations become indispensable resources, and their precise characterization is 
is not only of practical relevance but also of foundational significance.


On the other hand, the resolution of Maxwell’s demon paradox \cite{landauer1961, bennett1982, zurek1986, toyabe2010, berut2012} established a deep bridge between physics and information theory by revealing a profound connection between information and thermodynamics. This conceptual link rapidly permeated 
many areas of physics, but found particularly strong resonance in quantum theory, especially in the study of quantum correlations \cite{qc1, qc2}. Quantum thermodynamics \cite{horodecki2013limits, brandao2013resource, brandao2015secondlaws} has since emerged as an operational framework for investigating quantum correlations by linking informational quantities to work extraction, heat dissipation, and energetic costs \cite{Huber2015, Sapienza2019}. 
Within this, {\it ergotropy} \cite{Allahverdyan2004} is defined as the maximum amount of work that can be extracted from a quantum state via unitary operations. Building on this concept, {\it the ergotropic gap}~\cite{egap1},
defined as the difference between globally extractable work and the sum of local contributions, attempts to capture the operational advantage enabled by correlations~\cite{Cao2025}. For finite-dimensional systems, the ergotropic gap has been shown to have connections with both bipartite \cite{egap2} and multipartite entanglement \cite{Puliyil22, Cao2025}. Recently \cite{Polo2025}, the ergotropic gap was shown to be a good measure of entanglement for Gaussian pure states of two modes.

\begin{figure}[ht]
\includegraphics[width=1.25\linewidth]{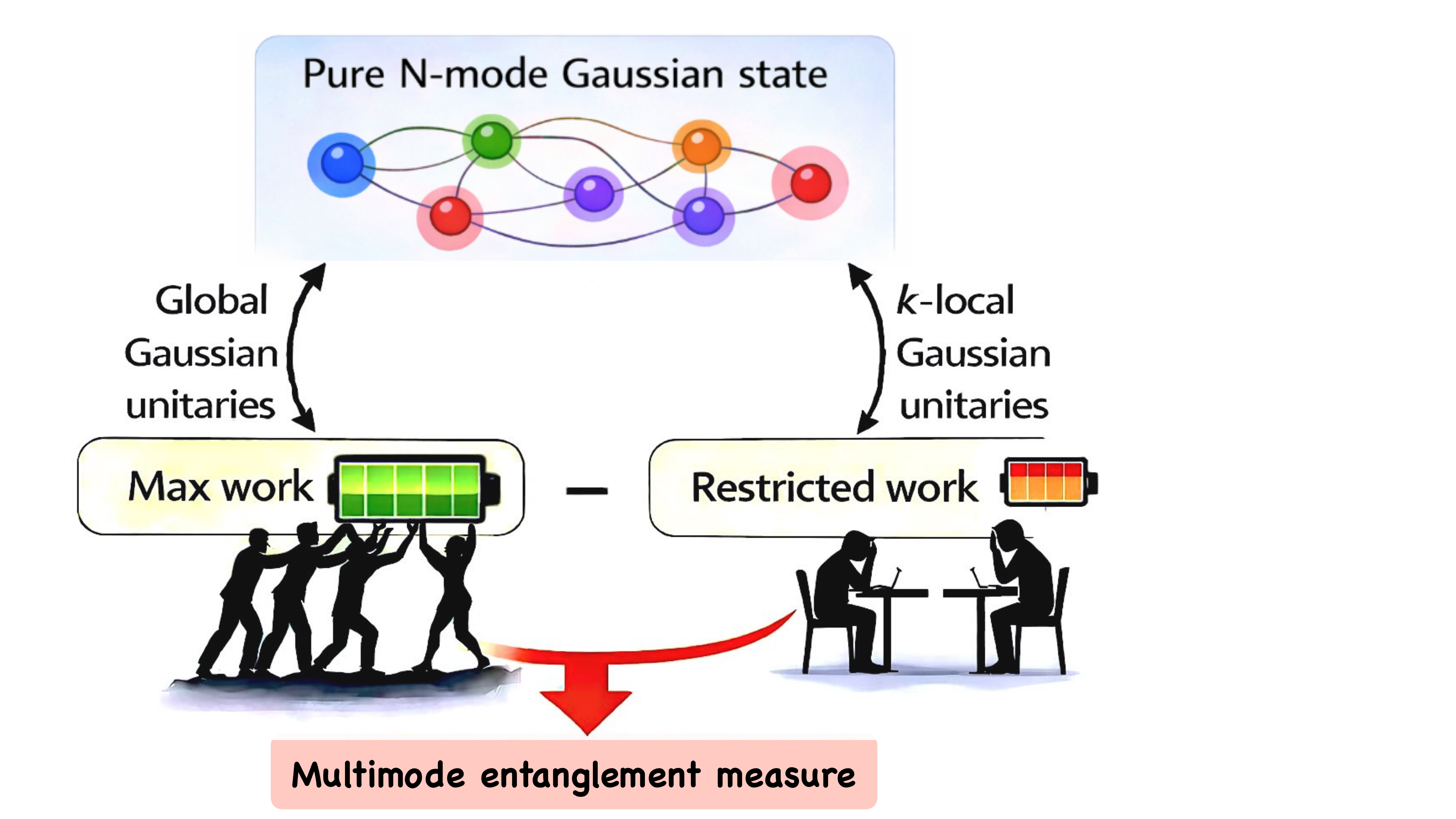}
\caption{
\textbf{Schematic of work extraction from an \(N\)-mode pure Gaussian state.} Global Gaussian unitaries yield the maximal work (global Gaussian ergotropy), while $k$-local Gaussian unitaries yield restricted work ($k$-local ergotropy).
Their difference—the $k$-local ergotropic gap minimized over all $k$ partitions (\(k\)-ergotropic score), quantifies hierarchies of multipartite entanglement of an \(N\)-mode pure Gaussian state.}
\label{fig:schematic} 
\end{figure}

In this paper, we introduce thermodynamic quantities derived from the ergotropic gap and establish them as effective measures for quantifying entanglement of pure multimode Gaussian states in arbitrary partitions (see Fig. \ref{fig:schematic}).  
Specifically, We exhibit that, for arbitrary pure $N$-mode Gaussian states, the $2$-local ergotropic gap provides a faithful quantifier of multimode entanglement across any bipartition of modes. Furthermore, we prove that it constitutes a functionally independent upper bound on the R\'enyi-$2$ entanglement entropy, thereby relating an operationally defined thermodynamic quantity to a standard entropic entanglement.
More importantly, we define the $k$-ergotropic score, which captures a hierarchy of thermodynamic indicators tailored to quantify $k$-partite entanglement in pure multimode Gaussian states. We demonstrate that this score faithfully detects 
entanglement across \(k\) partitions and analyze its behaviour in comparison with conventional distance-based measures of multipartite entanglement, with particular emphasis on both total and genuine multipartite contributions.
In the three-mode case, we obtain an exact closed-form relation between the genuine multimode entanglement and the $2$-ergotropic score, revealing a direct analytical equivalence in this minimal nontrivial setting.
Our results identify work extraction, specifically the difference (gap) between the extracted work (ergotropy) under global and local unitary cycles, as an operational probe of multimode Gaussian entanglement and furnish a computable thermodynamic characterization of quantum correlations.

This paper is organized as follows. In Sec. \ref{sec:framework}, we introduce the definitions of \(k\)-local ergotropic gap and \(k\)-ergotropic score which lead to the main results of this work.  Sec. \ref{sec:2-localegmeasure} establishes \(2\)-local ergotropic gap as a bipartite entanglement measure for pure multimode Gaussian states, valid for bipartitions with an arbitrary number of modes in each subsystem. In Sec. \ref{sec:kloc}, we prove that \(k\)-ergotropic score is an multimode inseparability measure in \(k\) partitions. Moreover, in both these sections, we relate these multimode entanglement measures with known entanglement quantifiers (see Subsecs. \ref{subsec:renyiwith2local} and \ref{subsec:con_ggm}). Finally, we conclude in Sec. \ref{sec:conclusion}.

\section{A framework for \(k\)-ergotropic score}
\label{sec:framework}

We now introduce the mathematical framework for defining ergotropy in multimode
Gaussian systems. 
In this setting, we distinguish different notions of ergotropy according to the class of allowed
operations. 
\begin{definition}[Global Gaussian ergotropy (GGE)]
   \label{def:g_ergo}
The Global Gaussian Ergotropy (GGE) of \(N\)-mode state, $\rho_{_{N}}$ is given by
 \begin{eqnarray}
     \mathcal{E}_G (\rho_{_{N}}) &=& \Tr(\rho_{_{N}}H) - \min_{U} \Tr(U\rho_{_{N}}U^\dagger H),
    \label{eq:global_eg}
 \end{eqnarray}
 where $U$ is a Gaussian unitary acting globally on all the \(N\)-modes.  
\end{definition}
\noindent Performing minimization over \(U\), the term becomes \(\underset{U}{\min} \Tr(U\rho_{_{N}}U^\dagger H)=\Tr(\rho^{^\text{g,p}}H)\), where \(\Tr(\rho^{\text{g,p}}H)=\frac{1}{4}\Tr(\Sigma^{^{\text{g,p}}}-2N)\) with the superscript ``g,p" representing the global passive state. Therefore, Eq.~(\ref{eq:global_eg}) reduces to 
  \begin{eqnarray}
     \mathcal{E}_G (\rho_{_{N}}) 
    &=&\frac{1}{4}(\Tr\Sigma - \Tr\Sigma^{^\text{g,p}}),
    \label{eq:global_eg_f}
 \end{eqnarray}
 where via the Williamson decomposition, \(\Sigma^{^\text{g,p}}=\mathrm{diag}( \mu_1,\mu_1,\mu_2,\mu_2,\ldots,\mu_N,\mu_N)\) and \(\{\mu_i\}_{i=1}^{^N}\) denotes the simplectic eigenvalues of \(\Sigma^{^\text{g,p}}\).

Let us split the \(N\)-mode Gaussian state, $\rho_{_N}$, among \(k\)-partitions (\(1\le k\le N-1\)) specified by $\{\mathcal{A}_1,\mathcal{A}_2,\ldots,\mathcal{A}_k\}$. 
Each partition $\mathcal{A}_i$ is charaterized by the collection of modes it contains. Furthermore, each partition must have atleast one-mode, and no mode can be assigned to multiple partitions. We formally list these properties as
\begin{eqnarray}
  (i)~  |\cup_{i=1}^k \mathcal{A}_i| &=&\overset{k}{\underset{i=1}{\sum}}m_i=N ,\nonumber\\
  \text{and}~(ii)~~~ \mathcal A_i \cap \mathcal{A}_j &=& \phi, ~\forall i \neq j \in[1,k],
  \label{eq:properties}
\end{eqnarray}
where $|\mathcal{A}_i|=m_i$ denotes the number of modes in the partition $\mathcal{A}_i$, and \(\phi\) denotes null set. 
Incidentally, the number of $k$-partitions of an $N$-mode state is simply given by the Sterling's number \cite{abramowitz1972stirling} of the second kind $S(N,k) = \frac{1}{k!}\sum_{j=0}^k (-1)^j \binom{k}{j} (k-j)^N.$


\begin{definition}[ $k$-local Gaussian ergotropy]
The $k$-local Gaussian ergotropy of \(N\)-modes state, $\rho_{_{N}}$ is given by
 \begin{eqnarray}
     \mathcal{E}_{k-{\tt loc}}^{\mathcal{A}_1|\mathcal{A}_2|\cdots|\mathcal{A}_k} (\rho_{_{N}}) &=& 
     \Tr(\rho_{_{N}}H)-\min_{\mathcal U_k =\otimes_{i=1}^k U_{\mathcal{A}_i}} \Tr(\mathcal U_{k}\rho_{_{N}}\mathcal U_{k}^\dagger H),\nonumber\\
     \label{eq:k_loc_ergo}
 \end{eqnarray}
 where \(U_{\mathcal{A}_i}\) is a unitary acting on the modes in the \(\mathcal{A}_i\)-th partition. 
\end{definition}
 Similar to GGE, after performing minimization over \(\mathcal{U}_k\), Eq.~(\ref{eq:k_loc_ergo}) reduces to
\begin{eqnarray}
    \mathcal{E}_{k-{\tt loc}}^{\mathcal{A}_1|\mathcal{A}_2|\cdots|\mathcal{A}_k} (\rho_{_{N}})= \frac{1}{4}(\Tr \Sigma -\Tr\Sigma^{^{k-{\tt loc},p}}),
    \label{eq:k_loc_erg_f}
\end{eqnarray}
where the Williamson decomposion, \(\Sigma^{^{k\text{-}{\tt loc},p}}\) takes the form as 
\begin{eqnarray}
\Sigma^{^{k\text{-}{\tt loc},p}} &=
\begin{pmatrix}
\Sigma_{\mathcal{A}_1} & \Gamma_{\mathcal{A}_1\mathcal{A}_2} & \cdots & \Gamma_{\mathcal{A}_1\mathcal{A}_k} \\
\Gamma_{\mathcal{A}_1\mathcal{A}_2}^{\mathsf T} & \Sigma_{\mathcal{A}_2} & \cdots & \Gamma_{\mathcal{A}_2\mathcal{A}_k} \\
\vdots & \vdots & \ddots & \vdots \\
\Gamma_{\mathcal{A}_1\mathcal{A}_k}^{\mathsf T} & \Gamma_{\mathcal{A}_2\mathcal{A}_k}^{\mathsf T} & \cdots & \Sigma_{\mathcal{A}_k}
\end{pmatrix},
\label{eq:gen_cova}
\end{eqnarray}
with \(\Sigma_{\mathcal{A}_i}=\mathrm{diag}(
\nu_1^{\mathcal{A}_i},\nu_1^{\mathcal{A}_i},\nu_2^{\mathcal{A}_i},\nu_2^{\mathcal{A}_i},\ldots,\nu_{m_i}^{\mathcal{A}_i},\nu_{m_i}^{\mathcal{A}_i})\) and \(\Gamma_{{\mathcal{A}_i}\mathcal{A}_j}\) a real matrix and $\Sigma^{^{k\text{-}{\tt loc},p}}$ denotes the $k$-local passive state.  We are now ready to introduce the central quantities of interest for our work.

\begin{definition}{\emph{($k$-Local ergotropic gap.)}}
   The \(k\)-local ergotropic gap of an $N$-mode Gaussian state $\rho_{_N}$ for a given $k$-partitions $\{\mathcal{A}_1,\mathcal{A}_2,\ldots,\mathcal{A}_k\}$ is defined as the difference of the global ergotropy and \(k\)-local ergotropy, i.e.,
    \begin{eqnarray}
    \delta_{k-\tt{loc}}^{\mathcal{A}_1|\mathcal{A}_2|\cdots|\mathcal{A}_k}(\rho_{_N})&=&\mathcal{E}_G (\rho_{_{N}})-\mathcal{E}^{\mathcal{A}_1|\mathcal{A}_2|\cdots|\mathcal{A}_k}_{k-\tt{loc}} (\rho_{_{N}}),\nonumber\\&=&\frac{1}{4}(\Tr \Sigma^{^{k{-{\tt loc},p}}}-\Tr\Sigma^{^\text{g,p}}), \nonumber \\
   &=& \frac{1}{2}
\big(\sum_{j=1}^{k} \sum_{i=1}^{m_j} \nu_i^{\mathcal{A}_j} - \sum_{k=1}^{N} \mu_k\big).
\label{eq:general_erg_gap_fixed}
\end{eqnarray}
\end{definition}



\ayan{Given that an $N$-mode Gaussian state admits many possible $k$-partitions, we define the $k$-ergotropic score based on the above definition.}

\begin{definition}{\emph{($k$-ergotropic score.)}}
    The $k$-ergotropic score is defined as the minimum value of the $k$-local ergotropic gap,
\begin{eqnarray}
\Delta^{k}(\rho_{_N})&=&\min_{\mathcal{A}_1|\mathcal{A}_2|\ldots|\mathcal{A}_k}\delta ^{\mathcal{A}_1|\mathcal{A}_2|\ldots|\mathcal{A}_k}_{k-\tt loc}(\rho_{_N}),
\label{eq:min_ergo}
\end{eqnarray}
where the minimization is taken over all possible $k$-partitions of $\rho_{_N}$ satisfying the properties in Eq. \eqref{eq:properties}.
\end{definition}

A multimode quantum state can exhibit entanglement in qualitatively different ways. For instance, one may analyze the bipartite entanglement of a multimode state with respect to a chosen bipartition, which, in the case of pure Gaussian states, is fully characterized by the von Neumann entropy of the corresponding reduced states. More generally, a multimode pure state is called $k$-separable if it is separable across at least $(k-1)$ bipartitions, whereas it is said to exhibit $k$-inseparable multipartite entanglement if it fails to be $k$-separable. Notably, $k$-inseparability implies $(k+1)$-inseparability (see Appendix~\ref{subsec:entmeasure} for a detailed discussion of the multipartite entanglement structure). In the following, we aim to quantitatively characterize the $k$-inseparability of multimode pure Gaussian states from a thermodynamic perspective, demonstrating that the $k$-ergotropic score serves as a quantitative measure of the $k$-inseparable multimode entanglement content of such states. 


\section{Measuring bipartite entanglement with \(2\)-local ergotropic gap}
\label{sec:2-localegmeasure}

We are now ready to focus on the $2$-local ergotropic gap of an $N$-mode pure Gaussian state partitioned into two parts (subsystems), $\mathcal{A}_1$ and $\mathcal{A}_2$, containing $m_1$ and $m_2$ modes, respectively, with
$m_1+m_2=N$, and demonstrate that it is a valid measure of  entanglement of a \(N\)-mode pure Gaussian state across a given bipartition.

\begin{lemma}
   For a pure $(m_1+m_2)$-mode Gaussian state $\rho_{_N}$, bipartitioned into subsystems
$\mathcal{A}_1$ (with $m_1$ modes) and $\mathcal{A}_2$ (with $m_2$ modes), having $m_1 \le m_2$, the
$2$-local ergotropic gap is entirely determined by the symplectic spectrum of the
reduced state on the smaller subsystem $\mathcal{A}_1$ as
 \begin{eqnarray}
\delta^{\mathcal{A}_1|\mathcal{A}_2}_{2\text{-loc}}(\rho_{_N})
&=&
\sum_{i=1}^{m_1}\big(\nu_i^{\mathcal{A}_1}-1\big),
\label{eq:gap_pure}
\end{eqnarray}
where $\{\nu_i^{\mathcal{A}_1}\}_{i=1}^{m_1}$ denotes the set of symplectic eigenvalues of
the reduced covariance matrix of partition $\mathcal{A}_1$.
\label{le:lemma_1}
\end{lemma}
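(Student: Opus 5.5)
The proof starts from the closed form of the $2$-local ergotropic gap in Eq.~\eqref{eq:general_erg_gap_fixed}:
\begin{equation*}
\delta^{\mathcal{A}_1|\mathcal{A}_2}_{2\text{-loc}}=\frac{1}{2}\Big(\sum_{i=1}^{m_1}\nu_i^{\mathcal{A}_1}+\sum_{i=1}^{m_2}\nu_i^{\mathcal{A}_2}-\sum_{k=1}^{N}\mu_k\Big),
\end{equation*}
so the task is to evaluate three symplectic spectra for a pure state. Two are immediate. Because $\rho_{_N}$ is pure, its global covariance matrix $\Sigma$ has all symplectic eigenvalues $\mu_k=1$ (in the convention where the vacuum has $\Sigma=\mathbb{1}$, consistent with $\Tr(\rho H)=\frac14\Tr(\Sigma-2N)$). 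Hence $\sum_k\mu_k=N=m_1+m_2$. The $\nu_i^{\mathcal{A}_j}$ are the symplectic eigenvalues of the reduced covariance matrices $\Sigma_{\mathcal{A}_j}$, the diagonal blocks of $\Sigma$. This is because local Gaussian unitaries act as $S_1\oplus S_2$ and can only symplectically diagonalize each diagonal block separately, so the minimal local energy is attained at the local Williamson forms.

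The key step is to relate the spectrum of $\Sigma_{\mathcal{A}_2}$ to that of $\Sigma_{\mathcal{A}_1}$. I would invoke the standard mode-wise (normal-mode) decomposition of pure bipartite Gaussian states, due to Botero--Reznik and Holevo--Werner. It states that local symplectic transformations $S_1\oplus S_2$ bring $\Sigma$ to a form where $m_1$ modes of $\mathcal{A}_1$ are paired with $m_1$ modes of $\mathcal{A}_2$ into two-mode squeezed vacua, and the remaining $m_2-m_1$ modes of $\mathcal{A}_2$ are in vacuum. As a consequence, the symplectic spectrum of $\Sigma_{\mathcal{A}_2}$ is $\{\nu_1^{\mathcal{A}_1},\ldots,\nu_{m_1}^{\mathcal{A}_1},1,\ldots,1\}$, with $m_2-m_1$ ones.

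If I want the proof self-contained, I would derive this from the purity condition $(\Omega\Sigma)^2=-\mathbb{1}$. Write $\Sigma$ in blocks $\Sigma_{\mathcal{A}_1},\Gamma,\Sigma_{\mathcal{A}_2}$ and expand the block equations. This gives $(\Omega_1\Sigma_{\mathcal{A}_1})^2+\Omega_1\Gamma\Omega_2\Gamma^{\mathsf T}=-\mathbb{1}$ and the analogous relation for $\mathcal{A}_2$. From these I would show that $\Omega_1\Gamma\Omega_2\Gamma^{\mathsf T}$ and $\Omega_2\Gamma^{\mathsf T}\Omega_1\Gamma$ have the same nonzero spectrum (the $AB$ versus $BA$ argument). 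It then follows that $(\Omega_j\Sigma_{\mathcal{A}_j})^2$ share their eigenvalues $-\nu^2$ apart from the extra $-1$'s on the larger side.

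Substituting these results gives
\begin{equation*}
\delta=\frac{1}{2}\Big(2\sum_{i=1}^{m_1}\nu_i^{\mathcal{A}_1}+(m_2-m_1)-(m_1+m_2)\Big)=\sum_{i=1}^{m_1}\big(\nu_i^{\mathcal{A}_1}-1\big),
\end{equation*}
which is Eq.~\eqref{eq:gap_pure}. The main obstacle is the spectral matching step. It is only a citation if the normal-mode decomposition is assumed; otherwise the $AB/BA$ spectral argument needs care about multiplicities and about the extra unit eigenvalues on the larger subsystem.
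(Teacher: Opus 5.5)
Your proposal is correct and takes the same route as the paper: set $\mu_k=1$ in Eq.~\eqref{eq:general_erg_gap_fixed}, use that the symplectic spectrum of $\Sigma_{\mathcal{A}_2}$ equals that of $\Sigma_{\mathcal{A}_1}$ padded with $m_2-m_1$ ones, and simplify. The paper only asserts this spectral matching, relying implicitly on the Botero--Reznik normal-mode decomposition it cites later. Your optional $AB$/$BA$ argument from $(\Omega\Sigma)^2=-\mathbb{I}$, with your caveat about multiplicities, is a valid way to make that step self-contained.
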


\ayan{An outline of the proof of Lemma~\ref{le:lemma_1} is as follows. Since \(\rho_{_N}\) is a pure state, $\mu_k=1$ for all $k$ in Eq.~(\ref{eq:general_erg_gap_fixed}). Moreover, among $m_2$ symplectic eigenvalues of the subsystem $\mathcal{A}_2$, $m_1$ coincides with those of the subsystem $\mathcal{A}_1$, while the remaining $(m_2-m_1)$ are equal to $1$. Details of the proof are discussed in Appendix.~\ref{app:coro_1}. This allows us to obtain one of the central results of our work.}

\begin{theorem}
For a pure \((m_1+m_2)\) modes Gaussian state, the \(2\)-local ergotropic gap, \(\delta^{\mathcal{A}_1|\mathcal{A}_2}_{2\text{-loc}}(\rho_{_N})\), constitutes an entanglement measure in the bipartition \(\mathcal{A}_1\) : \(\mathcal{A}_2\), \ayan{i.e., it vanishes for any product state across
the bipartition $\mathcal{A}_1:\mathcal{A}_2$ and is monotonic under Gaussian local operations and classical communication (GLOCC).}
\label{th:th_1}
\end{theorem}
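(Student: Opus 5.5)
We build everything on Lemma~\ref{le:lemma_1}, which writes the gap purely in terms of the local symplectic spectrum, $\delta^{\mathcal{A}_1|\mathcal{A}_2}_{2\text{-loc}}(\rho_{_N})=\sum_{i=1}^{m_1}(\nu_i^{\mathcal{A}_1}-1)$, where $\nu_i^{\mathcal{A}_1}\ge 1$ by the uncertainty principle. We check the two required properties separately.

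\emph{Vanishing and faithfulness.} Suppose $\rho_{_N}$ is a product state across $\mathcal{A}_1:\mathcal{A}_2$. Since $\rho_{_N}$ is pure, both marginals are then pure Gaussian states. Hence every $\nu_i^{\mathcal{A}_1}=1$ and the gap is zero. Conversely, if the gap vanishes, each term $\nu_i^{\mathcal{A}_1}-1\ge0$ must be zero, so $\Sigma_{\mathcal{A}_1}$ is the covariance matrix of a pure state. For a pure global state this forces the off-diagonal block $\Gamma_{\mathcal{A}_1\mathcal{A}_2}$ to vanish, i.e.\ the state is a product. This is the standard fact that a pure state with a pure marginal is a product. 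So the measure is faithful as well as vanishing on product states.

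\emph{Monotonicity under GLOCC.} We use the known normal form of pure bipartite Gaussian states (Holevo--Werner, Botero--Reznik, Giedke et al.). By local Gaussian unitaries on $\mathcal{A}_1$ and $\mathcal{A}_2$, any pure $(m_1+m_2)$-mode Gaussian state becomes a tensor product of $m_1$ two-mode squeezed vacua with $\cosh 2r_i=\nu_i^{\mathcal{A}_1}$, together with $m_2-m_1$ vacua. Under this map the gap becomes the additive function
\[
\delta=\sum_i f(\nu_i^{\mathcal{A}_1}),\qquad f(\nu)=\nu-1 .
\]

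We then invoke the characterization of pure-to-pure GLOCC convertibility, due to Giedke, Eisert, Cirac and Plenio. A transformation $\{\nu_i\}\to\{\nu_i'\}$ is possible iff, after ordering, the squeezing parameters satisfy $r_i'\le r_i$ componentwise, equivalently $\nu_i'\le\nu_i$ for each $i$. Because $f$ is increasing, componentwise domination immediately gives $\delta(\rho')\le\delta(\rho)$.

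For monotonicity on average over outcomes, we note the following. For pure Gaussian states, GLOCC outcomes with Gaussian measurements produce output covariance matrices that are independent of the measurement outcome; only the displacement changes. The post-measurement states therefore all share the same symplectic spectrum, and average monotonicity reduces to deterministic monotonicity. Finally, $\delta$ is invariant under local Gaussian unitaries because the local symplectic spectra are, which covers the trivial case.

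\emph{Main obstacle.} The key step is justifying the componentwise-ordering criterion for GLOCC transformations between pure multimode Gaussian states, and making sure it applies to arbitrary $m_1\le m_2$. We plan to cite that result rather than reprove it. If only a weaker criterion were available, we would instead prove monotonicity of $\sum_i\nu_i$ directly. For this we would use the fact that a local Gaussian measurement on $\mathcal{A}_2$ acts by a Schur-complement update of $\Sigma_{\mathcal{A}_1}$, which reduces $\Sigma_{\mathcal{A}_1}$ in the matrix order. We would then apply monotonicity of symplectic eigenvalues under the Loewner order, $\Sigma\ge\Sigma'\Rightarrow\nu_i(\Sigma)\ge\nu_i(\Sigma')$, which would close the argument.
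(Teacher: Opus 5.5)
Your proposal is correct and follows the same route as the paper: the Botero--Reznik normal form reduces the gap to $\delta=\sum_i(\cosh 2r_i-1)$, and the Giedke--Eisert--Cirac--Plenio GLOCC convertibility criterion then gives monotonicity. Faithfulness and outcome-averaged monotonicity, which you also treat, appear in the paper separately as Corollary~\ref{co:corollary2}. The one real difference is how the criterion is invoked. You state it as componentwise domination of the ordered squeezing vectors, so monotonicity of $\cosh 2r$ suffices. The paper phrases it as weak majorization $\vec r\succ\vec r\,'$ and therefore also needs convexity of $\cosh 2r$. Componentwise domination implies weak majorization, so both arguments yield the same inequality, and your form is the sharper statement of the cited result.
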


{\color{black}
\begin{proof}
The vanishing of $\delta^{\mathcal{A}_1|\mathcal{A}_2}_{2\text{-loc}}(\rho_{_N})$ (when $\rho_{_N} = \ket{\psi_{\mathcal{A}_1}}\otimes \ket{\psi_{\mathcal{A}_2}}$) can be directly concluded from the fact that all the symplectic eigenvalues of $\ket{\psi_{\mathcal{A}_1}}$ are unity. Further, Eq. \eqref{eq:gap_pure} renders the $2$-local ergotropic gap to be zero. The proof of $\delta^{\mathcal{A}_1|\mathcal{A}_2}_{2\text{-loc}}(\rho_{_N})$ being non-increasing under GLOCC operations is established by decomposing \cite{Botero03} the pure Gaussian state across the $\mathcal{A}_1|\mathcal{A}_2$ bipartition as a product state involving entangled
two-mode squeezed states and single-mode local states.  Then by applying the weak majorization condition for pure bipartite Gaussian states \cite{Giedke03}, we show that under GLOCC transformations, the sum of symplectic eigenvalues of the modes in partition $\mathcal{A}_1$ remains non-increasing. For the details of the proof, see Appendix.~\ref{app:theory_1_main}.
\end{proof}
}

On top of satisfying the minimal requirements (see Def. \ref{def:requirements} in Appendix \ref{subsec:entmeasure}) for being an entanglement measure, the \(2\)-local ergotropic gap additionally satisfies other ``good" to have properties.
\begin{corollary}
The \(2\)-local ergotropic gap, \(\delta^{\mathcal{A}_1|\mathcal{A}_2}_{2\text{-loc}}(\rho_{_N})\) satisfies the following properties:   1. faithfulness, 2. strong monotonicity, and 3. additivity.
    \label{co:corollary2}
\end{corollary}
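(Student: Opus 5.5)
All three properties will be read off from Lemma~\ref{le:lemma_1}, which expresses the gap as $\delta^{\mathcal{A}_1|\mathcal{A}_2}_{2\text{-loc}}(\rho_{_N})=\sum_{i=1}^{m_1}(\nu_i^{\mathcal{A}_1}-1)$. We combine this with the normal form of pure bipartite Gaussian states \cite{Botero03}. Under a local Gaussian unitary $U_{\mathcal{A}_1}\otimes U_{\mathcal{A}_2}$, the state becomes a product of $m_1$ two-mode squeezed vacua with squeezing parameters $r_i$, together with $m_2-m_1$ vacuum modes on $\mathcal{A}_2$. In this form $\nu_i^{\mathcal{A}_1}=\cosh 2r_i$, so $\delta=\sum_i(\cosh 2r_i-1)$.

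\emph{Faithfulness.} Since every symplectic eigenvalue satisfies $\nu_i^{\mathcal{A}_1}\ge 1$, each summand is nonnegative. Hence $\delta=0$ if and only if $\nu_i^{\mathcal{A}_1}=1$ for all $i$, i.e., the reduced state on $\mathcal{A}_1$ is pure. For a pure global state, a pure marginal is equivalent to the state being a product across $\mathcal{A}_1:\mathcal{A}_2$. Equivalently, $\delta=0$ forces all $r_i=0$, so the normal form is a product of vacua, which is a product state. Together with Theorem~\ref{th:th_1}, this gives faithfulness.

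\emph{Additivity.} Take $\rho=\rho^{(1)}\otimes\rho^{(2)}$, with the bipartition $\mathcal{A}_1\mathcal{A}_1'|\mathcal{A}_2\mathcal{A}_2'$. The reduced covariance matrix on $\mathcal{A}_1\mathcal{A}_1'$ is the direct sum $\Sigma_{\mathcal{A}_1}\oplus\Sigma_{\mathcal{A}_1'}$, so its symplectic spectrum is the union of the two spectra. The sum of $(\nu-1)$ over that union therefore splits into the two individual gaps. One caveat needs handling: the ``smaller subsystem'' may differ between the two factors. This is harmless, because the nonunit symplectic eigenvalues of the two sides of a pure state coincide. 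We can therefore rewrite Lemma~\ref{le:lemma_1} symmetrically as $\delta=\sum_{\nu_i^{\mathcal{A}_1}>1}(\nu_i^{\mathcal{A}_1}-1)$, computed on either side, and additivity follows directly.

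\emph{Strong monotonicity.} This is the main obstacle. Suppose a GLOCC protocol maps the pure state $\rho$ to an ensemble $\{p_j,\rho_j\}$ of pure Gaussian states. We must show $\sum_j p_j\,\delta(\rho_j)\le\delta(\rho)$. The first step uses the fact that for Gaussian measurements (general-dyne) on pure Gaussian states, the conditional output covariance matrix is independent of the outcome; only the first moments depend on it. Since $\delta$ depends only on the covariance matrix (the displacement is removed by the local passive unitaries), every $\delta(\rho_j)$ equals a common value $\delta(\rho')$. Strong monotonicity then reduces to the deterministic case $\delta(\rho')\le\delta(\rho)$, which is exactly what Theorem~\ref{th:th_1} proves via the weak majorization criterion of \cite{Giedke03}. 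That criterion says the vector $(\cosh 2r'_i)$ is weakly submajorized by $(\cosh 2r_i)$, and the sum of the entries is monotone under weak submajorization. The care needed is in justifying outcome-independence when classical communication adaptively applies local Gaussian unitaries. Those unitaries conjugate the covariance matrix by local symplectic matrices, so they leave the local symplectic spectra, and hence $\delta$, unchanged. This is precisely the argument we would spell out.
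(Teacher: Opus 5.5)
Your proposal is correct and follows essentially the paper's route. Faithfulness and additivity are read off from the symplectic spectrum in Lemma~\ref{le:lemma_1}; your symmetric rewriting $\delta=\sum_{\nu_i>1}(\nu_i-1)$ also covers the case where the smaller side differs between the two factors, which the paper tacitly assumes away. Strong monotonicity is obtained, as in the paper, by reducing each GLOCC branch to a deterministic GLOCC transformation and invoking Theorem~\ref{th:th_1}; your outcome-independence of the conditional covariance matrix supplies the justification the paper delegates to \cite{Fiur02}.

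One caution: if later measurement settings, and not just unitaries, are chosen adaptively, different branches need not share a common value of $\delta$. Your argument applied branch by branch still gives $\delta(\rho_j)\le\delta(\rho)$ for every $j$, which is all the averaging step needs.
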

The proof of Corollary.~\ref{co:corollary2} is discussed in Appendix.~\ref{app:theorem_1_additional}.

\subsection{Connection with standard measures of entanglement: R{\'e}nyi-$2$ vs \(2\)-local ergotropic gap}
\label{subsec:renyiwith2local}

Let us  now 
connect \(2\)-local ergotropic gap with other information theoretic measures of entanglement, namely 
the Rényi-$2$ entanglement entropy, which in the Gaussian case, satisfies \emph{strong subadditivity}, thereby having a spacial status \cite{strong}. It means, in the Gaussian setting, the central structures of quantum information theory admit a consistent reformulation in terms of the Rényi-$2$ entropy \cite{strong, Adesso2014}. 

\begin{theorem}[$2$-local ergotropic gap bounds Rényi-$2$ entanglement]
For a pure $(m_1+m_2)$-mode Gaussian state bipartitioned as 
$\mathcal{A}_1|\mathcal{A}_2$, with $m_1 \le m_2$, the $2$-local ergotropic gap provides, in general, a functionally independent upper bound to the Rényi-$2$ entanglement entropy,
\begin{equation}
S_2(\mathcal{A}_1{:}\mathcal{A}_2)
\;\le\;
\delta_{2\text{-}{\tt loc}}^{\mathcal{A}_1|\mathcal{A}_2}(\rho_{_N}),
\end{equation}
with equality if and only if the state is separable (product) across 
$\mathcal{A}_1:\mathcal{A}_2$ bipartition.
\label{th:renyi-2upperbound}
\end{theorem}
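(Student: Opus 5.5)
We reduce both sides to functions of the symplectic spectrum of the smaller subsystem and then compare them eigenvalue by eigenvalue. By Lemma~\ref{le:lemma_1}, the gap is $\delta^{\mathcal{A}_1|\mathcal{A}_2}_{2\text{-loc}}(\rho_{_N})=\sum_{i=1}^{m_1}(\nu_i^{\mathcal{A}_1}-1)$. For the R\'enyi-$2$ entanglement entropy of a pure Gaussian state we use the standard expression $S_2(\mathcal{A}_1{:}\mathcal{A}_2)=S_2(\rho_{\mathcal{A}_1})=\frac{1}{2}\ln\det\Sigma_{\mathcal{A}_1}$. With the convention in which the vacuum has unit symplectic eigenvalues, as in Eq.~\eqref{eq:global_eg_f}, and using $\det\Sigma_{\mathcal{A}_1}=\prod_i(\nu_i^{\mathcal{A}_1})^2$, this becomes
\begin{equation}
S_2(\mathcal{A}_1{:}\mathcal{A}_2)=\sum_{i=1}^{m_1}\ln \nu_i^{\mathcal{A}_1}.
\end{equation}
If a different logarithm base is used (e.g.\ $\log_2$), we will track the constant. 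In that case the inequality is checked with the same elementary argument, but the equality point must be rechecked.

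Once both sides are in this form, the inequality follows term by term from $\ln x\le x-1$ for $x\ge 1$. Each reduced symplectic eigenvalue satisfies $\nu_i^{\mathcal{A}_1}\ge1$ by the uncertainty principle. Since $f(x)=x-1-\ln x$ has $f(1)=0$ and $f'(x)=1-1/x\ge 0$, we get $f(x)\ge0$, with equality iff $x=1$. Summing over $i$ gives $S_2\le\delta_{2\text{-loc}}$.

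For the equality case, equality holds iff every $\nu_i^{\mathcal{A}_1}=1$, i.e.\ iff $\rho_{\mathcal{A}_1}$ is pure. For a global pure state, purity of the marginal is equivalent to the state being a product across $\mathcal{A}_1:\mathcal{A}_2$. This can also be seen through the mode-wise (Botero--Reznik) decomposition already used in Theorem~\ref{th:th_1}: all two-mode squeezing parameters vanish. This settles the ``if and only if'' clause.

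It remains to justify ``functionally independent''. The plan is to exhibit two states with equal $S_2$ but different $\delta_{2\text{-loc}}$, showing that the gap is not a function of $S_2$ alone. For $m_1\ge2$, fix $\sum_i\ln\nu_i=c$ and compare $\nu=(e^{c},1)$ with $\nu=(e^{c/2},e^{c/2})$. By strict convexity of the exponential, $e^{c}-1>2(e^{c/2}-1)$, so the gaps differ. Both spectra are realizable as products of two-mode squeezed vacua, again by the Botero--Reznik decomposition.

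The main obstacle is the case $m_1=1$. There both quantities are monotone functions of the single eigenvalue $\nu_1$, so they are functionally dependent and the claim of independence holds only ``in general'', i.e.\ for $m_1\ge2$. We will state this explicitly. The only other delicate point is fixing the normalization convention of $S_2$ so that the equality case lands exactly at $\nu=1$.
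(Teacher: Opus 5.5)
Your proof is correct and follows the paper's argument essentially step for step. You reduce both sides to the reduced symplectic spectrum via Lemma~\ref{le:lemma_1} and $S_2=\sum_i\ln\nu_i^{\mathcal{A}_1}$, apply $\ln x\le x-1$ termwise with equality iff all $\nu_i=1$, and rule out functional dependence for $m_1\ge2$ by comparing the spectra $(C,1)$ and $(\sqrt{C},\sqrt{C})$ at fixed product $C$. One correction to your aside: in base $2$ the inequality itself fails near $\nu=1$ (e.g.\ $\log_2 1.5\approx0.585>0.5$), so the theorem genuinely requires the natural logarithm, rather than merely a recheck of the equality point.
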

{\color{black}
\begin{proof}
The upper bound can be established by noticing that for pure states, the R{\'e}nyi-$2$ entanglement entropy equals the R{\'e}nyi-$2$ entropy
of the reduced state, $S_2(\mathcal{A}_1{:}\mathcal{A}_2)
= S_2(\rho_{\mathcal{A}_1})
= \frac{1}{2}\log \det \sigma_{\mathcal{A}_1}
= \sum_{j=1}^{m_1} \log \nu_j.$ 
Using $\log x \leq x-1$, and  Eq. \eqref{eq:gap_pure}, we arrive at $\delta_{2\text{-}{\tt loc}}^{\mathcal{A}_1|\mathcal{A}_2}(\rho_{_N})$.
The functional independence is proven by contradiction. Here we provide a concrete example of a family of states with the same R{\'e}nyi-$2$ entanglement but different $2$-local ergotropic gaps. This rules out the possibility of any functional dependence. A detailed proof is provided in Appendix. \ref{app:th2}. 
\end{proof}
Let us emphasize here the importance for demonstrating the functional independence. It establishes the $2$-local ergotropic gap as an independent measure of bipartite entanglement. Consequently, the proofs of the monotonicity conditions in Def.~\ref{def:requirements} cannot be mapped to those of standard entanglement measures and must therefore be established independently.}
 Similar relations can also be obtained if one considers the von-Neumann entropy as the measure of information, although the bounds obtained are a little less neat. For a detailed analysis, please look at Appendix \ref{app:ee}.

\section{Measuring multimode entanglement with \(k\)-local ergotropic gap}
\label{sec:kloc}

 Instead of dividing \(N\) modes into two,
 let us consider the $N$ modes of $\rho_{_N}$ to be partitioned into $k$ groups: $\{\mathcal{A}_1,\mathcal{A}_2,\ldots,\mathcal{A}_k\}$. For the
case of pure $N$-mode Gaussian states, 
the $k$-local ergotropic gap admits a simplified characterization, which we present in the following lemma.

\begin{lemma}
For a pure $N$-mode Gaussian state $\rho_{_N}$, partitioned into
$\mathcal{A}_1|\mathcal{A}_2|\ldots|\mathcal{A}_k$  with \(\sum_{i=1}^km_i=N,~\text{and}~m_i\geq1\), the $k$-local ergotropic gap
is given by
\begin{eqnarray}
\delta^{\mathcal{A}_1|\mathcal{A}_2|\cdots|\mathcal{A}_k}_{k-\tt loc}
(\rho_{_N}) &=& \frac{1}{2}\sum_{j=1}^{k}\sum_{i=1}^{m_j}\big(\nu_i^{\mathcal{A}_j} - 1\big),
\label{eq:Lema2}
\end{eqnarray}
where $\{\nu_i^{\mathcal{A}_j}\}_{i=1}^{m_j}$ denotes the set of symplectic eigenvalues of
the reduced covariance matrix of the modes in partition $\mathcal{A}_j$.
    \label{co:corollary3}
\end{lemma}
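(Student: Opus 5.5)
The plan is to start from the general expression \eqref{eq:general_erg_gap_fixed} for the $k$-local ergotropic gap and specialize it to pure states. I take that equation as established, including the fact that the $k$-local passive covariance matrix $\Sigma^{k\text{-}{\tt loc},p}$ has diagonal blocks $\Sigma_{\mathcal{A}_j}$ equal to the Williamson normal forms of the reduced covariance matrices. This holds because local symplectic transformations act block-wise, and the minimal local energy on block $j$ is $\frac12\sum_i\nu_i^{\mathcal{A}_j}$.

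\textbf{Global term.} For a pure $N$-mode Gaussian state the global covariance matrix satisfies $\Sigma = S S^{\mathsf T}$ for some symplectic $S$. Hence all global symplectic eigenvalues equal one, $\mu_l=1$ for $l=1,\ldots,N$, and the global passive state is the vacuum, with $\Tr\Sigma^{\text{g,p}}=2N$.

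\textbf{Local term.} Since $\Tr\Sigma^{k\text{-}{\tt loc},p}=\sum_j \Tr\Sigma_{\mathcal{A}_j}=2\sum_j\sum_{i=1}^{m_j}\nu_i^{\mathcal{A}_j}$ and $N=\sum_j m_j$, we can write $\sum_{l=1}^N \mu_l = \sum_{j=1}^k\sum_{i=1}^{m_j} 1$. The gap then regroups partition by partition into a sum of terms of the form $\sum_{j}\sum_i(\nu_i^{\mathcal{A}_j}-1)$.

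\textbf{Prefactor.} The one delicate point is the overall prefactor. Going from $\frac14(\Tr\Sigma^{k\text{-}{\tt loc},p}-\Tr\Sigma^{\text{g,p}})$ with doubled diagonal entries gives $\frac12\big(\sum\nu-\sum\mu\big)$, which matches the third line of \eqref{eq:general_erg_gap_fixed}. I will follow that convention, which yields exactly \eqref{eq:Lema2}.

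\textbf{Consistency check with Lemma~\ref{le:lemma_1}.} For $k=2$ the formula should reduce to the earlier result. There the pure-state bipartite structure (Botero--Reznik mode-wise decomposition) gives $\{\nu_i^{\mathcal{A}_2}\}=\{\nu_i^{\mathcal{A}_1}\}\cup\{1\}^{m_2-m_1}$. The two partition sums are therefore equal, and $\frac12\cdot 2\sum_i(\nu_i^{\mathcal{A}_1}-1)$ recovers \eqref{eq:gap_pure}. This confirms the factor $\frac12$.

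\textbf{Main obstacle.} For $k\ge3$ no such pairing of spectra exists, so the result must be left as the full double sum. The only step needing care is justifying the local passive-state structure, namely that the optimal $\otimes_j U_{\mathcal{A}_j}$ independently brings each reduced block to its Williamson form. The off-diagonal blocks $\Gamma$ do not contribute to the trace, so this follows from the single-block Gaussian ergotropy result applied blockwise.
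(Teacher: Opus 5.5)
Your proposal is correct and matches the paper's own argument: substitute $\mu_l=1$ for a pure state into Eq.~\eqref{eq:general_erg_gap_fixed} and use $N=\sum_j m_j$ to regroup the sum partition by partition. Your blockwise justification of the local passive form and the $k=2$ check against Lemma~\ref{le:lemma_1} are sound extras. The factor $\frac12$ is not a convention you choose, though; it follows directly from each symplectic eigenvalue appearing twice on the diagonal.
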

Since \(\rho_{_N}\) is pure  with all \(\mu_k = 1\), it reduces Eq.~(\ref{eq:general_erg_gap_fixed}) to Eq.~(\ref{eq:Lema2}). { Furthermore, using Eq.~(\ref{eq:gap_pure}), the \(k\)-local ergotropic gap, \(\delta^{\mathcal{A}_1|\mathcal{A}_2|\cdots|\mathcal{A}_k}_{k\text{-loc}}(\rho_{_N})\),  can be expressed as the sum of the \(2\)-local ergotropic gaps 
\begin{eqnarray}
\delta^{\mathcal{A}_1|\mathcal{A}_2|\cdots|\mathcal{A}_k}_{k-\tt loc}
(\rho_{_N})&=&\frac{1}{2}\sum_{j=1}^{k}\delta ^{\mathcal{A}_j|\bar{\mathcal{A}_j}}_{_{2-\tt loc}}(\rho_{_N}),
    \label{eq:k_loc_2_loc1}
\end{eqnarray}
where $\bar {\mathcal{A}_j} = \bigcup_{i\neq j}^k \mathcal{A}_i$ contains all modes except those in $\mathcal{A}_j$, and
\(
\delta ^{\mathcal{A}_j|\bar{\mathcal{A}_j}}_{_{2-\tt loc}}(\rho_{_N})=\sum_{i=1}^{m_j}\big(\nu_i^{\mathcal{A}_j} - 1\big),
\)

 The elegant simplification of the $k$-local ergotropic gap offered by Lemma \ref{co:corollary3} enables us to establish our central result about the $k$-ergotropic score.
\begin{theorem}
For a pure \(N\)-mode Gaussian state, the \(k\)-ergotropic score, \(\Delta^{k}(\rho_{_N})\) is a valid measure of $k$-partite entanglement.
    \label{th:th4}
\end{theorem}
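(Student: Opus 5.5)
I will show that \(\Delta^{k}(\rho_{_N})\) has three properties. It vanishes exactly on \(k\)-separable pure states, so it is faithful for \(k\)-inseparability. It is nonnegative. It is non-increasing under GLOCC. I will use Lemma~\ref{co:corollary3} and Eq.~\eqref{eq:k_loc_2_loc1} throughout. They reduce every \(k\)-local ergotropic gap to half a sum of \(2\)-local gaps \(\delta^{\mathcal{A}_j|\bar{\mathcal{A}_j}}_{2\text{-}{\tt loc}}\). The bipartite results, Theorem~\ref{th:th_1} and Corollary~\ref{co:corollary2}, can then be applied term by term.

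\textbf{Nonnegativity and vanishing on \(k\)-separable states.} The reduced covariance matrix of any subsystem is a legitimate covariance matrix, so every symplectic eigenvalue satisfies \(\nu_i^{\mathcal{A}_j}\ge 1\). Eq.~\eqref{eq:Lema2} then gives \(\delta_{k\text{-}{\tt loc}}\ge 0\) for every partition, and hence \(\Delta^k\ge0\).

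Suppose \(\rho_{_N}\) is \(k\)-separable. For a pure state this means it factorizes as \(\otimes_j\ket{\psi_{\mathcal{A}_j}}\) for some \(k\)-partition. Each reduced state \(\rho_{\mathcal{A}_j}\) is then pure, all its \(\nu_i^{\mathcal{A}_j}=1\), and the gap for that partition vanishes. The minimum in Eq.~\eqref{eq:min_ergo} is therefore zero.

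Conversely, suppose \(\Delta^k=0\). There are finitely many \(k\)-partitions, \(S(N,k)\) of them, so the minimum is attained by some partition. For that partition every term \(\nu_i^{\mathcal{A}_j}-1\) vanishes, by nonnegativity. So each reduced state \(\rho_{\mathcal{A}_j}\) is pure. A pure global state whose marginals on all blocks are pure must be the product of those marginals. Hence \(\rho_{_N}\) is \(k\)-separable. This is the faithfulness part, mirroring Corollary~\ref{co:corollary2}.

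\textbf{Monotonicity under GLOCC.} Let \(\Lambda\) be a GLOCC operation with respect to the finest partition. This is the natural class, because every \(k\)-partition-local operation is also local with respect to the finest partition. Such an operation is in particular local with respect to every bipartition \(\mathcal{A}_j|\bar{\mathcal{A}_j}\) of every \(k\)-partition.

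For pure-to-pure transformations, Theorem~\ref{th:th_1} gives \(\delta^{\mathcal{A}_j|\bar{\mathcal{A}_j}}_{2\text{-}{\tt loc}}(\Lambda(\rho))\le\delta^{\mathcal{A}_j|\bar{\mathcal{A}_j}}_{2\text{-}{\tt loc}}(\rho)\) for each \(j\). Summing via Eq.~\eqref{eq:k_loc_2_loc1} gives \(\delta_{k\text{-}{\tt loc}}^{\mathcal{P}}(\Lambda(\rho))\le\delta_{k\text{-}{\tt loc}}^{\mathcal{P}}(\rho)\) for every fixed \(k\)-partition \(\mathcal{P}\). Now let \(\mathcal{P}^*\) be the partition minimizing the gap of \(\rho\). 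Then
\[
\Delta^k(\Lambda(\rho))\le\delta^{\mathcal{P}^*}_{k\text{-}{\tt loc}}(\Lambda(\rho))\le\delta^{\mathcal{P}^*}_{k\text{-}{\tt loc}}(\rho)=\Delta^k(\rho).
\]

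For stochastic GLOCC producing an ensemble \(\{p_r,\rho_r\}\), I would use the strong monotonicity of each \(2\)-local gap from Corollary~\ref{co:corollary2}. Linearity of the sum gives strong monotonicity of each fixed-partition gap, \(\sum_r p_r\delta^{\mathcal{P}}(\rho_r)\le\delta^{\mathcal{P}}(\rho)\). Then \(\sum_r p_r\Delta^k(\rho_r)\le\sum_r p_r\delta^{\mathcal{P}^*}(\rho_r)\le\Delta^k(\rho)\).

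\textbf{Main obstacle.} The delicate step is the claim that an operation local in the \(k\)-partite sense is GLOCC for every bipartition \(\mathcal{A}_j|\bar{\mathcal{A}_j}\), across all partitions at once. This forces the operations to be local on individual modes, or at least on the finest common refinement. If one instead allowed operations local only with respect to one fixed \(k\)-partition, monotonicity of the minimum over partitions could fail. I would therefore state explicitly that GLOCC is taken with respect to individual modes, and check that the Giedke–Botero majorization argument behind Theorem~\ref{th:th_1} applies to each bipartition under that class.
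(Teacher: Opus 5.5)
Your proposal is correct and is essentially the paper's proof: nonnegativity and vanishing follow from Lemma~\ref{co:corollary3}, each fixed-partition gap is monotone because it sums the bipartite monotones of Theorem~\ref{th:th_1} through Eq.~\eqref{eq:k_loc_2_loc1}, and monotonicity then passes to the minimum; you additionally make explicit the minimizing-partition chain, the faithfulness converse (which the paper places in Corollary~\ref{co:corollary4}), and the need for mode-wise GLOCC, all of which the paper leaves implicit. One slip: your claim that every $k$-partition-local operation is local with respect to the finest partition has the inclusion backwards (it is mode-wise local operations that are local for every $k$-partition), but the argument only ever uses the correct direction, so nothing breaks.
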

{\color{black}
\begin{proof}
The fact that \(\Delta^{k}(\rho_{_N})\) vanishes for any $(k+1)$-product state can be obtained by combining Eq. \eqref{eq:Lema2} with the requirement of the symplectic spectrum being unity for any pure state. 
  Moreover, the \(k\)-local ergotropic gap is  GLOCC monotone since  the \(2\)-local ergotropic gap is monotonic under GLOCC and  \(\delta^{\mathcal{A}_1|\mathcal{A}_2|\cdots|\mathcal{A}_k}_{k-\tt loc}
(\rho_{_N})\) is the sum of all \(k\)-possible \(2\)-local ergotropic gaps as shown in Eq.~(\ref{eq:k_loc_2_loc1}). Therefore,
\(\Delta^{(k)}(\rho_{_N})\) is also a monotone under GLOCC as it is the minimum of all the \(k\)-local ergotropic gap values. See Appendix.~\ref{app:min_ergo_k_party} for a detailed proof.
\end{proof}
}
\noindent 
Interestingly, we find that the quantity \(\Delta^{k}(\rho_{_N})\) additionally satisfies some desirable properties.
\begin{corollary}
For a pure \(N\)-mode Gaussian state, the \(k\)-ergotropic score, \(\Delta^{k}(\rho_{_N})\) is a faithful and additive measure of $k$-partite entanglement.
    \label{co:corollary4}
\end{corollary}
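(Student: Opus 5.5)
Both properties follow from Lemma~\ref{co:corollary3}, which writes each $k$-local gap as a sum of non-negative terms $\tfrac12\sum_j\sum_i(\nu_i^{\mathcal{A}_j}-1)$. Combined with Corollary~\ref{co:corollary2}, this gives everything needed. I would treat faithfulness and additivity separately.

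\emph{Faithfulness.} The aim is: $\Delta^{k}(\rho_{_N})=0$ if and only if $\rho_{_N}$ is $k$-separable, meaning a product of pure states across some $k$-partition. Every symplectic eigenvalue of a physical covariance matrix satisfies $\nu_i^{\mathcal{A}_j}\ge 1$, so every summand in Eq.~\eqref{eq:Lema2} is non-negative. Hence $\delta_{k\text{-}\tt loc}\ge0$ for every partition, and $\Delta^k\ge0$.

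For the direction ``$\Delta^k=0$ implies separability'':
\begin{itemize}
\item The set of $k$-partitions is finite (its size is $S(N,k)$), so the minimum is attained at some partition $\mathcal{A}_1|\cdots|\mathcal{A}_k$ with $\delta_{k\text{-}\tt loc}=0$.
\item Non-negativity of the summands forces $\nu_i^{\mathcal{A}_j}=1$ for all $i,j$. So each reduced state $\rho_{\mathcal{A}_j}$ is pure, since a Gaussian state is pure iff all its symplectic eigenvalues equal one.
\item A global pure state whose marginals on a partition are all pure must equal the tensor product of those marginals. I would argue this inductively using Eq.~\eqref{eq:k_loc_2_loc1}: each $\delta^{\mathcal{A}_j|\bar{\mathcal{A}}_j}_{2\text{-}\tt loc}$ vanishes, so by the faithfulness part of Corollary~\ref{co:corollary2} the state is a product across every cut $\mathcal{A}_j|\bar{\mathcal{A}}_j$.
\end{itemize}
The converse direction (product states give $\Delta^k=0$) is already contained in Theorem~\ref{th:th4}.

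\emph{Additivity.} The aim is $\Delta^k(\rho\otimes\sigma)=\Delta^k(\rho)+\Delta^k(\sigma)$. Here $\rho$ and $\sigma$ are pure states on mode sets $X$ and $Y$, and a $k$-partition of $X\cup Y$ is obtained as the union $\mathcal{A}_j=\mathcal{B}_j\cup\mathcal{C}_j$ of $k$-partitions of $X$ and $Y$.

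\begin{itemize}
\item For such a joint partition the reduced covariance matrix on $\mathcal{A}_j$ is a direct sum. Its symplectic spectrum is therefore the union of the spectra on $\mathcal{B}_j$ and $\mathcal{C}_j$.
\item By Eq.~\eqref{eq:Lema2}, $\delta_{k\text{-}\tt loc}$ then splits exactly as the sum of the two gaps, consistent with the additivity in Corollary~\ref{co:corollary2}. Minimizing over such product-compatible partitions gives $\Delta^k(\rho\otimes\sigma)\le\Delta^k(\rho)+\Delta^k(\sigma)$.
\end{itemize}

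\emph{Main obstacle: the reverse inequality.} A general $k$-partition of $X\cup Y$ restricts to partitions of $X$ and of $Y$ that may have fewer than $k$ nonempty parts. The gap still splits as a sum over $X$ and $Y$, because the state is a product and the reduced states factor. But a restriction with $k'<k$ nonempty blocks is a coarser partition. I would show monotonicity under coarsening: merging blocks can only increase the gap, using subadditivity of the summed symplectic eigenvalues via Eq.~\eqref{eq:k_loc_2_loc1}, or at least that refining a coarser partition to $k$ blocks does not increase it.

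If this monotonicity fails in general, I would state additivity in the natural sense in which the paper presumably intends it, namely over partitions compatible with the tensor structure. That is exactly what the $\le$ argument above establishes.
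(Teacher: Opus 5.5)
Your faithfulness argument is the same as the paper's: the minimum over the finitely many $k$-partitions is attained, non-negativity forces every $\nu_i^{\mathcal{A}_j}=1$, and pure marginals give a product. The problem is entirely in additivity. The repair you propose for the reverse inequality has the wrong sign. By the passive-energy characterization used throughout the paper, $2\sum_i\nu_i(\Sigma)=\min_{S}\mathrm{Tr}(S\Sigma S^{\mathsf T})$ over symplectic $S$. Restricting to $S=S_{\mathcal{A}_1}\oplus S_{\mathcal{A}_2}$ gives $\sum_i\nu_i^{\mathcal{A}_1\cup\mathcal{A}_2}\le\sum_i\nu_i^{\mathcal{A}_1}+\sum_i\nu_i^{\mathcal{A}_2}$. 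Hence merging two blocks can only \emph{decrease} $\delta_{k\text{-loc}}$, and refining can only increase it; finer locality constraints extract less work. So for a restriction $\mathcal{P}_X$ with fewer than $k$ blocks, every $k$-block refinement has gap $\ge\delta(\mathcal{P}_X)$. That is useless for bounding $\delta(\mathcal{P}_X)$ from below by $\Delta^k(\rho)$.

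No other route can work either, because the reverse inequality is false. For $k=2$, $\rho\otimes\sigma$ is a product across the cut $X|Y$, so its gap on that cut is zero and $\Delta^2(\rho\otimes\sigma)=0$. Yet for two two-mode squeezed vacua with squeezings $r,s>0$, $\Delta^2(\rho)+\Delta^2(\sigma)=\cosh 2r+\cosh 2s-2>0$.

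So additivity fails when $\Delta^k(\rho\otimes\sigma)$ is minimized over all $k$-partitions of $X\cup Y$; the statement can only hold in the restricted sense of your fallback. The paper's proof silently makes the same restriction. It writes the optimal joint partition as $\tilde{\mathcal{A}}_i^{*}=\mathcal{A}_i^{*}\mathcal{A}_i^{\prime *}$, with both pieces forming $k$-partitions of the two copies. That is exactly the step you flagged as the obstacle, and it is not justified. Under that assumption, its two ``mutually contradictory'' inequalities simply combine to an equality.

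In the restricted setting you get equality directly, not just $\le$. The objective is $\delta(\mathcal{B};\rho)+\delta(\mathcal{C};\sigma)$ with $\mathcal{B}$ and $\mathcal{C}$ chosen independently, so its minimum is $\Delta^k(\rho)+\Delta^k(\sigma)$, provided each factor has at least $k$ modes. You should drop the coarsening step and state additivity explicitly in this tensor-compatible form, noting that it fails without the restriction.
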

\noindent The proof of Corollary.~\ref{co:corollary4} is discussed in Appendix.~\ref{app:prove_corollary_4}.



\subsection{Connecting \(k\)-mode inseparability with geometric measures of entanglement}
\label{subsec:con_ggm}

Let us establish a connection between the \(k\)-ergotropic score, \(\Delta^k(\rho_{_N})\), and the standard distance-based multimode geometric measure of entanglement, \(\mathcal{E}_k(\rho_{_N})\) \eqref{eq:k_distance_measure}.
Specifically, we will now focus on two particular cases: $1. ~k = 2$, corresponding to genuine multimode entanglement, ~$2. ~k = N$, accounting for the total multimode entanglement.
Since we concentrate only pure states, we can take the distance measure $D$ in \(\mathcal{E}_k(\rho_{_N})\) to be the fidelity distance, where $D(\ket\psi,\ket\phi)=1-|\langle\psi|\phi\rangle|^2$. With this distance measure, the genuine multimode entanglement is referred to as the generalized geometric measure (GGM) \cite{ggm1,ggm2}.

\subsubsection{Relation between GGM and \(2\)-ergotropic score
}
\label{sec:gme_ggm}
The GGM content $(G:=\mathcal{E}_{2})$ of an \(N\)-mode pure Gaussian state \cite{Saptarshi20,Saptarshi25}, \(\rho_{_N}\) can be computed succinctly via the symplectic spectrum of its marginal states as 
\begin{eqnarray}
G(\rho_{_N})
= 1 - \max_{m=1,2,\ldots,[ \frac{N}2 ]}
\;\max_{j=1,2,\ldots,\binom{N}{m}}
\left(
\prod_{i=1}^{m} \frac{2}{1+\nu_i^{\,j}}
\right),
\label{eq:formula_ggm}
\end{eqnarray}
where $\{\nu_i^j\}_{_{i=1}}^{^m}$ are the symplectic eigenvalues of the  $j$-th  state out of the $\binom{N}{m}$ reduced \(m\)-mode states in \(\mathcal{A}_1\) partition. Interestingly, 
a direct one to one relation of \(2\)-ergotropic score, \(\Delta^2(\rho_{_3})\) and GGM, \(G(\rho_{_3})\) for three mode pure states is presented in 
the following proposition:
\begin{proposition}
For any three-mode pure Gaussian state $\rho_3$, the \(2\)-ergotropic score \(\Delta^2(\rho_{_3})\) is a monotonic function of the GGM content of the state \(G(\rho_{_3})\), i.e., 
 \begin{eqnarray}
   \Delta^2(\rho_{_3})&=& \frac{2~G(\rho_{_3})}{1 - G(\rho_{_3})}.
   \label{eq:conn_ggm}
\end{eqnarray} 
\label{pro:relation_ggm_gap}
\end{proposition}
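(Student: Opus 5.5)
For $N=3$, every bipartition splits one mode off from the other two. The plan is to reduce both $\Delta^2(\rho_{_3})$ and $G(\rho_{_3})$ to functions of the single-mode symplectic eigenvalues. Once both are expressed in the same variable, the relation follows by elimination.

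\textbf{Step 1: the ergotropic side.} There are exactly three bipartitions, $\{i\}|\{j,l\}$ for $i=1,2,3$. In each case the smaller subsystem has $m_1=1$. By Lemma~\ref{le:lemma_1}, the $2$-local ergotropic gap is
\[
\delta^{\{i\}|\{j,l\}}_{2\text{-loc}}(\rho_{_3})=\nu_i-1,
\]
where $\nu_i\ge 1$ is the symplectic eigenvalue of the reduced single-mode covariance matrix of mode $i$. Minimizing over the three bipartitions gives
\[
\Delta^2(\rho_{_3})=\min_i \nu_i-1=\nu_{\min}-1.
\]

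\textbf{Step 2: the geometric side.} Specialize Eq.~\eqref{eq:formula_ggm} to $N=3$. Then $[N/2]=1$, so only $m=1$ contributes, and the inner maximum runs over the three single-mode marginals. Since $2/(1+\nu)$ is strictly decreasing in $\nu$, the maximum is attained at $\nu_{\min}$. Hence
\[
G(\rho_{_3})=1-\frac{2}{1+\nu_{\min}}=\frac{\nu_{\min}-1}{\nu_{\min}+1}.
\]

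\textbf{Step 3: elimination.} Invert the last expression: $\nu_{\min}=(1+G)/(1-G)$, valid because $G<1$. Substituting into Step 1 gives
\[
\Delta^2=\frac{1+G}{1-G}-1=\frac{2G}{1-G},
\]
which is Eq.~\eqref{eq:conn_ggm}. This function is strictly increasing on $[0,1)$, which gives the claimed monotonicity.

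\textbf{Main obstacle.} The only delicate point is that both minimizations must be attained on the same bipartition. This holds because both quantities are monotone in the same single number $\nu_i$, namely the one-mode marginal eigenvalue that fully characterizes the cut $\{i\}|\{j,l\}$ for a pure state. I would state this explicitly so it is clear the correspondence is a per-bipartition identity. The only other thing to check is that Eq.~\eqref{eq:formula_ggm} with $m=1$ covers all bipartitions when $N=3$, which it does.
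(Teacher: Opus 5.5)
Your proposal is correct and follows essentially the same route as the paper. You use Lemma~\ref{le:lemma_1} and the $N=3$ case of Eq.~\eqref{eq:formula_ggm} to reduce both quantities to the smallest single-mode marginal symplectic eigenvalue $\nu_{\min}$, observe that both optima sit on the same bipartition because $\nu-1$ and $2/(1+\nu)$ are monotone in $\nu$, and eliminate $\nu_{\min}$ to obtain $\Delta^2 = 2G/(1-G)$; your explicit check that this map is strictly increasing on $[0,1)$ is a small addition the paper only asserts.
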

\noindent 
This  one-to-one correspondence arises because, among all bipartitions, the minimum symplectic eigenvalue originates from the same bipartition in both cases (see 
Appendix~\ref{App:relation_ggm_gap} for the detailed proof). 
Interestingly, this equivalence highlights an important point.  Due to their monotonic functional dependence, just like $G(\rho_{3})$, $\Delta^2(\rho_{3})$ is monotonic under local operations and classical communication (LOCC) as opposed to only the GLOCC $\subset$ LOCC operations. However, this special feature holds only for the special case of $N = 3$. In general, for $N \geq 4$, $G(\rho_{N})$ and $\Delta^2(\rho_{N})$ are functionally independent, which can be intuitively seen from their forms in Eqs. \eqref{eq:min_ergo} and \eqref{eq:formula_ggm}. The inequivalence can also be directly observed from our numerical simulations of random four-mode pure Gaussian states in Fig. \ref{fig:relation}(a), where for $N =4$, for a given value of $G$, there is a spread of values for $\Delta^2$, and vice-versa, ruling out the possibility of a functional dependence.


\begin{figure}[ht]
\includegraphics[width=\linewidth]{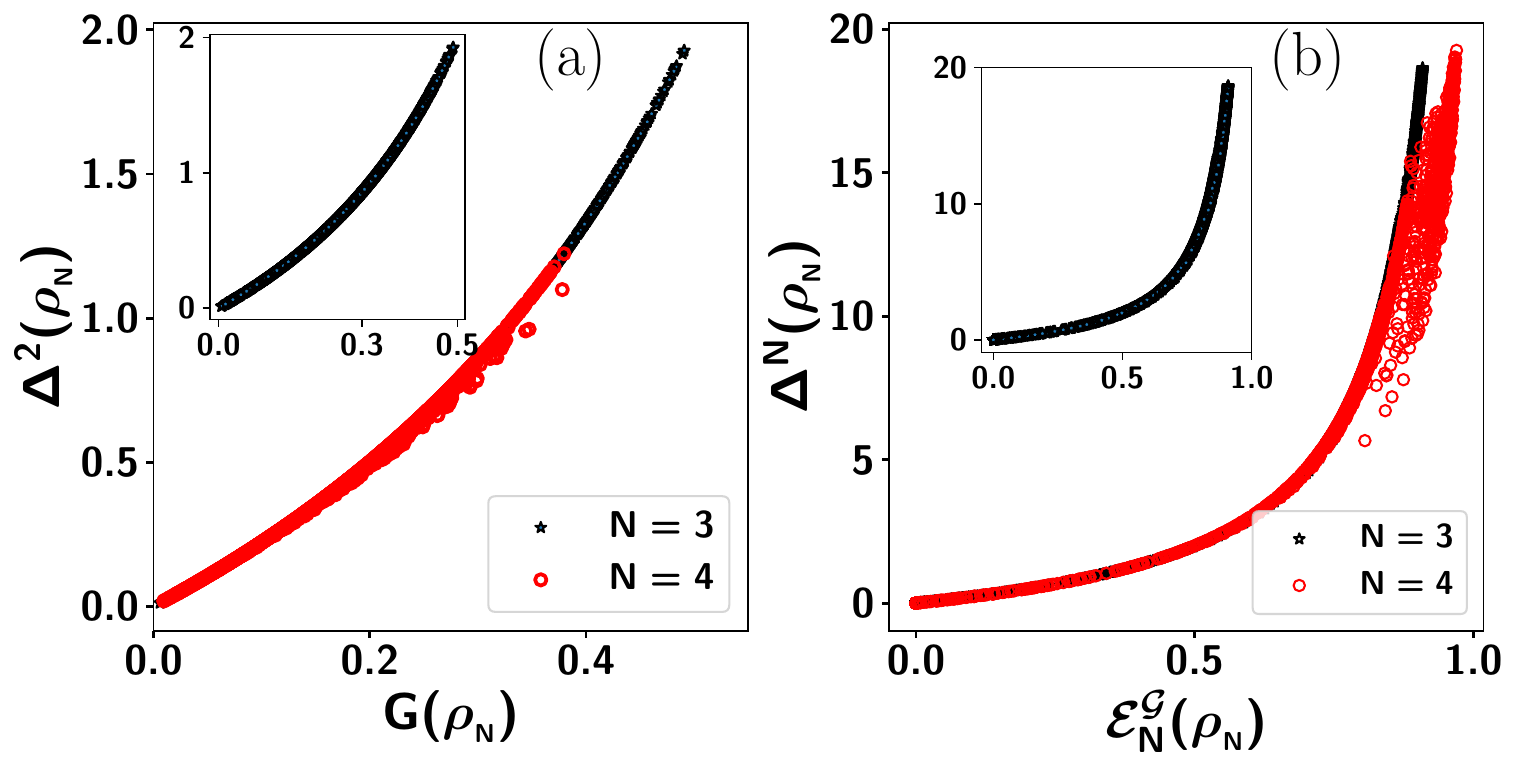}
\caption{
\textbf{Relation between thermodynamic and geometric measures of multimode entanglement.}
We compare multimode entanglement quantified by the $k$-ergotropic score, $\Delta^{k}(\rho_{N})$, (vertical axis) with the generalized geometric measure (GGM), $G(\rho_{N})$, and total multimode entanglement, $\mathcal{E}^{\mathcal{G}}_{N}(\rho_{N})$ (horizontal axis) at fixed total energy, $E=20$. 
(a)  $\Delta^{2}(\rho_{N})$ with respect to $G(\rho_{N})$, for $N=3$ and $N=4$. For $N=3$, these measures exhibit a one-to-one correspondence as shown in Proposition 1 (see inset). 
(b) The total multimode entanglement (TME), $\mathcal{E}^{\mathcal{G}}_{N}(\rho_{N})$ against $N$-ergotropic score,  $\Delta^{N}(\rho_{N})$ for $N=3,~4$. 
In  this case also, we numerically find a one-to-one relation between TME and $N$-ergotropic score for the three-mode scenario (see inset). 
}
\label{fig:relation} 
\end{figure}

\subsubsection{Linking \(N\)-ergotropic score to the total Gaussian multimode entanglement}
\label{sec:total_ent}


We now move to the case \(k = N\), which corresponds to the total multimode entanglement. In particular, we consider a variant of the distance-based measure denoted by $\mathcal{E}_N^{\mathcal{G}}$ which computes the minimum distance of the given state from the set of \(N\)-separable Gaussian states. We denote $\mathcal{E}_N^{\mathcal{G}}$ as the total Gaussian multipartite entanglement (GTME) and can be expressed as follows.
\begin{proposition}
    The total Gaussian multimode entanglement of a pure Gaussian state with vanishing displacement and covariance matrix $\Sigma_{\rho}$ is given by
    \begin{eqnarray}
        \mathcal{E}^{\mathcal{G}}_N(\rho_{_N})= 1 - \max_{\{\vec{r},\vec{\theta}\}}  \Big[\prod_{i=1}^N \nu_i^{-1}(\{\vec{r},\vec{\theta}\})\Big],
\label{eq:distance_easy_form}
    \end{eqnarray}
    where $\nu_i(\{\vec{r},\vec{\theta}\})$ are the symplectic eigenvalues of a Gaussian state with covariance matrix $W(\{\vec{r},\vec{\theta}\})+\Sigma_{\rho}$, where  
     $W(\{\vec{r},\vec{\theta}\})$ is the covariance matrix of $N$-single mode squeezed vacuum states with squeezing parameters $\{r_j e^{i\theta_j}\}_{j=1}^N$. 
\label{pro:distance}
\end{proposition}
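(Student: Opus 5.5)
The plan is to compute the fidelity between the pure state $\rho$ and an arbitrary $N$-separable pure Gaussian state, and then maximize over the latter. By definition, $\mathcal{E}^{\mathcal{G}}_N(\rho_{_N}) = 1-\max_{\sigma}|\langle\psi|\sigma\rangle|^2$, where the maximum runs over products $\ket{\sigma}=\otimes_{j=1}^N\ket{\phi_j}$ of single-mode pure Gaussian states. Every such $\ket{\phi_j}$ is a displaced squeezed vacuum $D(\alpha_j)S(r_je^{i\theta_j})\ket{0}$. So the optimization is over the parameters $\{\alpha_j, r_j,\theta_j\}$, and the covariance matrix of the product is exactly $W(\{\vec r,\vec\theta\})$, a direct sum of single-mode squeezed covariance matrices.

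\textbf{Step 1: use the fidelity formula.} For two pure Gaussian states with covariance matrices $\Sigma_1,\Sigma_2$ (convention: vacuum $=\mathbb{1}$) and displacement difference $d$, the overlap is
\begin{equation}
|\langle\psi_1|\psi_2\rangle|^2=\frac{2^N}{\sqrt{\det(\Sigma_1+\Sigma_2)}}\exp\!\left[-d^{\mathsf T}(\Sigma_1+\Sigma_2)^{-1}d\right].
\end{equation}
This is the standard Gaussian overlap, with the normalization fixed by the Wigner-function integral $\Tr(\rho_1\rho_2)=(2\pi)^N\int W_1W_2$. Since $\Sigma_\rho+W>0$, the exponential factor is at most $1$, with equality exactly when $d=0$. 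Because $\rho$ has vanishing displacement, the optimum over $\alpha_j$ is $\alpha_j=0$. This leaves
\begin{equation}
\max_{\{\vec r,\vec\theta\}} 2^N\det(\Sigma_\rho+W)^{-1/2}.
\end{equation}

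\textbf{Step 2: rewrite the determinant spectrally.} The matrix $\Sigma_\rho+W$ is real, symmetric and positive definite, so by Williamson's theorem $\det(\Sigma_\rho+W)=\prod_i\nu_i^2$. Here the $\nu_i$ are its symplectic eigenvalues, which are well defined and positive. This gives $2^N\prod_i\nu_i^{-1}$.

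\textbf{Step 3: match the normalization of the statement.} The statement reads $\prod_i\nu_i^{-1}$. I would reconcile the factor $2^N$ by convention: the statement presumably treats $W+\Sigma_\rho$ as the covariance matrix of a state in a normalization where the relevant quantity is $(\Sigma_\rho+W)/2$, which has symplectic eigenvalues $\nu_i/2$. Concretely, I will take $\nu_i$ to be the symplectic eigenvalues of $(\Sigma_\rho+W)/2$, consistent with the phrase ``a Gaussian state with covariance matrix'', since $(\Sigma_\rho+W)/2$ is a legitimate covariance matrix as a convex combination. Substituting into $1-\max(\cdot)$ then yields Eq.~\eqref{eq:distance_easy_form}.

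\textbf{Main obstacle.} The hardest step is justifying that restricting to \emph{pure} product Gaussian states loses nothing. The set of $N$-separable Gaussian states includes mixtures, and $\max_\sigma F(\psi,\sigma)$ over a convex set is attained at an extreme point, because fidelity with a pure state is linear in $\sigma$. I would argue that Gaussian product states are covariance-dominated by pure product Gaussian states: any product Gaussian state is a Gaussian mixture of displaced pure product states, so its overlap with $\ket\psi$ is an average of pure overlaps. This bounds the mixed-state fidelity by the pure-state maximum. The displacement argument in Step 1 is the remaining delicate point, but it is handled by the positivity of $\Sigma_\rho+W$.
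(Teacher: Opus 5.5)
Your argument is correct and follows the same route as the paper's proof. Both reduce to pure product Gaussian states using linearity of $\langle\psi|\sigma|\psi\rangle$ in $\sigma$. Both evaluate the overlap through $\det(\Sigma_\rho+W)$, and both convert that determinant into $\prod_i\nu_i^2$ by Williamson. Where you differ, you are more careful than the paper:
\begin{itemize}
\item The paper restricts the search to zero-displacement product states by assumption. Your exponential factor actually shows that $\alpha_j=0$ is optimal.
\item The paper's ``extreme points'' step ignores that the set of Gaussian separable states is not convex. Your mixture argument addresses this.
\item The paper certifies $\Sigma_\rho+W$ as physical through an additive-noise channel ($X=\mathbb{I}$, $Y=W$). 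You use convexity for $(\Sigma_\rho+W)/2$, which is the matrix that actually matters for the purity interpretation.
\end{itemize}

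\textbf{The factor $2^N$.} You should not hedge this with ``presumably''. The paper writes the overlap as $\det(\Sigma_\rho+W)^{-1/2}$, which is the formula for the normalization vacuum $=\mathbb{I}/2$. Its own $\Sigma$ and its explicit $W(r_k,\theta_k)$, which has unit determinant, use vacuum $=\mathbb{I}$. Read literally, the statement therefore fails. Take $\Sigma_\rho=\mathbb{I}$: per mode $\det(\mathbb{I}+W(r,\theta))=2+2\cosh 2r$, so the maximum of $\prod_i\nu_i^{-1}$ is $2^{-N}$, and the vacuum would be assigned $1-2^{-N}$ instead of $0$. Your version is the correct statement: take $\nu_i$ to be the symplectic eigenvalues of $(\Sigma_\rho+W)/2$, or equivalently include the extra factor $2^N$. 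You should present it explicitly as a correction rather than as a convention.

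\textbf{Separable versus product states.} In your last paragraph you speak of product Gaussian states. The set in question is fully separable Gaussian states, which may be classically correlated rather than product. The decomposition you need still holds. Such a state's covariance matrix satisfies $\Sigma_\sigma\ge\bigoplus_j\gamma_j$ for some pure single-mode covariance matrices $\gamma_j$ (the Werner--Wolf criterion). Hence $\sigma$ is a Gaussian mixture of displaced pure product states with covariance matrix $\bigoplus_j\gamma_j$, and your Step 1 bounds every term of that mixture.
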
}

A detailed proof is provided in Appendix~\ref{App:distance_base_prove}.  The proposition reduces the evaluation of GTME for an arbitrary $N$-mode pure Gaussian state to an optimization of the purity of a Gaussian state parameterized by $2N$ real parameters, implying linear scaling of the optimization dimension with $N$. 
The identification with purity is established by noticing that the purity of any Gaussian state $\varrho_G$ with covariance matrix $V$ is given by Tr$\varrho_G^2 = 1/\sqrt{\text{det}\, V} = \Pi_k ~\nu^{-1}_k$, with $\nu_k$s being the symplectic eigenvalues of $V$.
This structural simplification renders the numerical computation of TME efficient, which, in turn, allows us to uncover quantitative connections with the $N$-ergotropic score for multimode states.  



For any typical pure Gaussian states of three modes $\rho_3$, we numerically verify that like genuine multimode entanglement, there is again a  one-to-one relation between $\mathcal{E}^{\mathcal{G}}_3(\rho_3)$ and $\Delta^3(\rho_3)$ (see Fig. \ref{fig:relation}(b)) although this relation breaks down if one considers \(N=4\) where we find states with the same value of $\mathcal{E}^{\mathcal{G}}_3$ but different values of $\Delta^3$  
 (see Appendix~\ref{app:generation_random_state} for more details, where we also provide a brief primer of Haar uniform generation of random pure Gaussian states.)

\section{conclusion}
\label{sec:conclusion}


The intimate connection between ergotropy (the work extractable via quantum operations) and the resource theory of entanglement highlights a deep interplay between thermodynamics and quantum information science, suggesting a unified framework for the design and analysis of quantum devices. Motivated by this connection, we constructed thermodynamic quantifiers based on the difference (gap) between ergotropy extracted under global and local unitary cycles and demonstrated that these quantities faithfully capture the entanglement content of multimode Gaussian states across different bipartitions. Crucially, this approach yields a systematic characterization of inseparability in pure multimode Gaussian states, even when the system is partitioned into groups containing an arbitrary number of modes.

We introduced the notion of a \(k\)-local ergotropic gap and the associated \(k\)-ergotropic score, obtained as the minimum over all possible \(k\)-partitions. We proved that both the \(2\)-local and 
\(k\)-local ergotropic score serve as faithful quantifiers of entanglement across arbitrary mode bipartitions by explicitly showing that they satisfy all essential criteria required of an entanglement measure. Furthermore, we established that the \(2\)-local ergotropic gap provides a functionally independent upper bound on the R{\'e}nyi-2 entanglement entropy, thereby directly linking an operational thermodynamic quantity to a standard entropic measure of entanglement.  For the \(k\)-local ergotropic gap, we found that while it is directly connected to both genuine and total multimode entanglement in three-mode states, this correspondence no longer holds for states involving a larger number of modes, where these notions become inequivalent. Taken together, our results establish the ergotropic gap and the corresponding ergotropic scores as operational probes of multimode Gaussian entanglement, offering a computable thermodynamic characterization of quantum correlations in pure Gaussian states.

Recent experimental advances have demonstrated that ergotropic work can be directly accessed in realistic quantum platforms. Furthermore, the resulting 
\(k\)-ergotropic score depends solely on the sum of symplectic eigenvalues, significantly simplifying its computation compared to conventional distance-based measures. This feature enables an efficient characterization of entanglement even in large multimode systems, both from theoretical and experimental point of view. It would therefore be of considerable interest to explore potential connections between this framework and other key quantum information protocols.


Recent experiments have demonstrated that ergotropic work can be directly accessed in realistic quantum platforms~\cite{Andolina2019, Rossini2020, niu2024exp, NavidElyasi2025}. The resulting \(k\)-ergotropic score depends only on the symplectic eigenvalues, providing significant ease of computability compared to distance-based measures and enabling efficient characterization of entanglement in large multimode systems. An interesting direction is to explore whether the ergotropic framework can also capture signatures of entanglement depth~\cite{sorensen2001}, particularly in intermediate \(k\)-local regimes beyond the limiting cases \(k=2\) and \(N\). A further natural extension is to generalize our approach to mixed Gaussian states, which would broaden its applicability to realistic noisy settings. 

\acknowledgements

  The authors acknowledge the use of cluster computing facility at the Harish-Chandra Research Institute.   M.S., S.M, A.P., and A.S.D. acknowledge the support from the project entitled `` Technology Vertical - Quantum Communication'' under the National Quantum Mission of the Department of Science and Technology (DST)  ( Sanction Order No. DST/QTC/NQM/QComm/2024/2 (G)).

 \bibliography{bib}

@article{Bennett2014,
   author = {Charles H. Bennett and Gilles Brassard},
   doi = {10.1016/j.tcs.2014.05.025},
   issn = {03043975},
   journal = {Theoretical Computer Science},
   month = {12},
   pages = {7-11},
   title = {Quantum cryptography: Public key distribution and coin tossing},
   volume = {560},
   year = {2014}
}

@article{Ekert1991,
  title = {Quantum cryptography based on Bell's theorem},
  author = {Ekert, Artur K.},
  journal = {Phys. Rev. Lett.},
  volume = {67},
  issue = {6},
  pages = {661--663},
  numpages = {0},
  year = {1991},
  month = {Aug},
  publisher = {American Physical Society},
  doi = {10.1103/PhysRevLett.67.661},
  url = {https://link.aps.org/doi/10.1103/PhysRevLett.67.661}
}

@article{Bennett1992,
   author = {Charles H. Bennett and François Bessette and Gilles Brassard and Louis Salvail and John Smolin},
   doi = {10.1007/BF00191318},
   issn = {0933-2790},
   issue = {1},
   journal = {Journal of Cryptology},
   month = {1},
   pages = {3-28},
   title = {Experimental quantum cryptography},
   volume = {5},
   year = {1992}
}

@article{Bouwmeester1997,
   author = {Dik Bouwmeester and Jian-Wei Pan and Klaus Mattle and Manfred Eibl and Harald Weinfurter and Anton Zeilinger},
   doi = {10.1038/37539},
   issn = {0028-0836},
   issue = {6660},
   journal = {Nature},
   month = {12},
   pages = {575-579},
   title = {Experimental quantum teleportation},
   volume = {390},
   year = {1997}
}

@article{pan1998,
  title = {Experimental Entanglement Swapping: Entangling Photons That Never Interacted},
  author = {Pan, Jian-Wei and Bouwmeester, Dik and Weinfurter, Harald and Zeilinger, Anton},
  journal = {Phys. Rev. Lett.},
  volume = {80},
  issue = {18},
  pages = {3891--3894},
  numpages = {0},
  year = {1998},
  month = {May},
  publisher = {American Physical Society},
  doi = {10.1103/PhysRevLett.80.3891},
  url = {https://link.aps.org/doi/10.1103/PhysRevLett.80.3891}
}

@article{Gisin2002,
  title = {Quantum cryptography},
  author = {Gisin, Nicolas and Ribordy, Gr\'egoire and Tittel, Wolfgang and Zbinden, Hugo},
  journal = {Rev. Mod. Phys.},
  volume = {74},
  issue = {1},
  pages = {145--195},
  numpages = {0},
  year = {2002},
  month = {Mar},
  publisher = {American Physical Society},
  doi = {10.1103/RevModPhys.74.145},
  url = {https://link.aps.org/doi/10.1103/RevModPhys.74.145}
}

@article{Kimble2008,
   author = {H. J. Kimble},
   doi = {10.1038/nature07127},
   issn = {0028-0836},
   issue = {7198},
   journal = {Nature},
   month = {6},
   pages = {1023-1030},
   title = {The quantum internet},
   volume = {453},
   year = {2008}
}

@article{pan2012rmp,
  title = {Multiphoton entanglement and interferometry},
  author = {Pan, Jian-Wei and Chen, Zeng-Bing and Lu, Chao-Yang and Weinfurter, Harald and Zeilinger, Anton and \ifmmode \dot{Z}\else \.{Z}\fi{}ukowski, Marek},
  journal = {Rev. Mod. Phys.},
  volume = {84},
  issue = {2},
  pages = {777--838},
  numpages = {0},
  year = {2012},
  month = {May},
  publisher = {American Physical Society},
  doi = {10.1103/RevModPhys.84.777},
  url = {https://link.aps.org/doi/10.1103/RevModPhys.84.777}
}

@article{Flamini2019,
   author = {Fulvio Flamini and Nicolò Spagnolo and Fabio Sciarrino},
   doi = {10.1088/1361-6633/aad5b2},
   issn = {0034-4885},
   issue = {1},
   journal = {Reports on Progress in Physics},
   month = {1},
   pages = {016001},
   title = {Photonic quantum information processing: a review},
   volume = {82},
   year = {2019}
}

@article{Zhong2020,
   author = {Han-Sen Zhong and Hui Wang and Yu-Hao Deng and Ming-Cheng Chen and Li-Chao Peng and Yi-Han Luo and Jian Qin and Dian Wu and Xing Ding and Yi Hu and Peng Hu and Xiao-Yan Yang and Wei-Jun Zhang and Hao Li and Yuxuan Li and Xiao Jiang and Lin Gan and Guangwen Yang and Lixing You and Zhen Wang and Li Li and Nai-Le Liu and Chao-Yang Lu and Jian-Wei Pan},
   doi = {10.1126/science.abe8770},
   issn = {0036-8075},
   issue = {6523},
   journal = {Science},
   month = {12},
   pages = {1460-1463},
   title = {Quantum computational advantage using photons},
   volume = {370},
   year = {2020}
}

@article{Pirandola2020,
   author = {S. Pirandola and U. L. Andersen and L. Banchi and M. Berta and D. Bunandar and R. Colbeck and D. Englund and T. Gehring and C. Lupo and C. Ottaviani and J. L. Pereira and M. Razavi and J. Shamsul Shaari and M. Tomamichel and V. C. Usenko and G. Vallone and P. Villoresi and P. Wallden},
   doi = {10.1364/AOP.361502},
   issn = {1943-8206},
   issue = {4},
   journal = {Advances in Optics and Photonics},
   month = {12},
   pages = {1012},
   title = {Advances in quantum cryptography},
   volume = {12},
   year = {2020}
}

@article{Allen1992,
  title = {Orbital angular momentum of light and the transformation of Laguerre-Gaussian laser modes},
  author = {Allen, L. and Beijersbergen, M. W. and Spreeuw, R. J. C. and Woerdman, J. P.},
  journal = {Phys. Rev. A},
  volume = {45},
  issue = {11},
  pages = {8185--8189},
  numpages = {0},
  year = {1992},
  month = {Jun},
  publisher = {American Physical Society},
  doi = {10.1103/PhysRevA.45.8185},
  url = {https://link.aps.org/doi/10.1103/PhysRevA.45.8185}
}

@article{Kwiat1995,
   author = {Paul G. Kwiat and Klaus Mattle and Harald Weinfurter and Anton Zeilinger and Alexander V. Sergienko and Yanhua Shih},
   doi = {10.1103/PhysRevLett.75.4337},
   issn = {0031-9007},
   issue = {24},
   journal = {Physical Review Letters},
   month = {12},
   pages = {4337-4341},
   title = {New High-Intensity Source of Polarization-Entangled Photon Pairs},
   volume = {75},
   year = {1995}
}

@article{Mair2001,
   author = {Alois Mair and Alipasha Vaziri and Gregor Weihs and Anton Zeilinger},
   doi = {10.1038/35085529},
   issn = {0028-0836},
   issue = {6844},
   journal = {Nature},
   month = {7},
   pages = {313-316},
   title = {Entanglement of the orbital angular momentum states of photons},
   volume = {412},
   year = {2001}
}

@article{Braunstein2005,
  title = {Quantum information with continuous variables},
  author = {Braunstein, Samuel L. and van Loock, Peter},
  journal = {Rev. Mod. Phys.},
  volume = {77},
  issue = {2},
  pages = {513--577},
  numpages = {0},
  year = {2005},
  month = {Jun},
  publisher = {American Physical Society},
  doi = {10.1103/RevModPhys.77.513},
  url = {https://link.aps.org/doi/10.1103/RevModPhys.77.513}
}

@article{OBrien2009,
   author = {Jeremy L. O'Brien and Akira Furusawa and Jelena Vučković},
   doi = {10.1038/nphoton.2009.229},
   issn = {1749-4885},
   issue = {12},
   journal = {Nature Photonics},
   month = {12},
   pages = {687-695},
   title = {Photonic quantum technologies},
   volume = {3},
   year = {2009}
}

@article{Weedbrook2012rmp,
  title = {Gaussian quantum information},
  author = {Weedbrook, Christian and Pirandola, Stefano and Garc\'{\i}a-Patr\'on, Ra\'ul and Cerf, Nicolas J. and Ralph, Timothy C. and Shapiro, Jeffrey H. and Lloyd, Seth},
  journal = {Rev. Mod. Phys.},
  volume = {84},
  issue = {2},
  pages = {621--669},
  numpages = {0},
  year = {2012},
  month = {May},
  publisher = {American Physical Society},
  doi = {10.1103/RevModPhys.84.621},
  url = {https://link.aps.org/doi/10.1103/RevModPhys.84.621}
}

@article{GRMP,
  title = {Gaussian quantum information},
  author = {Weedbrook, Christian and Pirandola, Stefano and Garc\'{\i}a-Patr\'on, Ra\'ul and Cerf, Nicolas J. and Ralph, Timothy C. and Shapiro, Jeffrey H. and Lloyd, Seth},
  journal = {Rev. Mod. Phys.},
  volume = {84},
  issue = {2},
  pages = {621--669},
  numpages = {0},
  year = {2012},
  month = {May},
  publisher = {American Physical Society},
  doi = {10.1103/RevModPhys.84.621},
  url = {https://link.aps.org/doi/10.1103/RevModPhys.84.621}
}

@article{cvtele,
  title = {Teleportation of Continuous Quantum Variables},
  author = {Braunstein, Samuel L. and Kimble, H. J.},
  journal = {Phys. Rev. Lett.},
  volume = {80},
  issue = {4},
  pages = {869--872},
  numpages = {0},
  year = {1998},
  month = {Jan},
  publisher = {American Physical Society},
  doi = {10.1103/PhysRevLett.80.869},
  url = {https://link.aps.org/doi/10.1103/PhysRevLett.80.869}
}

@article{cvdc,
  title = {Dense coding for continuous variables},
  author = {Braunstein, Samuel L. and Kimble, H. J.},
  journal = {Phys. Rev. A},
  volume = {61},
  issue = {4},
  pages = {042302},
  numpages = {4},
  year = {2000},
  month = {Mar},
  publisher = {American Physical Society},
  doi = {10.1103/PhysRevA.61.042302},
  url = {https://link.aps.org/doi/10.1103/PhysRevA.61.042302}
}

@article{Zhang2024,
  title = {Continuous-variable quantum key distribution system: Past,  present,  and future},
  volume = {11},
  ISSN = {1931-9401},
  url = {http://dx.doi.org/10.1063/5.0179566},
  DOI = {10.1063/5.0179566},
  number = {1},
  journal = {Applied Physics Reviews},
  publisher = {AIP Publishing},
  author = {Zhang,  Yichen and Bian,  Yiming and Li,  Zhengyu and Yu,  Song and Guo,  Hong},
  year = {2024},
  month = mar 
}

@article{cvec1,
  author = {Gottesman, Daniel and Kitaev, Alexei and Preskill, John},
  title = {Encoding a qubit in an oscillator},
  journal = {Phys. Rev. A},
  volume = {64},
  pages = {012310},
  year = {2001},
  doi = {10.1103/PhysRevA.64.012310}
}

@article{cvec2,
  author = {Niset, Julien and Fiur\'a\v{s}ek, Jarom\'{\i}r and Cerf, Nicolas J.},
  title = {No-Go Theorem for Gaussian Quantum Error Correction},
  journal = {Phys. Rev. Lett.},
  volume = {102},
  pages = {120501},
  year = {2009},
  doi = {10.1103/PhysRevLett.102.120501}
}

@article{cvec3,
  author = {Ofek, Nissim and Petrenko, Andrei and Heeres, Reinier W. and Reinhold, Philipp and Leghtas, Zaki and Vlastakis, Brian and Liu, Yaxing and Frunzio, Luigi and Girvin, S. M. and Jiang, Liang and Mirrahimi, Mazyar and Devoret, Michel H. and Schoelkopf, Robert J.},
  title = {Extending the lifetime of a quantum bit with error correction in superconducting circuits},
  journal = {Nature},
  volume = {536},
  pages = {441--445},
  year = {2016},
  doi = {10.1038/nature18949}
}

@article{cvec4,
  author = {Michael, M. H. and Silveri, M. and Brierley, R. T. and Albert, V. V. and Salmilehto, J. and Jiang, Liang and Girvin, S. M.},
  title = {New Class of Quantum Error-Correcting Codes for a Bosonic Mode},
  journal = {Phys. Rev. X},
  volume = {6},
  pages = {031006},
  year = {2016},
  doi = {10.1103/PhysRevX.6.031006}
}

@article{cvec5,
  author = {Albert, Victor V. and Shen, Chang-Ling and Brierley, R. T. and Verstraete, Frank and Jiang, Liang},
  title = {Performance and structure of single-mode bosonic codes},
  journal = {Phys. Rev. A},
  volume = {97},
  pages = {032346},
  year = {2018},
  doi = {10.1103/PhysRevA.97.032346}
}

@article{cvec6,
  author = {Campagne-Ibarcq, Philippe and Eickbusch, Arne and Touzard, Simon and Zalys-Geller, Eran and Frattini, Nicholas E. and Sivak, V. V. and Reinhold, Philipp and Puri, Shruti and Shankar, S. and Devoret, Michel H. and Girvin, S. M. and Mirrahimi, Mazyar and Schoelkopf, Robert J.},
  title = {Quantum error correction of a qubit encoded in grid states of an oscillator},
  journal = {Nature},
  volume = {584},
  pages = {368--372},
  year = {2020},
  doi = {10.1038/s41586-020-2603-3}
}

@article{cvec7,
  author = {Menicucci, Nicolas C.},
  title = {Fault-Tolerant Measurement-Based Quantum Computing with Continuous-Variable Cluster States},
  journal = {Phys. Rev. Lett.},
  volume = {112},
  pages = {120504},
  year = {2014},
  doi = {10.1103/PhysRevLett.112.120504}
}

@article{cvmbqc1,
  author = {Menicucci, Nicolas C. and Van Loock, Peter and Gu, Mile and Weedbrook, Christian and Ralph, Timothy C. and Nielsen, Michael A.},
  title = {Universal quantum computation with continuous-variable cluster states},
  journal = {Phys. Rev. Lett.},
  volume = {97},
  pages = {110501},
  year = {2006},
  doi = {10.1103/PhysRevLett.97.110501}
}

@article{cvmbqc2,
  author = {Gu, Mile and Weedbrook, Christian and Menicucci, Nicolas C. and Ralph, Timothy C. and Van Loock, Peter},
  title = {Quantum computing with continuous-variable clusters},
  journal = {Phys. Rev. A},
  volume = {79},
  pages = {062318},
  year = {2009},
  doi = {10.1103/PhysRevA.79.062318}
}

@article{cvmbqc4,
  author = {Du, Peilin and Zhang, Jing and Zhang, Tiancai and Yang, Rongguo and Gao, Jiangrui},
  title = {A complete continuous-variable quantum computation architecture based on the 2D spatiotemporal cluster state},
  journal = {Sci. Rep.},
  volume = {15},
  pages = {18199},
  year = {2025},
  doi = {10.1038/s41598-025-02899-8}
}

@article{Lee2025,
  title = {Photonic hybrid quantum computing},
  ISSN = {2950-6360},
  url = {http://dx.doi.org/10.1016/j.newton.2025.100359},
  DOI = {10.1016/j.newton.2025.100359},
  journal = {Newton},
  publisher = {Elsevier BV},
  author = {Lee,  Jaehak and Omkar,  Srikrishna and Teo,  Yong Siah and Lee,  Seok-Hyung and Kwon,  Hyukjoon and Kim,  M.S. and Jeong,  Hyunseok},
  year = {2025},
  month = dec,
  pages = {100359}
}

@article{landauer1961,
  author  = {Landauer, R.},
  title   = {Irreversibility and Heat Generation in the Computing Process},
  journal = {IBM J. Res. Dev.},
  volume  = {5},
  pages   = {183--191},
  year    = {1961},
  doi     = {10.1147/rd.53.0183}
}

@article{bennett1982,
  author  = {Bennett, Charles H.},
  title   = {The Thermodynamics of Computation---A Review},
  journal = {Int. J. Theor. Phys.},
  volume  = {21},
  pages   = {905--940},
  year    = {1982},
  doi     = {10.1007/BF02084158}
}

@incollection{zurek1986,
  author    = {Zurek, W. H.},
  title     = {Maxwell's Demon, Szilard's Engine and Quantum Measurements},
  booktitle = {Frontiers of Nonequilibrium Statistical Physics},
  editor    = {Moore, G. T. and Scully, M. O.},
  publisher = {Plenum Press},
  address   = {New York},
  pages     = {151--161},
  year      = {1986},
  doi       = {10.1007/978-1-4613-2219-9_12}
}

@article{toyabe2010,
  author  = {Toyabe, Shoichi and Sagawa, Takahiro and Ueda, Masahito and Muneyuki, Eiro and Sano, Masaki},
  title   = {Experimental demonstration of information-to-energy conversion and validation of the generalized Jarzynski equality},
  journal = {Nat. Phys.},
  volume  = {6},
  pages   = {988--992},
  year    = {2010},
  doi     = {10.1038/nphys1821}
}

@article{berut2012,
  author  = {B{\'e}rut, Antoine and Arakelyan, Artak and Petrosyan, Artyom and Ciliberto, Sergio and Dillenschneider, Raoul and Lutz, Eric},
  title   = {Experimental verification of Landauer’s principle linking information and thermodynamics},
  journal = {Nature},
  volume  = {483},
  pages   = {187--189},
  year    = {2012},
  doi     = {10.1038/nature10872}
}

@article{horodecki2013limits,
  author  = {Horodecki, Micha{\l} and Oppenheim, Jonathan},
  title   = {Fundamental limitations for quantum and nanoscale thermodynamics},
  journal = {Nat. Commun.},
  volume  = {4},
  pages   = {2059},
  year    = {2013},
  doi     = {10.1038/ncomms3059}
}

@article{brandao2013resource,
  author  = {Brand\~{a}o, F. G. S. L. and Horodecki, Micha{\l} and Oppenheim, Jonathan and Renes, Joseph M. and Spekkens, Robert W.},
  title   = {The Resource Theory of Quantum States Out of Thermal Equilibrium},
  journal = {Phys. Rev. Lett.},
  volume  = {111},
  pages   = {250404},
  year    = {2013},
  doi     = {10.1103/PhysRevLett.111.250404}
}

@article{brandao2015secondlaws,
  author  = {Brand\~{a}o, F. G. S. L. and Horodecki, Micha{\l} and Ng, N. H. Y. and Oppenheim, Jonathan and Wehner, Stephanie},
  title   = {The Second Laws of Quantum Thermodynamics},
  journal = {Proc. Natl. Acad. Sci. U.S.A.},
  volume  = {112},
  pages   = {3275--3279},
  year    = {2015},
  doi     = {10.1073/pnas.1411728112}
}

@article{egap1,
  title = {Extractable Work from Correlations},
  author = {Perarnau-Llobet, Mart\'{\i} and Hovhannisyan, Karen V. and Huber, Marcus and Skrzypczyk, Paul and Brunner, Nicolas and Ac\'{\i}n, Antonio},
  journal = {Phys. Rev. X},
  volume = {5},
  issue = {4},
  pages = {041011},
  numpages = {14},
  year = {2015},
  month = {Oct},
  publisher = {American Physical Society},
  doi = {10.1103/PhysRevX.5.041011},
  url = {https://link.aps.org/doi/10.1103/PhysRevX.5.041011}
}

@article{egap2,
  title = {Bound on ergotropic gap for bipartite separable states},
  author = {Alimuddin, Mir and Guha, Tamal and Parashar, Preeti},
  journal = {Phys. Rev. A},
  volume = {99},
  issue = {5},
  pages = {052320},
  numpages = {17},
  year = {2019},
  month = {May},
  publisher = {American Physical Society},
  doi = {10.1103/PhysRevA.99.052320},
  url = {https://link.aps.org/doi/10.1103/PhysRevA.99.052320}
}

@article{comrmp,
  title = {Continuous-variable quantum communication},
  author = {Usenko, Vladyslav C. and Acín, Antonio and Alléaume, Romain and Andersen, Ulrik L. and Diamanti, Eleni and Gehring, Tobias and Hajomer, Adnan A. E. and Kanitschar, Florian and Pacher, Christoph and Pirandola, Stefano and Pruneri, Valerio},
  journal = {Rev. Mod. Phys.},
  pages = {--},
  year = {2025},
  month = {Nov},
  publisher = {American Physical Society},
  doi = {10.1103/mgj7-t6d3},
  url = {https://link.aps.org/doi/10.1103/mgj7-t6d3}
}

@article{Polo2025,
  title = {Ergotropic Characterization of Continuous-Variable Entanglement},
  author = {Polo-Rodr\'{\i}guez, Beatriz and Centrone, Federico and Adesso, Gerardo and Alimuddin, Mir},
  journal = {Phys. Rev. Lett.},
  volume = {136},
  issue = {5},
  pages = {050201},
  numpages = {7},
  year = {2026},
  month = {Feb},
  publisher = {American Physical Society},
  doi = {10.1103/43jm-qkz2},
  url = {https://link.aps.org/doi/10.1103/43jm-qkz2}
}

@article{Botero03,
  title = {Modewise entanglement of Gaussian states},
  author = {Botero, Alonso and Reznik, Benni},
  journal = {Phys. Rev. A},
  volume = {67},
  issue = {5},
  pages = {052311},
  numpages = {5},
  year = {2003},
  month = {May},
  publisher = {American Physical Society},
  doi = {10.1103/PhysRevA.67.052311},
  url = {https://link.aps.org/doi/10.1103/PhysRevA.67.052311}
}

@article{Fiur02,
  title = {Gaussian Transformations and Distillation of Entangled Gaussian States},
  author = {Fiur\'a\ifmmode \check{s}\else \v{s}\fi{}ek, Jarom\'{\i}r},
  journal = {Phys. Rev. Lett.},
  volume = {89},
  issue = {13},
  pages = {137904},
  numpages = {4},
  year = {2002},
  month = {Sep},
  publisher = {American Physical Society},
  doi = {10.1103/PhysRevLett.89.137904},
  url = {https://link.aps.org/doi/10.1103/PhysRevLett.89.137904}
}

@article{Puliyil22,
  title = {Thermodynamic Signatures of Genuinely Multipartite Entanglement},
  author = {Puliyil, Samgeeth and Banik, Manik and Alimuddin, Mir},
  journal = {Phys. Rev. Lett.},
  volume = {129},
  issue = {7},
  pages = {070601},
  numpages = {7},
  year = {2022},
  month = {Aug},
  publisher = {American Physical Society},
  doi = {10.1103/PhysRevLett.129.070601},
  url = {https://link.aps.org/doi/10.1103/PhysRevLett.129.070601}
}

@misc{Giedke03,
      title={Entanglement transformations of pure Gaussian states}, 
      author={G. Giedke and J. Eisert and J. I. Cirac and M. B. Plenio},
      year={2003},
      eprint={quant-ph/0301038},
      archivePrefix={arXiv},
      primaryClass={quant-ph},
      url={https://arxiv.org/abs/quant-ph/0301038}, 
}

@article{qc1,
  title = {Concentrating partial entanglement by local operations},
  author = {Bennett, Charles H. and Bernstein, Herbert J. and Popescu, Sandu and Schumacher, Benjamin},
  journal = {Phys. Rev. A},
  volume = {53},
  issue = {4},
  pages = {2046--2052},
  numpages = {0},
  year = {1996},
  month = {Apr},
  publisher = {American Physical Society},
  doi = {10.1103/PhysRevA.53.2046},
  url = {https://link.aps.org/doi/10.1103/PhysRevA.53.2046}
}

@article{qc2,
  title = {Quantum entanglement},
  author = {Horodecki, Ryszard and Horodecki, Pawe\l{} and Horodecki, Micha\l{} and Horodecki, Karol},
  journal = {Rev. Mod. Phys.},
  volume = {81},
  issue = {2},
  pages = {865--942},
  numpages = {0},
  year = {2009},
  month = {Jun},
  publisher = {American Physical Society},
  doi = {10.1103/RevModPhys.81.865},
  url = {https://link.aps.org/doi/10.1103/RevModPhys.81.865}
}

@article{coma,
  title = {Significance of fidelity deviation in continuous-variable teleportation},
  author = {Patra, Ayan and Gupta, Rivu and Roy, Saptarshi and Sen(De), Aditi},
  journal = {Phys. Rev. A},
  volume = {106},
  issue = {2},
  pages = {022433},
  numpages = {15},
  year = {2022},
  month = {Aug},
  publisher = {American Physical Society},
  doi = {10.1103/PhysRevA.106.022433},
  url = {https://link.aps.org/doi/10.1103/PhysRevA.106.022433}
}

@article{comb,
  title = {Quantum dense coding network using multimode squeezed states of light},
  author = {Patra, Ayan and Gupta, Rivu and Roy, Saptarshi and Das, Tamoghna and Sen(De), Aditi},
  journal = {Phys. Rev. A},
  volume = {106},
  issue = {5},
  pages = {052607},
  numpages = {10},
  year = {2022},
  month = {Nov},
  publisher = {American Physical Society},
  doi = {10.1103/PhysRevA.106.052607},
  url = {https://link.aps.org/doi/10.1103/PhysRevA.106.052607}
}

@article{good,
  title = {Limits for Entanglement Measures},
  author = {Horodecki, Micha\l{} and Horodecki, Pawe\l{} and Horodecki, Ryszard},
  journal = {Phys. Rev. Lett.},
  volume = {84},
  issue = {9},
  pages = {2014--2017},
  numpages = {0},
  year = {2000},
  month = {Feb},
  publisher = {American Physical Society},
  doi = {10.1103/PhysRevLett.84.2014},
  url = {https://link.aps.org/doi/10.1103/PhysRevLett.84.2014}
}

@article{Adesso2014,
  title = {Continuous Variable Quantum Information: Gaussian States and Beyond},
  volume = {21},
  ISSN = {1793-7191},
  url = {http://dx.doi.org/10.1142/S1230161214400010},
  DOI = {10.1142/s1230161214400010},
  number = {01n02},
  journal = {Open Syst. Inf. Dyn.},
  publisher = {World Scientific Pub Co Pte Lt},
  author = {Adesso,  Gerardo and Ragy,  Sammy and Lee,  Antony R.},
  year = {2014},
  month = mar,
  pages = {1440001}
}

@incollection{abramowitz1972stirling,
  author    = {Abramowitz, Milton and Stegun, Irene A.},
  title     = {Stirling Numbers of the Second Kind},
  booktitle = {Handbook of Mathematical Functions with Formulas, Graphs, and Mathematical Tables},
  publisher = {Dover Publications},
  address   = {New York},
  year      = {1972},
  pages     = {824--825},
  edition   = {9th printing},
  note      = {Section 24.1.4}
}

@article{strong,
  title = {Measuring Gaussian Quantum Information and Correlations Using the R\'enyi Entropy of Order 2},
  author = {Adesso, Gerardo and Girolami, Davide and Serafini, Alessio},
  journal = {Phys. Rev. Lett.},
  volume = {109},
  issue = {19},
  pages = {190502},
  numpages = {6},
  year = {2012},
  month = {Nov},
  publisher = {American Physical Society},
  doi = {10.1103/PhysRevLett.109.190502},
  url = {https://link.aps.org/doi/10.1103/PhysRevLett.109.190502}
}

@article{ggm1,
  title = {Channel capacities versus entanglement measures in multiparty quantum states},
  author = {Sen(De), Aditi and Sen, Ujjwal},
  journal = {Phys. Rev. A},
  volume = {81},
  issue = {1},
  pages = {012308},
  numpages = {6},
  year = {2010},
  month = {Jan},
  publisher = {American Physical Society},
  doi = {10.1103/PhysRevA.81.012308},
  url = {https://link.aps.org/doi/10.1103/PhysRevA.81.012308}
}

@article{ggm2,
  title = {Generalized geometric measure of entanglement for multiparty mixed states},
  author = {Das, Tamoghna and Roy, Sudipto Singha and Bagchi, Shrobona and Misra, Avijit and Sen(De), Aditi and Sen, Ujjwal},
  journal = {Phys. Rev. A},
  volume = {94},
  issue = {2},
  pages = {022336},
  numpages = {8},
  year = {2016},
  month = {Aug},
  publisher = {American Physical Society},
  doi = {10.1103/PhysRevA.94.022336},
  url = {https://link.aps.org/doi/10.1103/PhysRevA.94.022336}
}

@article{Spedalieri2014,
   author = {Gaetana Spedalieri and Christian Weedbrook and Stefano Pirandola},
   doi = {10.1088/1751-8113/47/32/329501},
   issn = {1751-8113},
   issue = {32},
   journal = {Journal of Physics A: Mathematical and Theoretical},
   month = {8},
   pages = {329501},
   title = {Corrigendum: A limit formula for the quantum fidelity (2013 J. Phys. A: Math. Theor. 46 025304)},
   volume = {47},
   year = {2014}
}

@article{Saptarshi20,
  title = {Computable genuine multimode entanglement measure: Gaussian versus non-Gaussian},
  author = {Roy, Saptarshi and Das, Tamoghna and Sen(De), Aditi},
  journal = {Phys. Rev. A},
  volume = {102},
  issue = {1},
  pages = {012421},
  numpages = {13},
  year = {2020},
  month = {Jul},
  publisher = {American Physical Society},
  doi = {10.1103/PhysRevA.102.012421},
  url = {https://link.aps.org/doi/10.1103/PhysRevA.102.012421}
}

@article{Saptarshi25,
  title = {Typical behavior of genuine multimode entanglement of pure Gaussian states},
  author = {Roy, Saptarshi},
  journal = {Phys. Rev. A},
  volume = {111},
  issue = {6},
  pages = {062414},
  numpages = {8},
  year = {2025},
  month = {Jun},
  publisher = {American Physical Society},
  doi = {10.1103/8bbb-jhzh},
  url = {https://link.aps.org/doi/10.1103/8bbb-jhzh}
}

@book{BraunsteinPati2003,
  editor    = {S. L. Braunstein and A. K. Pati},
  title     = {Quantum Continuous Variables: A Primer of Theoretical Methods},
  publisher = {Springer},
  address   = {Dordrecht},
  year      = {2003},
  doi       = {10.1007/978-94-017-0343-4}
}

@book{CerfLeuchsPolzik2007,
  editor    = {N. J. Cerf and G. Leuchs and E. S. Polzik},
  title     = {Quantum Information with Continuous Variables of Atoms and Light},
  publisher = {Imperial College Press},
  address   = {London},
  year      = {2007},
  doi       = {10.1142/p489}
}

@article{Huber2015,
  author  = {Marcus Huber and Martí Perarnau-Llobet and Julio I. de Vicente},
  title   = {Thermodynamic cost of creating correlations},
  journal = {New J. Phys.},
  volume  = {17},
  pages   = {065008},
  year    = {2015},
  doi     = {10.1088/1367-2630/17/6/065008}
}

@article{Sapienza2019,
  author  = {G. Sapienza and M. Huber and J. Oppenheim and A. Winter},
  title   = {Thermodynamic signatures of quantum correlations},
  journal = {Nat. Commun.},
  volume  = {10},
  pages   = {2492},
  year    = {2019},
  doi     = {10.1038/s41467-019-10572-8}
}

@article{Allahverdyan2004,
  author  = {A. E. Allahverdyan and R. Balian and Theo M. Nieuwenhuizen},
  title   = {Maximal work extraction from finite quantum systems},
  journal = {Europhys. Lett.},
  volume  = {67},
  pages   = {565--571},
  year    = {2004},
  doi     = {10.1209/epl/i2004-10101-2}
}

@article{Cao2025,
  author  = {Y. Cao and others},
  title   = {Ergotropic gap and multipartite correlations},
  journal = {Phys. Lett. A},
  volume  = {559},
  pages   = {130908},
  year    = {2025},
  doi     = {10.1016/j.physleta.2025.130908}
}

@article{sorensen2001,
  title = {Entanglement and Extreme Spin Squeezing},
  author = {S\o{}rensen, Anders S. and M\o{}lmer, Klaus},
  journal = {Phys. Rev. Lett.},
  volume = {86},
  issue = {20},
  pages = {4431--4434},
  numpages = {0},
  year = {2001},
  month = {May},
  publisher = {American Physical Society},
  doi = {10.1103/PhysRevLett.86.4431},
  url = {https://link.aps.org/doi/10.1103/PhysRevLett.86.4431}
}

@misc{niu2024exp,
      title={Experimental investigation of coherent ergotropy in a single spin system}, 
      author={Zhibo Niu and Yang Wu and Yunhan Wang and Xing Rong and Jiangfeng Du},
      year={2024},
      eprint={2409.06249},
      archivePrefix={arXiv},
      primaryClass={quant-ph},
      url={https://arxiv.org/abs/2409.06249}, 
}

@article{Andolina2019,
  title = {Extractable Work, the Role of Correlations, and Asymptotic Freedom in Quantum Batteries},
  author = {Andolina, Gian Marcello and Keck, Maximilian and Mari, Andrea and Campisi, Michele and Giovannetti, Vittorio and Polini, Marco},
  journal = {Phys. Rev. Lett.},
  volume = {122},
  issue = {4},
  pages = {047702},
  numpages = {5},
  year = {2019},
  month = {Feb},
  publisher = {American Physical Society},
  doi = {10.1103/PhysRevLett.122.047702},
  url = {https://link.aps.org/doi/10.1103/PhysRevLett.122.047702}
}

@article{Rossini2020,
  title = {Quantum Advantage in the Charging Process of Sachdev-Ye-Kitaev Batteries},
  author = {Rossini, Davide and Andolina, Gian Marcello and Rosa, Dario and Carrega, Matteo and Polini, Marco},
  journal = {Phys. Rev. Lett.},
  volume = {125},
  issue = {23},
  pages = {236402},
  numpages = {6},
  year = {2020},
  month = {Dec},
  publisher = {American Physical Society},
  doi = {10.1103/PhysRevLett.125.236402},
  url = {https://link.aps.org/doi/10.1103/PhysRevLett.125.236402}
}

@article{NavidElyasi2025,
   author = {Seyed Navid Elyasi and Matteo A C Rossi and Marco G Genoni},
   doi = {10.1088/2058-9565/adae2d},
   issn = {2058-9565},
   issue = {2},
   journal = {Quantum Science and Technology},
   month = {4},
   pages = {025017},
   title = {Experimental simulation of daemonic work extraction in open quantum batteries on a digital quantum computer},
   volume = {10},
   year = {2025}
}

@incollection{EisertWolf2007,
  author       = {Jens Eisert and Michael M. Wolf},
  title        = {Gaussian Quantum Channels},
  booktitle    = {Quantum Information with Continuous Variables of Atoms and Light},
  editor       = {Nicolas J. Cerf and Gerd Leuchs and Eugene S. Polzik},
  publisher    = {Imperial College Press},
  address      = {London, UK},
  year         = {2007},
  pages        = {23--42},
  isbn         = {978-1-86094-741-9},
}

\widetext

\appendix

\section{Preliminaries}
\label{sec:pre}

In this section, we first review the phase-space description of Gaussian states and briefly discuss the entanglement measure employed to characterize genuine multimode correlations.

\subsection{Continuous variable formalism}
\label{sec:cv_form}

Let us consider a system of \(N\) bosonic modes described within the continuous-variable framework, whose total Hamiltonian is assumed to be a sum of independent mode Hamiltonians and can be written as
\begin{equation}
\hat{H} = \sum_{i=1}^{N} \hat{H}_i , 
\label{equ:hamiltonian}
\end{equation}
where each individual mode is governed by the general second order Hamiltonian
\begin{equation}
\hat{H}_i = \frac12 (\bm{\hat R_i}  - \bm{r_i})^T h (\bm{\hat R_i}  - \bm{r_i}).
\label{eq:single_mod_ham}
\end{equation}
Here $\hat{\bm R}_i=(\hat q_i,\hat p_i)^{T}$ denotes the vector of canonical quadrature operators of the $i$th mode, satisfying the commutation relations $[\hat q_k,\hat p_l]=i\delta_{kl}$, with $\bm r_i\in\mathbb{R}^2$ and $h > 0 \in \mathbb{R}^2\times \mathbb{R}^2$.
 For convenience, we collect the quadrature operators of all modes into a single vector, \(\hat{R}=(\hat{q}_1,\hat{p}_1,\hat{q}_2,\hat{p}_2,.....,\hat{q}_N,\hat{p}_N)^T\). In this representation, the canonical commutation relations take a compact matrix form, \([\hat{R}_k,\hat{R}_l]=i\Omega_{kl}\), where \(\Omega\) denotes the symplectic matrix associated with an \(N\) mode bosonic system and written as 
\begin{equation}
\Omega = \bigoplus_{k=1}^{N} \omega, \qquad
\omega =
\begin{pmatrix}
0 & 1 \\
-1 & 0
\end{pmatrix}.
\end{equation}
Within this phase space formulation, any Gaussian state \(\rho_{_N}\) is completely characterized by its first moment (displacement) vector $\bm{d}$ and its second moment, i.e., covariance matrix (CM) $\Sigma$ . These quantities are defined, respectively, as
\begin{eqnarray}
   \nonumber {d}_i &=& \langle \hat{R}_i\rangle  {,~\text{and}}\\
   \Sigma_{ij}&=& \langle \hat{R}_i \hat{R}_j + \hat{R}_j R_i\rangle - 2\langle \hat{R}_i \rangle \langle \hat{R}_j \rangle .
\end{eqnarray}
The CM must satisfy the bonafide uncertainty relation \(\Sigma+i\Omega \geq 0\). 
{\color{black} In this paper, we work with a special case of the Hamiltonian in Eq. \eqref{eq:single_mod_ham}. Specifically, we consider it to be Harmonic with $h = \mathbb{I}_2$ and $\bm{r}_i = 0$ for all modes $i$. Furthermore, since we are interested in correlations, without the loss of generality, we consider states that are displacement-free, i.e. $\bm{d} = \bm{0}.$ Therefore, the average energy of an $N-$mode state $\rho_{_N}$ with covariance matrix $\Sigma$ can be expressed as}
\begin{eqnarray}
E(\rho_{_N})&=&\Tr(\rho_{_{N}}H) = \frac{1}{4}\Tr \Sigma.
    \label{eq:energy}
\end{eqnarray}

{\color{black}
\subsection{Multimode Gaussian entanglement: Bipartite vs Multipartite}
\label{subsec:entmeasure}

A multimode pure quantum state can be entangled in qualitatively different ways. One may wish to consider its entanglement content across a given bipartition. For example, the bipartite entanglement content of a pure multimode state is characterized across a given bipartition, which for a Gaussian case reduces to the von-Neumann  entropy of the marginal state, where the marginals are defined from the considered bipartition.

A multimode pure state is $k$-separable if it is separable at least across $(k-1)$-bipartitions. \ayan{Note that a $k$-separable pure state is necessarily also $(k-1)$-separable, and similarly for all lower levels of separability.} Therefore, a quantum state is said to possess $k$-inseparable multipartite entanglement if it is not $k$-separable. \ayan{Furthermore, similar to the hierarchy of $k$-separability, a $k$-inseparable state is necessarily also $(k+1)$-inseparable. On the other hand, a multimode mixed state is said to be $k$-inseparable if there exists at least one pure-state decomposition in which at least one constituent pure state is $k$-inseparable, while the remaining pure states in that decomposition may be $(k+1)$-inseparable. Moreover, a $k$-inseparable mixed state cannot admit any decomposition containing a pure state with $(k-1)$-inseparable multipartite entanglement.} One of the standard methodologies to quantify the amount of $ k$-inseparable multipartite entanglement is to consider distance-based measures. In particular, the $ k$-inseparable entanglement content of a state $\rho$ can be expressed as the minimum distance of that state from the set of $k$-separable states:
\begin{eqnarray}
    \mathcal{E}_k(\rho) = \min_{\sigma \in k-\text{sep}} D(\rho,\sigma),
    \label{eq:k_distance_measure}
\end{eqnarray}
where $D$ is some distance measure between two quantum states. In this work, we intend to compute the $ k$-inseparable multipartite entanglement content of a multimode \ayan{\emph{pure}} Gaussian state from a thermodynamic perspective. For that, we first list down the minimal requirements a measure of $ k$-inseparable multimode Gaussian entanglement must satisfy.

\begin{definition}
  A non-negative quantity $\mathcal{M}_k$ qualifies as a valid $k$-inseparable multipartite entanglement measure of Gaussian state if it satisfies the following minimal requirements:
  \begin{enumerate}
      \item \emph{Vanishing for $k$-separable states.} $\mathcal{M}_k(\rho_{k-{\tt sep}})=0$ for any $k$-separable state $\rho_{k-{\tt sep}}.$ 

      \item \emph{Monotonicity under Gaussian local operations and classical communication (GLOCC), i.e., \(\mathcal{M}_k\) does not increase under deterministic GLOCC. }
  \end{enumerate}
  \label{def:requirements}
\end{definition}
Apart from these essential requirements, the measure may additionally satisfy many ``good" \cite{good} properties like faithfulness, strong-monotonicity, additivity, convexity, or may offer advantages like ease of computability \cite{good, qc2}.

\section{Proof of Lemma \ref{le:lemma_1}}
\label{app:coro_1}

Let us consider an \(N\)-mode pure Gaussian state partitioned into two subsystems, \(\mathcal{A}_1\) and \(\mathcal{A}_2\), forming a bipartition. A subsystem \(\mathcal{A}_1\) contains \(m_1\) modes and \(\mathcal{A}_2\) contains \(m_2\) modes, where \(m_1\leq m_2\) and \(m_1 + m_2 = N\). The corresponding \(2\)-local ergotropic gap, as obtained in Eq.~(\ref{eq:general_erg_gap_fixed}), is
\begin{eqnarray}
\delta^{\mathcal{A}_1|\mathcal{A}_2}_{2-\tt loc}(\rho_{_N})=\frac{1}{2}
\big( \sum_{i=1}^{m_1} \nu_i^{\mathcal{A}_1} + \sum_{i=1}^{m_2} \nu_i^{\mathcal{A}_2}- \sum_{k=1}^{N} \mu_k\big),
    \label{eq:1}
\end{eqnarray}
 where, \(\{\mu_k\}_{k=1}^{N}\) denotes the symplectic eigenvalues of \(\Sigma^{^\text{g,p}}\) as discussed in Eq.~(\ref{eq:global_eg_f}) and \(\{\nu_i^{\mathcal{A}_j}\}_{i=1}^{m_j}\) is the symplectic eigenvalues of $\Sigma_{\mathcal{A}_j}$ appeared in Eq.~\eqref{eq:gen_cova}.
 In case of \(N\)-mode pure Gaussian states, $\mu_k=1~\forall k$, while
the symplectic spectra of \(\Sigma_{\mathcal{A}_1} \text{and}~\Sigma_{\mathcal{A}_2}\) satisfy \(\nu_i^{\mathcal{A}_1} = \nu_{j=i}^{\mathcal{A}_2} ~ \forall i\in\{1,\ldots, m_1\},~ \text{and}~\nu_j^{\mathcal{A}_2}=1, \forall j\in\{m_1+1,\ldots,m_2\}\). Therefore, for \(N\)-mode pure Gaussian states, Eq.~(\ref{eq:1}) reduces to 
\begin{eqnarray}
\delta^{\mathcal{A}_1|\mathcal{A}_2}_{2\text{-loc}}(\rho_{_N})&=& \frac{1}{2}
\bigg[\sum_{i=1}^{m_1}\nu_i^{\mathcal{A}_1}+ \sum_{i=1}^{m_2}\nu_i^{\mathcal{A}_2} -N\bigg] \nonumber \\
&=&\frac{1}{2}
\bigg[\sum_{i=1}^{m_1}\nu_i^{\mathcal{A}_1}+ \bigg(\sum_{i=1}^{m_1}\nu_i^{\mathcal{A}_1} + (m_2-m_1)\bigg) -(m_1+m_2)\bigg] \nonumber \\
&=&\sum_{i=1}^{m_1} (\nu_i^{\mathcal{A}_1} - 1).
\label{equ:2}
\end{eqnarray}
This completes the proof.

\subsection{Proof of Theorem \ref{th:th_1}}
\label{app:theory_1}

To be a valid entanglement measure, the ergotropic gap, \(\delta^{\mathcal{A}_1|\mathcal{A}_2}_{2\text{-loc}}(\rho_{_N})\) should satisfy the following properties. They are as follows
\label{app:theory_1_main}

\textbf{(i) Positivity:}
The \(2\)-local ergotropic gap for \(N\)-modes pure state is 
\begin{eqnarray}
\delta^{\mathcal{A}_1|\mathcal{A}_2}_{2\text{-loc}}(\rho_{_N}) 
&=&
\sum_{i=1}^{m_1}\big(\nu_i^{\mathcal{A}_1}-1\big),
\label{eq:3}
\end{eqnarray}
 where, \(\nu_i^{\mathcal{A}_1} \geq 1, \forall i\). Therefore, each term is non-negative, which immediately implies
\begin{eqnarray}   
\delta^{\mathcal{A}_1|\mathcal{A}_2}_{2\text{-loc}}(\rho_{_N}) \geq 0 .
\label{eq:positivity_bi}
\end{eqnarray}

\textbf{(ii) Vanishes for product states:} {\color{black} If the pure state $\rho_{_N}$ is bi-separable (product) across the $\mathcal{A}_1|\mathcal{A}_2$ partition, then the reduced state of subsystem $\mathcal{A}_1$ is pure, implying $\nu_i^{\mathcal{A}_1}=1$ for all $i$. Consequently, using Eq.~(\ref{eq:3}), we obtain $\delta^{\mathcal{A}_1|\mathcal{A}_2}_{2\text{-loc}}(\rho_{_N})=0$.
 }

\textbf{(iii) Monotonicity:}
\(N\)-modes pure Gaussian state bipartitioned into \(\mathcal{A}_1\) and \(\mathcal{A}_2\), containing \(m_1\) and \(m_2\) modes 
can be brought--via local Gaussian unitaries--into a standard form consisting of a tensor product of \(m_1\) two-mode squeezed states and \((m_2-m_1)\) vacuum modes \cite{Botero03}.
 Therefore, \(N\)-modes Gaussian pure state $\ket{\psi}_{_N}$, with density matrix $\rho_{_N}=\ket{\psi}_{_N}\bra{\psi}_{_N}$, admits the decomposition as 
\begin{eqnarray}
\ket{\psi}_{_N}=
\bigotimes_{i=1}^{m_1} \ket{\psi}_{i,i}
~\otimes~
\bigotimes_{j=m_1+1}^{m_2} \ket{0}_{j},
\label{eq:any_pure_state}
\end{eqnarray}
where \(\ket{\psi}_{i,i}\) are two-mode squeezed vacuum states characterized by squeezing parameters $r_i$ and CM as,
\begin{eqnarray}
S_{i,i}(r_i)
&=&
\begin{pmatrix}
\cosh 2r_i & 0            & \sinh 2r_i & 0 \\
0          & \cosh 2r_i   & 0           & -\sinh 2r_i \\
\sinh 2r_i & 0            & \cosh 2r_i  & 0 \\
0          & -\sinh 2r_i  & 0           & \cosh 2r_i
\end{pmatrix}.
\label{eq:squ_mat}
\end{eqnarray}
with $r_i \geq 0$  and the symplectic eigenvalue of the CM of \(i\)-th mode obtain from Eq.~(\ref{eq:squ_mat}), is \(\nu_{i}^{\mathcal{A}_1} = \cosh(2r_i) \), which is convex and monotonically increasing function for \(r_i\geq 0\). Similarly, consider another pure \(N\)-modes Gaussian states with respect to a bipartition $\mathcal A_1'$ and $\mathcal A_2'$, containing $m_1$ and $m_2$ modes,
can be written in the standard form
\begin{eqnarray}
\ket{\psi'}_{_N}=
\bigotimes_{i=1}^{m_1}\ket{\psi'}_{i', i'}
\otimes
\bigotimes_{j=m_1+1}^{m_2}\ket{0}_{j'},
\end{eqnarray}
where $\ket{\psi'}_{i', i'}$ denotes a two-mode squeezed vacuum state same form as Eq.~(\ref{eq:squ_mat}),
with the squeezing parameter $r_i'$.


{ Also, it is known that a pure bipartite Gaussian state $\ket{\psi}_{_N}$ can be transformed into another pure bipartite Gaussian state $\ket{\psi'}_{_N}$ by Gaussian local operations and classical communication (GLOCC) if and only if the corresponding vectors of squeezing parameters, $\vec r=(r_1,\ldots,r_{m_1})$ and $\vec r\,'=(r'_1,\ldots,r'_{m_1})$ satisfy the weak majorization relation $\vec r \succ \vec r\,'$ \cite{Giedke03}. Now, if $\vec r$ weakly majorizes $\vec r\,'$, then for any convex and increasing function $f$, one has $\sum_{i=1}^{m_1} f(r_i)\ge \sum_{i=1}^{m_1} f(r'_i)$. In the domain $r_i\ge0$, the function $\nu_i^{\mathcal A_1}=\cosh(2r_i)$ is convex and monotonically increasing in $r_i$. Therefore,
\begin{eqnarray}
\sum_{i=1}^{m_1}\cosh(2r_i) \ge \sum_{i=1}^{m_1}\cosh(2r'_i).
\label{eq:karamata_ineq}
\end{eqnarray}
 }

Equivalently, in terms of symplectic eigenvalues, this yields
\begin{eqnarray}
\sum_{i=1}^{m_1} \nu_{i}^{{\mathcal{A}_1}} ~ \geq ~ \sum_{i=1}^{m_1} {\nu_{i}}^{{\mathcal{A}'_1}} ,
\end{eqnarray}
where
$\{\nu_{i}^{{\mathcal{A}_1}}\}_{i=1}^{m_1}$ and $\{{\nu_{i}}^{{\mathcal{A}'_1}}\}_{i=1}^{m_1}$ denote the symplectic eigenvalues associated with $\ket{\psi}_{_N}$ and $\ket{\psi'}_{_N}$, respectively. Thus, for the \(2\)-local ergotropic gap for a pure bipartite Gaussian state in Eq.~(\ref{eq:gap_pure}), it directly follows that
\begin{eqnarray}
\delta^{\mathcal{A}'_1|\mathcal{A}'_2}_{2\text{-loc}}(\rho_{_N}')&=&
\sum_{i=1}^{m_1} ({\nu_i}^{{\mathcal{A'}_1}} - 1)
\;\leq\;
\sum_{i=1}^{m_1} (\nu_i^{{\mathcal{A}_1}} - 1)
=
\delta^{\mathcal{A}_1|\mathcal{A}_2}_{2\text{-loc}}(\rho_{_N}).\nonumber\\
\label{eq:monotonicity_proof}
\end{eqnarray}

Therefore, \(2\)-local ergotropic gap, \(\delta^{\mathcal{A}_1|\mathcal{A}_2}_{2\text{-loc}}(\rho_{_N})\) is monotonic under GLOCC. Hence, it is a valid measure of bipartite pure state entanglement in the continuous variable Gaussian framework.

\subsection{Proof of Corollary~\ref{co:corollary2}}
\label{app:theorem_1_additional}
\textbf{(i) Faithfulness:}
If the state \(\ket{\psi}_{_N}\) is separable across the bipartition \(\mathcal{A}_1|\mathcal{A}_2\), with \(m_1\) and \(m_2\) modes respectively, then it can be written as \(\ket{\psi}_{_N} = \ket{\psi}_{m_1} \otimes \ket{\psi}_{m_2}\). Consequently, the reduced states corresponding to this bipartition are pure, implying that all symplectic eigenvalues of their covariance matrices are unity, i.e., \(\nu_i^{\mathcal{A}_1} = 1,\ \forall i\). Hence, \(\delta^{\mathcal{A}_1|\mathcal{A}_2}_{2\text{-loc}}(\rho_{_N}) = 0.\)

Conversely, if the \(2\)-local ergotropic gap vanishes, then \(\nu_i^{\mathcal{A}_1} - 1 = 0,\ \forall i\). This implies that the reduced states are pure, and therefore the global pure state \(\ket{\psi}_{_N}\) is separable across the \(\mathcal{A}_1:\mathcal{A}_2\) bipartition.

\textbf{(ii) Strong monotonicity:}
We consider a GLOCC protocol that maps a pure bipartite Gaussian state $\ket{\psi}_{_N}$ onto an ensemble
$\{p_k,\ket{\psi^{(k)}}_{_N}\}$.
For each outcome $k$, the input state, $\ket{\psi}_{_N}$ is transformed into $\ket{\psi^{(k)}}_{_N}$ by any deterministic GLOCC~\cite{Fiur02}.
Since the \(2\)-local ergotropic gap is nonincreasing under deterministic GLOCC, it follows that,
\begin{eqnarray}
\delta^{\mathcal{A}_1|\mathcal{A}_2}_{2\text{-loc}}(\rho_{_N})
\ge \delta^{\mathcal{A}_1|\mathcal{A}_2}_{2\text{-loc}}(\rho_{_N}^{k}),
\qquad \forall\, k .
\label{eq:st_mon_prob}
\end{eqnarray}
Multiplying by the probabilities $p_k$ and summing over all outcomes yields
\begin{eqnarray}
\sum_k p_k \,
\delta^{\mathcal{A}_1|\mathcal{A}_2}_{2\text{-loc}}(\rho_{_N}^{k})
\le
\delta^{\mathcal{A}_1|\mathcal{A}_2}_{2\text{-loc}}(\rho_{_N}),
\label{eq:strong_all_monotone}
\end{eqnarray}
which establishes the strong monotonicity of the \(2\)-local ergotropic gap under GLOCC for pure \(N\)-mode Gaussian states in partitioned \(\mathcal{A}_1|\mathcal{A}_2\).

\textbf{(iii) Additivity:}
Let us Consider two independent pure \(N(N')\)-mode Gaussian states,
$\ket{\psi}_{_N}(\ket{\psi}_{_{N'}})$ and
  split into two partition  of  containing $m_1(m_1')$ and \(m_2(m_2')\) modes with \(m_1\leq m_2(m_1'\leq m_2')\), which label by \(\mathcal{A}_1(\mathcal{A}_1')\) and \(\mathcal{A}_2(\mathcal{A}_2')\) . To calculate  the Gaussian ergotropic gap of composit system, first we group the hole system into two partition which are \(\tilde{\mathcal{A}}_1=\mathcal{A}_1\mathcal{A}_1'\) and \(\tilde{\mathcal{A}}_2=\mathcal{A}_2\mathcal{A}_2'\)  and satisfies,
\begin{eqnarray}
\delta^{\tilde{\mathcal{A}}_1|\tilde{\mathcal{A}}_2}_{2\text{-loc}}(\rho_{_N}\otimes\rho_{_{N'}})&=&\delta^{\mathcal{A}_1|\mathcal{A}_2}_{2\text{-loc}}(\rho_{_N})
+\delta^{\mathcal{A}_1'|\mathcal{A}_2'}_{2\text{-loc}}(\rho_{_{N'}}).\nonumber\\
\label{eq:additivity_biparty}
\end{eqnarray}
To prove this,
the state of subsystem $\tilde{\mathcal{A}}_1$ therefore has
$m_1+m_1'$ symplectic eigenvalues, given by 
$\{\{\nu_i^{A_1}\}_{i=1}^{m_1} \cup \{\nu_i^{A_1'}\}_{i=1}^{m_1'}\}$. Therefore, from Eq.~(\ref{eq:gap_pure}), we obtain
\begin{eqnarray}
\nonumber \delta^{\tilde{\mathcal{A}}_1|\tilde{\mathcal{A}}_2}_{2\text{-loc}}(\rho_{_N}\otimes\rho_{_{N'}}) &=&
\sum_{i=1}^{m_1} (\nu_i^{\mathcal{A}_1} - 1)
+ \sum_{i=1}^{m_1'} (\nu_i^{\mathcal{A}_1'} - 1),\\
&=&
\delta^{\mathcal{A}_1|\mathcal{A}_2}_{2\text{-loc}}(\rho_{_N})
+\delta^{\mathcal{A}_1'|\mathcal{A}_2'}_{2\text{-loc}}(\rho_{_{N'}}).
\end{eqnarray}
Hence, additivity is proof.

\section{Proof of Theorem \ref{th:renyi-2upperbound}}
\label{app:th2}
Since $\rho_{_N}$ is pure, the Rényi-$2$ entanglement entropy equals the Rényi-$2$ entropy
of the reduced state in the $\mathcal{A}_1$, which contains \(m_1\) modes, denoted by $\rho_{m_1}$,
\begin{equation}
S_2(\mathcal{A}_1{:}\mathcal{A}_2)
= S_2(\rho_{_{m_1}}) = -\log \text{Tr}(\rho_{m_1}^2)
= \frac{1}{2}\log \det \rho_{_{m_1}}
= \sum_{j=1}^{m_1} \log \nu_j .
\label{eq:renyi2ee}
\end{equation}

For a pure Gaussian state, the $2$-local ergotropic gap across the bipartition
$\mathcal{A}_1|\mathcal{A}_2$ from Eq. \eqref{eq:gap_pure} is given by
\begin{eqnarray}
\delta_{2\text{-}{\tt loc}}^{\mathcal{A}_1|\mathcal{A}_2}(\rho_{_N})
= \sum_{j=1}^{m_1} (\nu_j - 1),
\label{eq:any_two_party_ergotropy_gap}
\end{eqnarray}
where each $\nu_j \ge 1$, with equality if and only if the corresponding mode
is unentangled.

Using the elementary inequality $\log x \le x - 1$ for all $x \ge 1$, with equality
iff $x=1$, we obtain
\begin{equation}
\sum_{j=1}^{m_1} \log \nu_j
\;\le\;
\sum_{j=1}^{m_1} (\nu_j - 1)
= \delta_{2\text{-}{\tt loc}}^{\mathcal{A}_1|\mathcal{A}_2}(\rho_{_N}).
\end{equation}

Hence,
\begin{equation}
S_2(\mathcal{A}_1{:}\mathcal{A}_2)
\;\le\;
\delta_{2\text{-}{\tt loc}}^{\mathcal{A}_1|\mathcal{A}_2}(\rho_{_N}),
\end{equation}
with equality if and only if $\nu_j = 1$ for all $j$, i.e., when 
$\rho_{_N}$ is separable (product) across $\mathcal{A}_1|\mathcal{A}_2$.

The second half of the proof establishes the impossibility of any functional independence between $S_2(\mathcal{A}_1{:}\mathcal{A}_2)$ and $\delta_{2\text{-}{\tt loc}}^{\mathcal{A}_1|\mathcal{A}_2}(\rho_{_N})$. We prove this by contradiction. We first assume there exists a function $F:\mathbb{R}_{\ge 0}\to\mathbb{R}_{\ge 0}$ such that
\begin{eqnarray}
    \delta^{\mathcal{A}_1|\mathcal{A}_2}_{2-\mathrm{loc}}(\rho_{_N})
= F\!\left(S_2(\mathcal{A}_1:\mathcal{A}_2)\right)
\label{eq:functionaldep}
\end{eqnarray}
for all pure Gaussian states $\rho_{_N}$. Then any two
pure Gaussian states with the same Rényi-$2$ entanglement entropy across the bipartition $\mathcal{A}_1|\mathcal{A}_2$ must necessarily
have the same $2$-local ergotropic gap. For a pure Gaussian state, both quantities depend on the symplectic spectrum
$\boldsymbol{\nu}=(\nu_1,\ldots,\nu_{m_1})$ of $\rho_{_{m_1}}$ following Eqs. \eqref{eq:gap_pure} and \eqref{eq:renyi2ee}. Now we can re-express
\begin{equation}
S_2(\rho_{_{m_1}}) = \sum_{j=1}^{m_1} \log \nu_j
= \log\!\Big(\prod_{j=1}^{m_1} \nu_j\Big),
\end{equation}

We now focus on the minimal nontrivial case $m_1=2$. Fix a constant $C>1$ and consider all spectra
$(\nu_1,\nu_2)$ such that
\begin{equation}
\nu_1 \nu_2 = C .
\label{eq:constraint}
\end{equation}
For all such pairs one has
\begin{equation}
S_2(\rho_{_{m_1}}) = \log \nu_1 + \log \nu_2 = \log C ,
\end{equation}
so the Rényi-$2$ entanglement entropy is the same for all states satisfying \eqref{eq:constraint}. However, the
corresponding 2-local ergotropic gap reads
\begin{equation}
\delta_{2\text{-}{\tt loc}}^{\mathcal{A}_1|\mathcal{A}_2}
= (\nu_1 - 1) + (\nu_2 - 1)
= \nu_1 + \nu_2 - 2 .
\label{eq:gap2}
\end{equation}
Under the constraint \eqref{eq:constraint}, the sum $\nu_1 + \nu_2$ is not fixed: for instance, the symmetric
choice $\nu_1=\nu_2=\sqrt{C}$ and the asymmetric choice $\nu_1=C,\;\nu_2=1$ both satisfy
\eqref{eq:constraint}, but give
\begin{equation}
\delta_{2\text{-}{\tt loc}}^{\mathcal{A}_1|\mathcal{A}_2,\mathrm{sym}} = 2(\sqrt{C}-1),
\qquad
\delta_{2\text{-}{\tt loc}}^{\mathcal{A}_1|\mathcal{A}_2,\mathrm{asym}} = C-1 .
\end{equation}
For any $C>1$ these two values are different. Hence we have exhibited two pure Gaussian states with the same
Rényi-$2$ entanglement entropy but different 2-local ergotropic gaps, thereby arriving at a contradiction to the assumption in Eq. \eqref{eq:functionaldep}.  Therefore, for $m_1 \ge 2$,
$S_2(\rho_{m_1})$ and $\delta_{2\text{-}{\tt loc}}^{\mathcal{A}_1|\mathcal{A}_2}(\rho_{_N})$ are functionally independent. Hence the proof.

Moreover, it is interesting to ask whether there are special cases in which the $2$-local ergotropic gap and the Rényi-$2$ entanglement entropy do share a functional relationship. 

\begin{corollary}[Functional dependence in special regimes]
 For   a pure Gaussian state $\rho_{_N}$ on $N=m_1+m_2$ modes, bipartitioned as $\mathcal{A}_1|\mathcal{A}_2$ with $m_1 \le m_2$, the $2$-local ergotropic gap
 $\delta_{2\text{-}{\tt loc}}^{\mathcal{A}_1|\mathcal{A}_2}(\rho_{_N})$ and the Rényi-$2$ entanglement entropy $S_2(\mathcal{A}_1{:}\mathcal{A}_2)$ are (exactly or approximately)
 functionally dependent in the following regimes:
 \begin{enumerate}
 \item \textbf{Product states:} $\rho_{_N} = \rho_{m_1}\otimes \rho_{m_2}$;
 \item \textbf{Two–mode case:} $m_1 = m_2 = 1$;
 \item \textbf{Flat entanglement spectrum:} $\nu_1=\cdots=\nu_{m_1}\equiv\nu$;
 \item \textbf{Nearly flat spectrum:} $\nu_j=\nu+\delta_j$ with $\sum_j \delta_j=0$ and $|\delta_j|\ll \nu$, ~\emph{[approximate dependence]}; and 
\item \textbf{Weak entanglement regime:} $\nu_j=1+\epsilon_j$ with \\ $0<\epsilon_j\ll1$, ~\emph{[approximate dependence].}
 \end{enumerate}
 \label{cor:funcdep1}
 \end{corollary}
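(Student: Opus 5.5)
The plan is to work entirely with the two spectral expressions already in hand: $\delta_{2\text{-}{\tt loc}}^{\mathcal{A}_1|\mathcal{A}_2}(\rho_{_N})=\sum_{j=1}^{m_1}(\nu_j-1)$ from Eq.~\eqref{eq:gap_pure} and $S_2(\mathcal{A}_1{:}\mathcal{A}_2)=\sum_{j=1}^{m_1}\log\nu_j$ from Eq.~\eqref{eq:renyi2ee}. Both are symmetric functions of the symplectic spectrum $\boldsymbol{\nu}$ of $\rho_{m_1}$. The general obstruction to functional dependence, exhibited in Appendix~\ref{app:th2}, is that $\sum_j\nu_j$ is not determined by $\prod_j\nu_j$. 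So in each listed regime I will show that the admissible spectra are effectively one-parameter, or close to it, so that this obstruction disappears.

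\textbf{Regimes 1 and 2.} For product states, Corollary~\ref{co:corollary2} (faithfulness) gives $\nu_j=1$ for all $j$. Hence both quantities vanish, and the dependence $F(0)=0$ is trivial. For $m_1=m_2=1$ there is a single eigenvalue $\nu=\cosh 2r$, so $S_2=\log\nu$ can be inverted. This gives the exact relation $\delta=e^{S_2}-1$.

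\textbf{Regime 3.} For a flat spectrum $\nu_j\equiv\nu$, we have $S_2=m_1\log\nu$, so $\nu=e^{S_2/m_1}$. Therefore
\[
\delta=m_1\left(e^{S_2/m_1}-1\right),
\]
which is exact and monotone. This should reduce to regime 2 when $m_1=1$, and I will check that.

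\textbf{Regime 4, the nearly flat spectrum.} Expand $\log(\nu+\delta_j)=\log\nu+\delta_j/\nu-\delta_j^2/(2\nu^2)+O(\delta^3)$. Using $\sum_j\delta_j=0$, the gap stays exactly $m_1(\nu-1)$. The entropy becomes $S_2=m_1\log\nu-\sum_j\delta_j^2/(2\nu^2)+\dots$. Substituting,
\[
\delta=m_1\left(e^{S_2/m_1}-1\right)+O\!\left(\sum_j\delta_j^2/\nu\right).
\]
So the regime-3 relation holds up to second order in the relative spread. The main care needed here is stating the error as relative to $\nu$, using $|\delta_j|\ll\nu$, so that the claim is ``approximate'' in a controlled sense.

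\textbf{Regime 5, weak entanglement.} Expand $\log(1+\epsilon_j)=\epsilon_j-\epsilon_j^2/2+\dots$. This gives
\[
S_2=\delta-\tfrac12\sum_j\epsilon_j^2+O(\epsilon^3),
\]
and hence $\delta\approx S_2$ to first order. The correction $\sum_j\epsilon_j^2\le\delta^2$ is quadratic in $\delta$ itself, so $\delta=S_2+O(S_2^2)$ uniformly in how $\delta$ is split among the modes.

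The only genuine subtlety, and the step I expect to be the main obstacle, is making the word ``approximate'' precise in regimes 4 and 5. In both, the error term must be bounded by a quantity depending only on $S_2$ (or on $\nu$), not on the individual $\epsilon_j$ or $\delta_j$. In regime 5 the bound $\sum_j\epsilon_j^2\le(\sum_j\epsilon_j)^2$ achieves this. In regime 4 I will explicitly state dependence on the spread parameter.
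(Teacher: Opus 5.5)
Your proposal is correct and follows essentially the same route as the paper: reduce both quantities to the spectral expressions $\sum_j(\nu_j-1)$ and $\sum_j\log\nu_j$, invert exactly in regimes 1--3, and Taylor-expand the logarithm in regimes 4 and 5, using $\sum_j\delta_j=0$ to keep the gap fixed at $m_1(\nu-1)$. The differences from the paper are minor. You write the gap as a function of $S_2$ rather than the reverse. You also add the bound $\sum_j\epsilon_j^2\le(\sum_j\epsilon_j)^2$, valid since $\epsilon_j>0$, which makes the regime-5 error uniform in the gap itself, whereas the paper leaves it as $\sum_j O(\epsilon_j^2)$.
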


\begin{proof}
For a pure Gaussian state, both quantities depend only on the symplectic
spectrum $\{\nu_j\}$ of the reduced state on $\mathcal{A}_1$:
\[
\delta_{2\text{-}{\tt loc}}^{\mathcal{A}_1|\mathcal{A}_2}(\rho_{_N})
= \sum_{j=1}^{m_1}(\nu_j-1),
\qquad
S(\mathcal{A}_1{:}\mathcal{A}_2)
= \sum_{j=1}^{m_1} \log \nu_j,
\]

\medskip
\noindent\textbf{(1) Product state.}
If $\rho_{_N}=\ket{\psi}_{m_1}\otimes \ket{\psi}_{m_2}$, then
$\nu_j=1$ for all $j$. Hence
\[
\delta_{2\text{-}{\tt loc}}^{\mathcal{A}_1|\mathcal{A}_2}=0,
\qquad
S(\mathcal{A}_1{:}\mathcal{A}_2)=0.
\]
Both quantities vanish identically, so they are trivially functionally related.

\medskip
\noindent\textbf{(2) Two–mode case $m_1=m_2=1$.}
There is a single symplectic eigenvalue $\nu$. Thus
\[
\delta_{2\text{-}{\tt loc}}^{\mathcal{A}_1|\mathcal{A}_2}=\nu-1,
\qquad
S(\mathcal{A}_1{:}\mathcal{A}_2)=\log \nu.
\]
Eliminating $\nu$ gives
\[
S(\mathcal{A}_1{:}\mathcal{A}_2)
= \log\!\bigl(1+\delta_{2\text{-}{\tt loc}}^{\mathcal{A}_1|\mathcal{A}_2}\bigr),
\]
which is an exact functional dependence.

\medskip
\noindent\textbf{(3) Flat spectrum $\nu_j=\nu$.}
Then
\[
\delta_{2\text{-}{\tt loc}}^{\mathcal{A}_1|\mathcal{A}_2}
= m_1(\nu-1),
\qquad
S(\mathcal{A}_1{:}\mathcal{A}_2)=m_1 \log \nu.
\]
Solving for $\nu$ gives
\[
\nu = 1 + \frac{1}{m_1}\,
\delta_{2\text{-}{\tt loc}}^{\mathcal{A}_1|\mathcal{A}_2},
\]
and therefore
\[
S(\mathcal{A}_1{:}\mathcal{A}_2)
= m_1\, \log\!\left(1+\frac{1}{m_1}
\delta_{2\text{-}{\tt loc}}^{\mathcal{A}_1|\mathcal{A}_2}\right),
\]
which is again an exact functional relation.

\medskip
\noindent\textbf{(4) Nearly flat spectrum.}
Let $\nu_j=\nu+\delta_j$ with $\sum_j\delta_j=0$ and $|\delta_j|\ll\nu$. Then
\[
\delta_{2\text{-}{\tt loc}}^{\mathcal{A}_1|\mathcal{A}_2}
= \sum_j(\nu_j-1)=m_1(\nu-1),
\]
so $\nu$ is fixed by $\delta_{2\text{-}{\tt loc}}$ as before. For the Rényi-$2$ entanglement entropy, expand $\log (\nu + \delta_j)$:
\[
\log(\nu+\delta_j)
= \log \nu + \frac1\nu\delta_j - \tfrac{1}{2\nu^2} \delta_j^2 + O(\delta_j^3).
\]
Summing over $j$ and using $\sum_j\delta_j=0$ gives
\[
S(\mathcal{A}_1{:}\mathcal{A}_2)
= m_1 \log \nu - \tfrac{1}{2\nu^2} \sum_j \delta_j^2 + O(\delta_j^3).
\]
Thus
\[
S(\mathcal{A}_1{:}\mathcal{A}_2)
= m_1 \log\!\left(1+\frac{1}{m_1}
\delta_{2\text{-}{\tt loc}}^{\mathcal{A}_1|\mathcal{A}_2}\right)
-\frac{1}{2\nu^2} O\big(\sum_j\delta_j^2\big),
\]
showing an approximate single-variable dependence with corrections quadratic
in the spectral deviations.

\medskip
\noindent\textbf{(5) Weak entanglement regime.}
Let $\nu_j = 1 + \epsilon_j$ with $0<\epsilon_j\ll1$. Then
\[
\delta_{2\text{-}{\tt loc}}^{\mathcal{A}_1|\mathcal{A}_2}
= \sum_{j=1}^{m_1} \epsilon_j.
\]
We expand $\log \nu$ in a Taylor series around $\nu = 1$. Since $\log 1=0$ and
$f'(1)=1$, we have:
\[
\log(1+\epsilon)
=  \epsilon  - O(\epsilon^2).
\]
Therefore the Rényi-$2$ entanglement entropy becomes
\[
S(\mathcal{A}_1{:}\mathcal{A}_2)
=  \sum_{j=1}^{m_1} (\epsilon_j - O(\epsilon_j^2)).
\]
Therefore we have 
\[
S(\mathcal{A}_1{:}\mathcal{A}_2)
= \delta_{2\text{-}{\tt loc}}^{\mathcal{A}_1|\mathcal{A}_2} - \sum_jO(\epsilon_j^2),
\]
showing that in the weak-entanglement regime the two quantities are
approximately equal with corrections quadratic
in the spectral deviations.    
\end{proof}

\section{Connection between $2$-local ergotropy and von-Neumann entropy}
\label{app:ee}
Here, we consider the case of analyzing von Neumann entropy as a measure of entanglement entropy and its connection to the $2$-local ergotropic gap. First, we will show that the $2$-local ergotropy is functionally independent from the von-Neumann entropy of the marginal system and, equivalently, the mutual information for pure states. Next, we aim to show that the $2$-local ergotropic gap bounds the mutual information.

\subsection{\(2\)-local ergotropic gap is functionally independent from mutual information}
\label{sec:mi_eg_not}
\begin{proposition}
    For pure Gaussian states with $m_1+m_2$ modes, the $2$-local ergotropic gap
$\delta^{\mathcal{A}_1|\mathcal{A}_2}_{2\text{-loc}}(\rho_{_N})$ is, in general, not a function of the mutual information between the two partitions.
\label{pro:pro1}
\end{proposition}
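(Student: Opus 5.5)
The plan is to mirror the functional-independence argument used for the R\'enyi-$2$ case in the proof of Theorem~\ref{th:renyi-2upperbound}. I will exhibit two pure Gaussian states that have the same mutual information but different $2$-local ergotropic gaps.

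First I reduce both quantities to the symplectic spectrum $\{\nu_j\}_{j=1}^{m_1}$ of the smaller marginal $\rho_{m_1}$. For a pure global state, $S(\rho_{_N})=0$ and $S(\rho_{\mathcal{A}_1})=S(\rho_{\mathcal{A}_2})$. Hence
\begin{equation*}
I(\mathcal{A}_1{:}\mathcal{A}_2)=2S(\rho_{\mathcal{A}_1})=2\sum_{j=1}^{m_1} f(\nu_j),
\end{equation*}
where
\begin{equation*}
f(\nu)=\tfrac{\nu+1}{2}\log\tfrac{\nu+1}{2}-\tfrac{\nu-1}{2}\log\tfrac{\nu-1}{2}.
\end{equation*}
By Lemma~\ref{le:lemma_1}, the gap is $\delta^{\mathcal{A}_1|\mathcal{A}_2}_{2\text{-loc}}=\sum_j(\nu_j-1)$.

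Next I check that every spectrum with $\nu_j\ge 1$ is realizable. The Botero--Reznik standard form in Eq.~\eqref{eq:any_pure_state} builds a state from two-mode squeezed vacua with $\nu_j=\cosh 2r_j$. Choosing the $r_j\ge0$ freely therefore produces any such spectrum, so it suffices to compare spectra. Since the case $m_1=1$ gives an exact functional relation, as in Corollary~\ref{cor:funcdep1}, I work in the minimal nontrivial case $m_1=2$, i.e.\ $N\ge4$.

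Define $g(x)=f(1+x)$ for $x\ge0$. I will verify by differentiation that $g(0)=0$ and that $g$ is strictly increasing, unbounded and strictly concave. Indeed,
\begin{equation*}
g'(x)=\tfrac12\log\frac{x+2}{x}>0,
\end{equation*}
which is strictly decreasing in $x$ and diverges as $x\to\infty$ like $\log x$.

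Now fix $x>0$. The first state is the symmetric spectrum $\nu_1=\nu_2=1+x$, with entropy $2g(x)$ and gap $2x$. The second state is asymmetric: $\nu_1=1$ and $\nu_2=1+y$, where $y$ is defined by $g(y)=2g(x)$. Such a $y$ exists by the intermediate value theorem, because $g$ is continuous, starts at $0$ and is unbounded. The two states have equal mutual information. Strict concavity together with $g(0)=0$ makes $g$ strictly subadditive, so $g(2x)<2g(x)=g(y)$. Monotonicity of $g$ then forces $y>2x$, so the asymmetric gap $y$ strictly exceeds the symmetric gap $2x$. This contradicts any assumed relation $\delta=F(I)$ and proves the proposition.

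The only step needing care is the concavity and unboundedness of $f$, which is a short derivative computation. Everything else is a direct reuse of the earlier construction.
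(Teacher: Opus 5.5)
Your argument is correct, but it takes a different route from the paper.

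\textbf{The paper's route.} The paper does not construct a counterexample. It computes the Jacobian of the map $(\nu_1,\ldots,\nu_{m_1})\mapsto(\delta,\mathrm{MI})$. Its rows are $(1,\ldots,1)$ and $\bigl(\ln\frac{\nu_i+1}{\nu_i-1}\bigr)_i$, which are proportional only when all $\nu_i$ coincide. So for $m_1\ge2$ and a non-flat spectrum the map has rank two.

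This argument is local and generic. It shows independence near every non-flat spectrum, and it identifies the flat spectra as the only points where the rank drops, matching the special regimes of Corollary~\ref{cor:funcdep1}. However, it relies implicitly on the standard fact that a rank-two Jacobian (via the inverse function theorem) rules out any relation $\delta=F(\mathrm{MI})$. It also leaves implicit that non-flat spectra are realizable.

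\textbf{Your route.} Your argument is global and elementary, mirroring the R\'enyi-$2$ counterexample. Strict concavity of $g$ together with $g(0)=0$ gives strict subadditivity, hence $y>2x$. This yields two explicit states with equal mutual information but different gaps. What it buys:
\begin{itemize}
\item it works at every positive value of the mutual information;
\item it needs no smoothness of $F$;
\item it handles realizability explicitly through the Botero--Reznik standard form;
\item it gives the extra qualitative fact that, at fixed mutual information, a more concentrated spectrum carries a larger gap.
\end{itemize}

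\textbf{A wording slip.} It is $g$, not $g'$, that diverges like $\log x$ as $x\to\infty$. In fact $g'(x)\to0$ there, and $g'$ diverges only as $x\to0^+$. Unboundedness of $g$, which you need to solve $g(y)=2g(x)$ for all $x$, follows for instance from $g'(x)\ge\frac{1}{x+2}$. Integrating gives $g(x)\ge\ln\frac{x+2}{2}$ (natural logarithm).
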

 \begin{proof}
Consider a pure \(N\)-mode Gaussian state $\rho_{_N}$, partitioned in \(\mathcal{A}_1|\mathcal{A}_2\) having \(m_1\) and \(m_2\) number of modes. Therefore, the mutual information takes the form as
\begin{eqnarray}
\mathrm{MI}(\rho_{_N})
&=&
S(\rho_{_{m_1}})+S(\rho_{_{m_2}})-S(\rho_{_N})
\nonumber\\
&=&
2S(\rho_{_{m_1}}),
\end{eqnarray}
 where \(\rho_{_{m_i}}=\Tr_{{\mathcal{A}_i}}\rho_{_N}\) and $S(\cdot)$ denotes the von Neumann entropy.
 In terms of the symplectic eigenvalues $\{\nu_i^{\mathcal{A}_1}\}_{i=1}^{m_1}$ of the reduced CM of subsystem $\mathcal{A}_1$, the mutual information can be written as
\begin{eqnarray}
\mathrm{MI}(\rho_{_N})= 2\sum_{i=1}^{m_1}
\big[\frac{\nu_i^{\mathcal{A}_1}+1}{2}\ln\frac{\nu_i^{\mathcal{A}_1}+1}{2}
- \frac{\nu_i^{\mathcal{A}_1}-1}{2}\ln\frac{\nu_i^{\mathcal{A}_1}-1}{2}\big].
\label{eq:mutual_inf_sym}
\end{eqnarray}
 Both the ergotropic gap $\delta^{\mathcal{A}_1|\mathcal{A}_2}_{2\text{-loc}}(\rho_{_N})$ and the mutual information
 $\mathrm{MI}(\rho_{_N})$ are functions of the symplectic eigenvalues $\nu_i^{\mathcal{A}_1}$.
 The Jacobian matrix associated with the mapping
 $\{\delta^{\mathcal{A}_1|\mathcal{A}_2}_{2\text{-loc}}(\rho_{_N}),\mathrm{MI}(\rho_{_N})\}$ is given by
\begin{eqnarray}
\nonumber J &=&
\begin{pmatrix}
\frac{\partial \delta^{\mathcal{A}_1|\mathcal{A}_2}_{2\text{-loc}}(\rho_{_N})}{\partial \nu_1^{\mathcal{A}_1}} &
\frac{\partial \delta^{\mathcal{A}_1|\mathcal{A}_2}_{2\text{-loc}}(\rho_{_N})}{\partial \nu_2^{\mathcal{A}_1}} &
\cdots &
\frac{\partial \delta^{\mathcal{A}_1|\mathcal{A}_2}_{2\text{-loc}}(\rho_{_N})}{\partial \nu_{m_1}^{\mathcal{A}_1}}
\\
\frac{\partial \mathrm{MI}(\rho_{_N})}{\partial \nu_1^{\mathcal{A}_1}} &
\frac{\partial \mathrm{MI}(\rho_{_N})}{\partial \nu_2^{\mathcal{A}_1}} &
\cdots &
\frac{\partial \mathrm{MI}(\rho_{_N})}{\partial \nu_{m_1}^{\mathcal{A}_1}}
\end{pmatrix},\\
&=&
\begin{pmatrix}
1 & 1 & \cdots & 1 \\
\ln\frac{\nu_1^{\mathcal{A}_1}+1}{\nu_1^{\mathcal{A}_1}-1} &
\ln\frac{\nu_2^{\mathcal{A}_1}+1}{\nu_2^{\mathcal{A}_1}-1} &
\cdots &
\ln\frac{\nu_{m_1}^{\mathcal{A}_1}+1}{\nu_{m_1}^{\mathcal{A}_1}-1}
\end{pmatrix}.
\end{eqnarray}

 From this Jacobian, we identify the two vectors
\begin{eqnarray}
\ket{v_1} &=& (1,1,\ldots,1)^{T}, \nonumber\\
\ket{v_2} &=&
\left(
\ln\frac{\nu_1^{\mathcal{A}_1}+1}{\nu_1^{\mathcal{A}_1}-1},
\ln\frac{\nu_2^{\mathcal{A}_1}+1}{\nu_2^{\mathcal{A}_1}-1},
\ldots,
\ln\frac{\nu_{m_1}^{\mathcal{A}_1}+1}{\nu_{m_1}^{\mathcal{A}_1}-1}
\right)^{T}.
\label{eq:vector}
\end{eqnarray}
These vectors are linearly dependent if and only if
\begin{eqnarray}
\frac{\nu_i^{\mathcal{A}_1}+1}{\nu_i^{\mathcal{A}_1}-1} = c,
~ \forall\, i,
\end{eqnarray}
for some constant $c$, which implies $\nu_i^{\mathcal{A}_1}=\nu_j^{\mathcal{A}_1}$ for all $i,j$.
In general, the symplectic eigenvalues need not be equal for $m_1\ge2$.
Therefore, the \(2\)-local ergotropic gap and the mutual information are, in general, functionally independent for generic bipartite Gaussian states with $m_1\ge2$. However, in the special case of two-mode pure Gaussian states
($m_1=m_2=1$), the $2$-local ergotropic gap
$\delta^{\mathcal{A}_1|\mathcal{A}_2}_{2\text{-loc}}(\rho_{_2})$ becomes a function of the
von Neumann entropy, as discussed in Ref.~\cite{Polo2025}. Hence, the proof.
 \end{proof}

 \subsection{$2$-local ergotropic gap bounds mutual information}

\begin{proposition}[$2$-local ergotropic gap bounds mutual information]
Let $\rho_{_N}$ be a pure bipartite Gaussian state with respect to the partition
$\mathcal{A}_1|\mathcal{A}_2$. Then the $2$-local ergotropic gap satisfies
\begin{equation}
\frac{1}{2}\, \emph{MI}(\rho_{_N}) < \delta_{2\text{-}{\tt loc}}^{\mathcal{A}_1|\mathcal{A}_2}(\rho_{_N}) + \min \{ |\mathcal{A}_1|,|\mathcal{A}_2|\}
\end{equation}
where $\emph{MI}(\rho_{_N})$ is the quantum mutual information, and $|\mathcal{A}_j|$ denotes the number of modes in the partition $\mathcal{A}_j.$
\end{proposition}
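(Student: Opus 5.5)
The plan is to reduce the statement to a one-variable inequality per symplectic eigenvalue of the smaller marginal. Assume without loss of generality $m_1=|\mathcal{A}_1|\le|\mathcal{A}_2|$, so $\min\{|\mathcal{A}_1|,|\mathcal{A}_2|\}=m_1$. Since $\rho_{_N}$ is pure, $\mathrm{MI}(\rho_{_N})=2S(\rho_{_{m_1}})$ as in the proof of Proposition~\ref{pro:pro1}, so $\frac12\mathrm{MI}(\rho_{_N})=\sum_{i=1}^{m_1} g(\nu_i)$ with
\begin{equation}
g(\nu)=\frac{\nu+1}{2}\ln\frac{\nu+1}{2}-\frac{\nu-1}{2}\ln\frac{\nu-1}{2},
\end{equation}
by Eq.~\eqref{eq:mutual_inf_sym}. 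By Lemma~\ref{le:lemma_1}, the right-hand side equals $\sum_{i=1}^{m_1}(\nu_i-1)+m_1=\sum_{i=1}^{m_1}\nu_i$. It therefore suffices to prove $g(\nu)<\nu$ for every $\nu\ge1$ and then sum over $i$. A strict inequality in each term gives a strict inequality for the sum, since $m_1\ge1$.

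To prove the scalar bound, write $x=(\nu-1)/2\ge0$, so that $g=(x+1)\ln(x+1)-x\ln x$, which is the entropy of a thermal mode with mean occupation $x$. The bound to establish is $(x+1)\ln(x+1)-x\ln x<2x+1$. Rewrite the left side as $\ln(x+1)+x\ln(1+1/x)$. The inequality $\ln(1+t)\le t$ gives $x\ln(1+1/x)\le 1$, and it also gives $\ln(1+x)\le x$. Hence $g\le x+1\le 2x+1$.

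Strictness needs a short case check. If $x>0$, then $\ln(1+x)<x$ strictly. If $x=0$, interpret $0\ln0=0$; then $g=0<1$. So $g(\nu)<\nu$ for all $\nu\ge1$. This also yields the sharper bound $\frac12\mathrm{MI}\le\frac12\delta_{2\text{-}{\tt loc}}+m_1$, which can be mentioned as a remark.

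The main obstacle is bookkeeping rather than analysis. One must handle the boundary $\nu=1$, where $\frac{\nu-1}{2}\ln\frac{\nu-1}{2}$ has to be taken as its limit $0$. One must also keep the natural-log convention consistent with Eq.~\eqref{eq:mutual_inf_sym}. If a base-2 logarithm were used instead, the scalar bound would need to be rechecked: $g$ gets divided by $\ln2$, and $x\log_2(1+1/x)\le 1/\ln 2$, so the constant would change. I would therefore state the natural-log convention explicitly.
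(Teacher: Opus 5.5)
Your proposal is correct and follows essentially the same route as the paper: write $\frac{1}{2}\mathrm{MI}(\rho_{_N})=S(\rho_{m_1})=\sum_{i} f(\nu_i)$, use Lemma~\ref{le:lemma_1} to turn the right-hand side into $\sum_{i}\nu_i$, then sum a per-eigenvalue bound $f(\nu)<\nu$. The one difference is the scalar step. The paper asserts, without derivation, the chain $f(\nu)<\log\nu+1<\nu$, whose second inequality is actually an equality at $\nu=1$; strictness survives only because the first inequality is strict. You instead prove $f(\nu)\le(\nu+1)/2\le\nu$ from $\ln(1+t)\le t$ and treat the boundary $\nu=1$ explicitly.
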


\begin{proof}
Since $\rho_{_N}$ is pure, the mutual information reduces to
\[
\text{MI}(\rho_{_N})
= 2\,S(\rho_{m_1}) = -2 \text{Tr}(\rho_{m_1}\log \rho_{m_1}).
\]
Here we assume the number of modes in partitions $\mathcal{A}_1$ and $\mathcal{A}_2$ are given by $m_1$ and $m_2$ respectively $(m_1+m_2 = N)$. Without any loss of generality, consider $m_1\leq m_2$.
For an $m_1$-mode Gaussian state with symplectic spectrum
$\{\nu_j\}_{j=1}^{m_1}$, the von Neumann entropy is given by \cite{GRMP}
\begin{eqnarray}
    S(\rho_{m_1})
= \sum_{j=1}^{m_1} f(\nu_j),
\end{eqnarray}
where $f(\nu)
= \frac{\nu+1}{2}\log\!\frac{\nu+1}{2}
- \frac{\nu-1}{2}\log\!\frac{\nu-1}{2}.$ Hence,
\[
\text{MI}(\rho_{_N})
= 2 \sum_{j=1}^{m_1} f(\nu_j).
\]

On the other hand, for a pure Gaussian state the $2$-local ergotropic gap is
\[
\delta_{2\text{-}{\tt loc}}^{\mathcal{A}_1|\mathcal{A}_2}(\rho_{_N})
= \sum_{j=1}^{m_1} (\nu_j - 1).
\]
We know that for all $\nu \ge 1$, we have the following bound:
\[
f(\nu) < \log \nu + 1 < \nu.
\]
Applying this bound term by term gives
\[
\text{MI}(\rho_{_N})
= 2 \sum_{j=1}^{m_1} f(\nu_j)
\;<\;
2 \sum_{j=1}^{m_1} (\nu_j - 1) +2m_1
,
\]
which is equivalent to
\begin{equation}
S(\rho_{m_1})=\frac{1}{2}\, \text{MI}(\rho_{_N}) < \delta_{2\text{-}{\tt loc}}^{\mathcal{A}_1|\mathcal{A}_2}(\rho_{_N}) + |\mathcal{A}_1|.
\end{equation}
If we have assumed $m_2\leq m_1$, we would have obtained
\begin{equation}
S(\rho_{m_1})=\frac{1}{2}\, \text{MI}(\rho_{_N}) < \delta_{2\text{-}{\tt loc}}^{\mathcal{A}_1|\mathcal{A}_2}(\rho_{_N}) + |\mathcal{A}_2|.
\end{equation}
This proves the claim.
\end{proof}

\section{Proof of Theorem \ref{th:th4}}
\label{app:min_ergo_k_party}
To establish that $\Delta^{k}(\rho_{_N})$ is a valid entanglement measure, we verify the
required properties.

\textbf{(i) Positivity:}
For any bipartition, the \(2\)-local ergotropic gap is nonnegative, i.e.,
\( \delta ^{\mathcal{A}_j|\bar{\mathcal{A}_j}}_{_{2-\tt loc}}(\rho_{_N})\ge 0 \) (see Eq.~\ref{eq:positivity_bi}), where $\bar {\mathcal{A}_j}$ denotes the subsystem containing all modes except those in $\mathcal{A}_j$ and
\(
\delta ^{\mathcal{A}_j|\bar{\mathcal{A}_j}}_{_{2-\tt loc}}(\rho_{_N})=\sum_{i=1}^{m_j}\big(\nu_i^{\mathcal{A}_j} - 1\big),
\)
where $\{\nu_i^{\mathcal{A}_j}\}_{i=1}^{m_j}$ denote the symplectic eigenvalues of
the reduced covariance matrix of partition $\mathcal{A}_j$. Since \(k\)-local ergotropic gap, $\delta^{\mathcal{A}_1|\mathcal{A}_2|\cdots|\mathcal{A}_k}_{k-\tt loc}
(\rho_{_N})$ can be computed in terms of sum of \(2\)-local ergotropic gap as
\begin{eqnarray}
\delta^{\mathcal{A}_1|\mathcal{A}_2|\cdots|\mathcal{A}_k}_{k-\tt loc}
(\rho_{_N})&=&\frac{1}{2}\sum_{j=1}^{k}\delta ^{\mathcal{A}_j|\bar{\mathcal{A}_j}}_{_{2-\tt loc}}(\rho_{_N}),\nonumber\\&=&\frac12\sum_{j=1}^{k}\sum_{i=1}^{m_j}(\nu_{i}^{\mathcal{A}_j}-1).
    \label{eq:k_loc_2_loc}
\end{eqnarray}
where each individual term, \(\delta ^{\mathcal{A}_j|\bar{\mathcal{A}_j}}_{_{2-\tt loc}}(\rho_{_N})\geq0\) indicates that \(\delta^{\mathcal{A}_1|\mathcal{A}_2|\cdots|\mathcal{A}_k}_{k-\tt loc}
(\rho_{_N})\geq 0\) and therefore \(\),
it follows that the minimization over all possible different \(k\)-partitions is also positive, i.e., 
\begin{eqnarray}
\Delta^{k}(\rho_{_N})&=&\min_{\mathcal{A}_1|\mathcal{A}_2|\ldots|\mathcal{A}_k}\delta ^{\mathcal{A}_1|\mathcal{A}_2|\ldots|\mathcal{A}_k}_{k-\tt loc}(\rho_{_N})\geq 0.
\end{eqnarray}
Thus, positivity holds.

{\color{black} \textbf{(ii) Vanishes for $k$-separable (product) states:} If \(\rho_{_N}\) is a \(k\)-separable pure state with respect to a particular \(\mathcal{A}_1|\mathcal{A}_2|\ldots|\mathcal{A}_k\) partition, then the reduced states of all the subsystems $\{\mathcal{A}_i\}_{i=1}^k$ are pure state which implies \(\nu_i^{\mathcal{A}_j}=1 ~\forall~ i,j\). Therefore, from Eq.~(\ref{eq:k_loc_2_loc}) we get $\delta^{\mathcal{A}_1|\mathcal{A}_2|\cdots|\mathcal{A}_k}_{k-\tt loc}=0$ and consequently it follows that \(\Delta^k(\rho_{_N})=0\). }

\textbf{(iii) Monotonicity:}
Since the \(2\)-local ergotropic gap,  is monotonic under GLOCC. So, \(k\)-local ergotropic gap will also be  GLOCC monotone as it is the sum of all \(k\)-possible \(2\)-local ergotropic gaps as shown in Eq.~(\ref{eq:k_loc_2_loc}), therefore,
\(\Delta^{(k)}(\rho_{_N})\) is also monotone under GLOCC as it minimizes the \(k\)-local ergotropic gap over the all possible \(k\)-partitions.

\subsection{Proof of Corollary \ref{co:corollary4}}
\label{app:prove_corollary_4}

\textbf{(i) Faithfulness:} Let us consider \(\rho_{_N}\) be a pure \(N\)-mode Gaussian state such that
\(\Delta^{k}(\rho_{_N})=0\) and then there exists a \(k\)-partition
\(\mathcal{A}_1^{*}| \mathcal{A}_2^{*}| \cdots| \mathcal{A}_k^{*}\) where \(i\)-th partition contain \(m_i^*\) modes, for which \(\delta^{\mathcal{A}_1^*|\mathcal{A}_2^*|\cdots|\mathcal{A}_k^*}_{k-\tt loc}
(\rho_{_N})=0\). Therefore, from Eq.~(\ref{eq:k_loc_2_loc}) we get,
\begin{eqnarray}
\frac{1}{2}\sum_{j=1}^{k}
\delta ^{\mathcal{A}_j^*|\bar{\mathcal{A}_j^*}}_{_{2-\tt loc}}(\rho_{_N})&=&0 .
\label{eq:k_loc_zero}
\end{eqnarray}
Therefore, the only solution possible of Eq.~(\ref{eq:k_loc_zero}) is \(\delta ^{\mathcal{A}_j^*|\bar{\mathcal{A}_j^*}}_{_{2-\tt loc}}(\rho_{_N})=0,~\forall j\) as each \(2\)-local ergotropic gap is non-negative. This also enforces that all the simplectic eigenvalues of Eq.~(\ref{eq:k_loc_2_loc}) is equal to unity, i.e., \(\nu_{i}^{\mathcal{A}_j}=1,~\forall i,j\) which implies that each reduced state
\(\rho_{_{m_j^{*}}}\) is pure and the global pure state can be written as, 
\(\rho_{_N} = \bigotimes_{j=1}^{k}\rho_{_{m_j^{*}}},\) and hence the state is \(k\)-separable.

Conversely, consider the state \(\rho_{_N}\) is
\(k\)-separable with respect to a partition
\(\mathcal{A}_1|\mathcal{A}_2| \cdots|\mathcal{A}_k \) and for this scenario, the corresponding \(k\)-local ergotropic gap vanishes,
\(\delta^{\mathcal{A}_1|\mathcal{A}_2|\cdots|\mathcal{A}_k}_{k-\tt loc}
(\rho_{_N})=0 .\) It then follows directly from the definition, \(\Delta^{k}(\rho_{_N})=0\).
This completes the proof.

\textbf{(ii) Additivity:}
Consider two  pure states
\(\rho_{_N}\) and
\(\rho_{_{N'}}\), splitting into \(k\)- partition, containing  \(m_i(m_i')\) modes in partitions \(\mathcal{A}_i(\mathcal{A}_i')\) with \(\sum_{i=1}^{k} m_i =N,~ \text{and} ~\sum_{i=1}^{k} m_i' = N'\). Now, the composit system of total modes, \(N+N'\) split into \(k\)- partition and \(i\)-th partition labeled by $\tilde{\mathcal{A}}_i =\mathcal{A}_i\mathcal{A}_i'$ which contain \(\tilde{m}_i=(m_i+m_i')\) modes. Since $\tilde{\mathcal{A}}_i$ is composed of two separable parts, $\mathcal{A}_i$ and $\mathcal{A}_i'$, the optimal local unitary acting on $\tilde{\mathcal{A}}_i$ necessarily factorizes into independent unitaries on $\mathcal{A}_i$ and $\mathcal{A}_i'$. Consequently, using Eq.~(\ref{eq:k_loc_2_loc}), the \(k\)-local ergotropic gap of the composite state takes the form as 
\begin{eqnarray}
\delta^{\tilde{\mathcal{A}}_1|\tilde{\mathcal{A}}_2|\cdots|\tilde{\mathcal{A}}_k}_{k-\tt loc}
(\rho_{_N}\otimes \rho_{_{N'}})
&=&
\frac{1}{2}\sum_{j=1}^{k}\sum_{i=1}^{\tilde m_j}(\nu_i^{\tilde{\mathcal{A}}_j}-1) \nonumber\\
&=&
\frac{1}{2}\sum_{j=1}^{k}\sum_{i=1}^{m_j}(\nu_i^{\mathcal{A}_j}-1)
+
\frac{1}{2}\sum_{j=1}^{k}\sum_{i=1}^{m_j'}(\nu_i^{\mathcal{A}_j'}-1) \nonumber\\
&=&
\delta^{\mathcal{A}_1|\mathcal{A}_2|\cdots|\mathcal{A}_k}_{k-\tt loc}
(\rho_{_N})
+
\delta^{\mathcal{A}_1'|\mathcal{A}_2'|\cdots|\mathcal{A}_k'}_{k-\tt loc}
(\rho_{_{N'}}),\nonumber\\
\label{eq:add_for_given_k_party}
\end{eqnarray}
where \(\{\nu_i^{\mathcal{A}_j}\}_{i=1}^{m_j}\) and \(\{\nu_i^{\mathcal{A}_j'}\}_{i=1}^{m_j'}\) are the symplectic eigenvalues of the covariance matrices of the subsystems
\(\mathcal{A}_j\) and \(\mathcal{A}_j'\), respectively.

To prove the additivity of \(\Delta^k\), let us consider two identical copies of an \(N\)-mode pure Gaussian state
\(\rho_{_N}\) and we split this two state into \(k\)-patition label by \(\mathcal{A}_i\) and \(\mathcal{A}'_i\) . Now their joint state,
\(\rho_{_{2N}}=\rho_{_N}
\otimes{\rho}_{_N}\),
splitting into \(k\)-partition and each partition label by \(\tilde{\mathcal{A}}_i=\mathcal{A}_i\mathcal{A}_i'\).
The multipartite entanglement of this state is quantified by  \(k\)-ergotropic score where minimization required over all \(k\)-partition and let us consider that this minimum is obtained for an
optimal partition,
\(\tilde{\mathcal{A}}_1^{*}|\tilde{\mathcal{A}}_2^{*}|\cdots|\tilde{\mathcal{A}}_k^{*}\),
where each subset  can be written as \(\tilde{\mathcal{A}}_i^{*}=\mathcal{A}_i^{*}\mathcal{A}_i'^{*}\).
Then using Eq.~(\ref{eq:add_for_given_k_party}), we get
\begin{eqnarray}
\nonumber\Delta^k\big(\rho_{_{2N}})&=&
\delta^{\tilde{\mathcal{A}}_1^*|\tilde{\mathcal{A}}_2^*|\cdots|\tilde{\mathcal{A}}_k^*}_{k-\tt loc}
(\rho_{_{2N}})
\nonumber\\
&=&
\delta^{\mathcal{A}_1^*|\mathcal{A}_2^*|\cdots|\mathcal{A}_k^*}_{k-\tt loc}
(\rho_{_N})
+
\delta^{\mathcal{A}_1'^*|\mathcal{A}_2'^*|\cdots|\mathcal{A}_k'^*}_{k-\tt loc}
(\rho_{_{N}}),\nonumber\\
 &\ge&
\Delta^k\big({\rho}_{_N}\big)
+
\Delta^k\big({\rho}_{_N}\big).
\label{eq:add_k_sepa_1}
\end{eqnarray}
On the other hand, the sum of the multipartite entanglement of the two identical states is given by
\begin{eqnarray}
\nonumber &&\Delta^k(\rho_{_N})+
\Delta^k(\rho_{_N})\\
&=&
\delta^{\mathcal{A}_1^{''}|\mathcal{A}_2^{''}|\cdots|\mathcal{A}_k^{''}}_{k-\text{loc}}\big(\rho_{_N})
+
\delta^{\mathcal{A}_1^{''}|\mathcal{A}_2^{''}|\cdots|\mathcal{A}_k^{''}}_{k-\text{loc}}\big(\rho_{_N})\nonumber \\
&=&
\delta^{\tilde{\mathcal{A}_1^{''}}|\tilde{\mathcal{A}_2^{''}}|\cdots|\tilde{\mathcal{A}_3^{''}}}_{k-\text{loc}}\big(\rho_{_{2N}}),
\end{eqnarray}
where the optimal \(k\)-partition, \(\mathcal{A}_1^{''}|\mathcal{A}_2^{''}|\cdots|\mathcal{A}_k^{''}\) contributes to \(\Delta^{k}(\rho_{_N})\) and \(\tilde{\mathcal{A}_i^{''}}=\mathcal{A}_i^{''}\mathcal{A}_i^{''}\).
However, the important point is that the optimal \(k\)-partition of \(\tilde{\mathcal{A}_1^{''}}|\tilde{\mathcal{A}_2^{''}}|\ldots|\tilde{\mathcal{A}_k^{''}}\) is not necessarily same \(k\)-partition \(\tilde{\mathcal{A}_1^*}|\tilde{\mathcal{A}_2^*}|\ldots|\tilde{\mathcal{A}_k^*}\) from which \(\Delta^k(\rho_{_{2N}})\) is obtained. Moreover, when we are computing \(\Delta^k(\rho_{_N})+\Delta^k(\rho_{_N})\), the optimal \(k\)-partition is different from \(\Delta^k(\rho_{_{2N}})\). Thus we can say, 
\begin{eqnarray}
\delta^{\tilde{\mathcal{A}_1^{''}}|\tilde{\mathcal{A}_2^{''}}|\cdots|\tilde{\mathcal{A}_3^{''}}}_{k-\text{loc}}\big(\rho_{_{2N}})
&\ge&\Delta^k(\rho_{_{2N}}),\nonumber\\
\text{i.e.,}~~\Delta^k(\rho_{_N})+\Delta^k(\rho_{_N})&\ge&\Delta^k(\rho_{_{2N}}).
\label{eq:add_k_sepa_2}
\end{eqnarray}
 Therefore, Eqs.~(\ref{eq:add_k_sepa_1}) and~(\ref{eq:add_k_sepa_2}) are mutually contradictory. Thus the only possible solution is 
 \begin{eqnarray}
\Delta^k(\rho_{_{2N}})&=&\Delta^k(\rho_{_{N}})+\Delta^k(\rho_{_{N}}).
     \label{eq:k_add}
 \end{eqnarray}
 Hence, this completes the proof.

\section{Relation to multimode geometric entanglement measures }
\subsection{Proof of Proposition \ref{pro:relation_ggm_gap}}
\label{App:relation_ggm_gap}
We now focus on the three-mode system ($N=3$), there are three possible bipartitions, denoted by $\mathcal{A}_i|\bar{\mathcal{A}}_i$ with $i=1,2,3$. The corresponding ergotropic gap for each bipartition is given by \(\delta^{\mathcal{A}_i|\bar{\mathcal{A}_i}}_{2-\text{loc}} =(\nu^{\mathcal{A}_i}-1) \). Thus, the $2$-ergotropic score takes the form
\(\Delta^2(\rho_{3}) = \min[\{(\nu^{\mathcal{A}_i}-1)\}_{_{i=1}}^3]\). Similarly, \(G(\rho_{_3}) =1-\max[\big\{\frac{2}{1+\nu^{\mathcal{A}_i}}\big\}_{_{i=1}}^3] \). Since $\nu^{\mathcal{A}_i}$ and $\frac{2}{1+\nu^{\mathcal{A}_i}}$ vary oppositely, the maximum of the latter occurs for the bipartition where
$\nu^{\mathcal{A}_i}$ is minimum. Consequently, both
$\Delta^{2}(\rho_{3})$ and $G(\rho_{3})$ are determined by the same optimal bipartition. Let assume minimum comes from \(j\)-th bipartition, \(\mathcal{A}_j|\bar{\mathcal{A}}_j\). So 
\begin{eqnarray}
    \Delta^2(\rho_{3})&=&\nu^{\mathcal{A}_j}-1 \label{eq:final_3_gap}\\
    G(\rho_{_3}) &=& 1-\frac{2}{1+\nu^{\mathcal{A}_j}}
    \label{eq:final_3_GGM}
\end{eqnarray}
Therefore from  Eq.~(\ref{eq:final_3_gap}) and Eq.~(\ref{eq:final_3_GGM}) we get 
\begin{eqnarray}
     \Delta^2(\rho_{_3})&=& \frac{2~G(\rho_{_3})}{1 - G(\rho_{_3})}.
   \label{app:conn_ggm}
\end{eqnarray}
Here, \(\Delta^{2}(\rho_{_3})\) is in one-to-one correspondence with \(G(\rho_{_3})\), since in both cases, the minimum symplectic eigenvalue is attained for the same bipartition among all possible bipartitions. Moreover, both quantities are monotonic functions of these symplectic eigenvalues as they increase with increasing symplectic eigenvalues and decrease with decreasing symplectic eigenvalues. Hence, the two measures are monotonic and equivalent, which completes the proof.

\subsection{Proof of Proposition. \ref{pro:distance} }
\label{App:distance_base_prove}
For an arbitrary $N$-mode Gaussian quantum state $\rho_{_N}$, the Gaussian total multimode entanglement $\mathcal{E}_N^{\mathcal{G}}$ can be formally expressed as the minimum distance of $\rho_{_N}$ from the set of fully separable Gaussian states.
\begin{eqnarray}
  \mathcal{E}_N^{\mathcal{G}}(\rho_{_N})&=& \underset{{\sigma \in \mathcal{G}_{N-\text{sep}}}}{\min} D(\rho_{_N},\sigma),
  \label{eq:tme-gen}
 \end{eqnarray}
 where $\mathcal{G}_{N-\text{sep}}$ is the set fully separable $N-$mode Gaussian states, and $D(.,.)$ is some distance functional. We can take express distance functional $D$ to be the fidelity distance,  $D(\eta_1,\eta_2) = 1 - \mathcal{F}(\eta_1,\eta_2)$ for arbtrary quantum states $\eta_1,\eta_2$ where $\mathcal{F}(\eta_1,\eta_2) = \big( \text{Tr}\sqrt{\sqrt{\eta_1}\eta_2\sqrt{\eta_1}}  ~\big)^2$.
 In general, it is a hard problem. 
 However, when $\rho_{_N}:= \ketbra{\psi_N}{\psi_N}$ is pure, $\mathcal{F}(\rho_{_N},\sigma) = \bra{\psi_{N}}\sigma\ket{\psi_N}$ becomes a linear functional of $\sigma$. This right away implies that $\mathcal{F}(\rho_{_N},\sigma)$ is optimized for extreme points of the set $\mathcal{G}_{N-\text{sep}}$, i.e., pure product Gaussian states of $N$-modes.
 Therefore, for pure state $\rho_{_N}$, Eq. \eqref{eq:tme-gen} simplifies to
 \begin{eqnarray}
     \mathcal{E}_N^{\mathcal{G}}(\rho_{_N})&=& 1 -  \underset{{\sigma \in \mathcal{G}_{N-\text{prod}}}}{\max} \mathcal{F}(\rho_{_N},\sigma).
 \end{eqnarray}
Moreover, since we are restricting to states with vanishing displacements, we have to search through an even smaller subset $\mathcal{G}_{N-\text{prod}}^{\bm{d}=0} \subset \mathcal{G}_{N-\text{prod}}$ of product Gaussian states with vanishing displacements $(\bm{d}=0)$. 
Now, we know that the most general single-mode pure Gaussian state is a squeezed coherent state \cite{GRMP}. Since, for us, the relevant states have vanishing displacements, we can further restrict the possible optimizers.
\begin{eqnarray}
    \sigma = \otimes_{k=1}^N \sigma_k, ~~~\text{with} ~~\sigma_k = \ket{r_k,\theta_k}\bra{r_k,\theta_k},
\end{eqnarray}
where $\ket{r_k,\theta_k}$ is a single-mode squeezed vacuum state with squeezing parameter $r_ke^{i\theta_k}$ with $r_k \in \mathbb{R}\geq0, ~\theta \in [0,2\pi)$. The covariance matrix of $\sigma_k$ is given by
\begin{eqnarray}
    W(r_k,\theta_k) = \begin{bmatrix}
            \cosh (2r_k)-\sinh (2r_k)\cos (\theta_k )&-\sinh (2r_k)\sin (\theta_k )\\ -\sinh (2r_k)\sin (\theta_k )&\cosh (2r_k)+\sinh (2r_k)\cos (\theta_k )
        \end{bmatrix}.\nonumber\\
\end{eqnarray}
The covariance matrix of $\sigma = \otimes_{k=1}^N \sigma_k$ is then computed as $W(\{\vec{r},\vec{\theta}\}) = \bigoplus_{k=1}^N W(r_k,\theta_k).$
On the other hand, we know $\rho_{_N}$ can be completely characterized by its covariance matrix $\Sigma_\rho$. This allows us to express the overlap between $\mathcal{F}(\rho_{_N},\sigma)$ in terms of their covariance matrices \cite{Spedalieri2014}. This enables us to simplify $\mathcal{E}_N^{\mathcal{G}}(\rho_{_N})$ as follows.
\begin{eqnarray}
  \mathcal{E}_N^{\mathcal{G}}(\rho_{_N}) =1 - \max_{\{\vec{r},\vec{\theta}\}}\frac{1}{\sqrt{\det(\Sigma_\rho+  W(\{\vec{r},\vec{\theta}\}))}}.
  \label{eq:purity}
\end{eqnarray}
On the other hand, note that the action of any Gaussian quantum channel on a Gaussian state characterized by displacement vector $\bm{x}$ and covariance matrix $V$ can be described by \cite{EisertWolf2007}
\begin{eqnarray}
    \bm{x} \to X\bm{x}+\bm{h}, ~~ V \to XVX^T+Y,
\end{eqnarray}
where $\bm{h}$ is a fixed real displacement vector, and $X,Y$ are real $2N\times2N$ matrices satisfying
\begin{eqnarray}
    Y + i\Omega - i X \Omega X^{T} \ge 0, ~~\text{where} ~~\Omega = \bigoplus_{k=1}^{N}
\begin{pmatrix}
0 & 1 \\
-1 & 0
\end{pmatrix},
\label{eq:condition}
\end{eqnarray}
is the symplectic form. This condition guarantees complete positivity. In our case, we can identify $\Sigma_\rho \to \Sigma_\rho+W(\{\vec{r},\vec{\theta}\})$ as a Gaussian channel with $\bm{h}=0, ~X = \mathbb{I}, ~Y = W(\{\vec{r},\vec{\theta}\})$. Therefore, $ \Sigma_\rho+W(\{\vec{r},\vec{\theta}\})$ is the covariance matrix of another Gaussian state $\varrho_{_G}(\{\vec{r},\vec{\theta}\})$ parameterized by the squeezing parameters $\{\vec{r},\vec{\theta}\}$.  
Therefore, in terms of symplectic invariants, we have
\begin{eqnarray}
  \mathcal{E}_N^{\mathcal{G}}(\rho_{_N}) =1- \max_{\{\vec{r},\vec{\theta}\}} \frac{1}{\sqrt{\prod_{k=1}^N     \big[\nu_k(\{\vec{r},\vec{\theta}\})\big]^2}} = 1- \max_{\{\vec{r},\vec{\theta}\}} \Big[\Pi_{k=1}^N \nu_k^{-1}(\{\vec{r},\vec{\theta}\}) \Big],
\end{eqnarray}
where $\nu_k(\{\vec{r},\vec{\theta}\})$ are the symplectic eigenvalues of a Gaussian state $\varrho_{_G}(\{\vec{r},\vec{\theta}\})$ with covariance matrix $\Sigma_{\rho}+ W(\{\vec{r},\vec{\theta}\})$. This completes the proof.

\textbf{Remark.} With the identification that $\det(\Sigma_\rho+  W(\{\vec{r},\vec{\theta}\}))^{-1/2}$ is the purity of a Gaussian state $\varrho_{_G}(\{\vec{r},\vec{\theta}\})$ with covariance matrix $\Sigma_\rho+  W(\{\vec{r},\vec{\theta}\}))$, we can rewrite Eq. \eqref{eq:purity} as
\begin{eqnarray}
  \mathcal{E}_N^{\mathcal{G}}(\rho_{_N}) =1 - \max_{\{\vec{r},\vec{\theta}\}} \text{Tr} \varrho^2_{_G}(\{\vec{r},\vec{\theta}\}).
\end{eqnarray}
Therefore, the optimization can be physically thought of as maximizing the purity of a Gaussian state $\varrho_{_G}(\{\vec{r},\vec{\theta}\})$ that is characterized by the squeezing parameters corresponding to $2N$ real parameters.

\section{Generation of random pure Gaussian states}
\label{app:generation_random_state}
In this section, we briefly outline the methodology \cite{Saptarshi25} to generate random pure Gaussian states with bounded energy per mode.
Consider an arbitrary $N$-mode pure Gaussian state whose covariance matrix can be written as
\begin{eqnarray}
\Sigma = O\,\Gamma\,O^{T},
\label{eq:pure_gaussian_CM}
\end{eqnarray}
where $O$ is an orthogonal symplectic matrix, i.e., $O\in K(n):=\mathrm{Sp}(2n,\mathbb{R})\cap O(2n)$, and the group $K(n)$ is isomorphic to the complex unitary group $U(n)$, with $O(2n)$ denoting the real orthogonal group. Interestingly, the Haar measure on ${\tt U}(n)$ in turn induce a Haar measure on ${\tt K}(n)$ where the isomorphism ${\tt U}(n) \to {\tt K}(n)$ is given by
  \begin{eqnarray}
      U \in {\tt U}(n) \to \begin{bmatrix}
          \text{Re}(U) & \text{Im}(U) \\
          -\text{Im}(U) & \text{Re}(U)
      \end{bmatrix} = O(U) \in {\tt K}(n).
      \label{eq:u2o}
  \end{eqnarray}
The orthogonal symplectic matrices preserve the average energy. This follows by noting that Tr $(O \Sigma O^T) =$ Tr $\Sigma$ for all $O \in {\tt K}(n)$. So, the generation of any pure Gaussian state begins with generating an $O$, which we do by Haar uniformly generating $U$ and use Eq. \eqref{eq:u2o} to get $O$. 
This constitutes \textbf{step} $\bm{1}$.
The next step that remains is the generation of $\Gamma$.
The matrix $\Gamma$ corresponds to a tensor product of $N$ single-mode squeezed states and takes the form \(
\Gamma = \mathrm{diag}(x_{_1},x_{_2},\ldots,x_{_N})\oplus
\mathrm{diag}(\frac{1}{x_{1}},\frac{1}{x_{2}},\ldots,\frac{1}{x_{_N}}),
\) with \(x_i\ge 1\).
The energy of each mode is determined by the corresponding squeezing parameter $x_i$. 
Accordingly, we denote the total energy of the system by $E$, given by 
\begin{eqnarray}
    \sum_i^N (x_i+\frac{1}{x_i})=E.
\end{eqnarray}
Now we distribute the energy among the modes as $\{E_i\}_{i=1}^{N}$ such that
\(
\sum_{i=1}^{N} E_i = E, ~\text{with}~ E_i \ge 2.
\)
For a given mode energy $E_i$, the corresponding squeezing parameter $x_i$ is obtained from
\begin{eqnarray}
x_i = \frac{E_i + s_i\sqrt{E_i^{\,2}-4}}{2}, \qquad i=1,\dots,N ,
\label{eq:quad_value_for_each_mode}
\end{eqnarray}
where \(s_i \in \{+1,-1\}\). Whenever $x_i<1$, we replace it by its reciprocal $x_i\rightarrow \frac{1}{x_i}$ so that $x_i\ge1$. This constitutes \textbf{step} $\bm{2}$. Therefore, by combining steps $1$ and $2$ we can generate $O$ and $\Gamma$, thereby generating a random pure Gaussian state by following Eq. \eqref{eq:pure_gaussian_CM}.

%
%
%
\end{document}